\numberwithin{equation}{section}
\numberwithin{theorem}{section}
\numberwithin{corollary}{section}
\numberwithin{definition}{section}
\begin{document}

\title{\LARGE Network-Assisted Estimation for Large-dimensional Factor Model \\ with Guaranteed Convergence Rate Improvement}

	\author{Long Yu\thanks{ School of Management, Fudan University, Shanghai, China;  Email:{\tt fduyulong@163.com}.},~~Yong He\thanks{ School of Statistics, Shandong University of Finance and Economics, Jinan, China; Email:{\tt heyong@sdufe.edu.cn}.},~~ Xinsheng Zhang\thanks{ School of Management, Fudan University, Shanghai, China; Email:{\tt xszhang@fudan.edu.cn}.}, ~~ Ji Zhu \thanks{Department of Statistics, University of Michigan, Ann Arbor,
Michigan, U.S.A.; Email:{\tt jizhu@umich.edu}}}	
	\date{}	
	\maketitle
Network structure is growing popular for capturing the intrinsic relationship between  large-scale variables. In the paper we propose to improve the estimation accuracy for large-dimensional  factor model when a network structure between individuals is observed. To fully excavate the prior network information, we construct two different penalties to regularize the factor loadings and shrink the idiosyncratic errors.  Closed-form solutions are provided for the penalized optimization problems. Theoretical results  demonstrate that the modified estimators achieve faster convergence rates and lower asymptotic mean squared errors when the underlying network structure among individuals is correct. An interesting finding is that even if the priori network  is totally misleading, the proposed estimators perform no worse than conventional state-of-art methods.  Furthermore, to facilitate the practical application, we propose a data-driven approach to select the tuning parameters, which is  computationally efficient. We also provide an empirical criterion to determine the number of common factors. Simulation studies and application to the S\&P100 weekly return dataset convincingly illustrate the superiority and adaptivity of the new approach.

\vspace{2em}

\textbf{Keyword:} Data-driven; Factor model;  Large-dimensional; Network structure; Penalty.

\section{Introduction}
Factor models are widely used to extract representative features and  explain the generative process of massive variables. They have been successfully applied on financial engineering, economic analysis and biological technology, e.g., to study the expected returns (\cite{ross1976arbitrage,ross1977capital}, \cite{fama1992cross,fama1993common}) and risks  of portfolios (\cite{fan2015risks}), to characterize macroeconomic observations (\cite{stock2002forecasting}) and to analyse gene expression data (\cite{mayrink2013sparse}).  On the other hand,  after adjusted with a factor structure, the  efficiency of many statistical learning  methods can be improved, such as factor profiled sure independence screening in \cite{wang2012factor}, the estimation of  large covariance matrices in \cite{fan2018large} and factor-adjusted multiple testing in \cite{fan2019farmtest}.

With diversified assumptions, the forms of factor models vary a lot in the literature. In this paper we primarily focus on the  approximate factor structure  proposed by \cite{chamberlain1983arbitrage}, which allows the idiosyncratic errors to be cross-sectionally correlated. The model is further studied in \cite{bai2002determining} and becomes a popular scheme for modern factor analysis. A fundamental problem for factor analysis is to accurately specify the latent factor scores and loadings, which becomes challenging due to  the large-dimensionality and the existence of both serial and cross sectional dependencies. \cite{bai2003inferential} studied the consistency and asymptotic normality of the estimators by principal component analysis (PCA), while the properties of maximum likelihood estimators are explored in \cite{bai2012statistical} and \cite{bai2016maximum}. Certain recent works consider more flexible factor model settings, for instance,  \cite{bates2013consistent} provided consistent estimator for factor models with potential structural breaks of the factor loadings, which can be further applied to localize the structural breaks in the model setting by \cite{baltagi2017identification}.

Though there are extensive works on the estimation of factor models, the majority of them usually turn a blind eye for the existence of multiple-sources data. For example, when explaining the expected returns of portfolios, auxiliary information such as capital size and  the connections of corresponding public companies shall be quite useful.  Actually, it's becoming more and more important to study the combinations of diversified data to embrace the ``Big Data'' era. Unfortunately, only few works dedicate to consider factor models with data from multiple sources . The constrained factor model in \cite{tsai2010constrained} is an example, which imposes completely or partially linear constraints on factor loadings. The constraints are derived from industrial sectors in their real data  application on the excess stock returns of 10 U.S. companies. The idea is extended to doubly constraints in \cite{tsai2016doubly} and  matrix-value factor structure in \cite{chen2019constrained}. As another example, the semi-parametric factor structure in \cite{connor2012efficient} and \cite{fan2016projected} models the factor loadings as unknown functions of relevant covariates such as market capitalization and price-earning ratio. Factor models with auxiliary information deserve more attention while the potential abuse of such information shall be also considered, which makes the problem even more challenging.

We propose a  network-linked  factor model as an example to incorporate the auxiliary information from multiple-sources data in this paper. Network analysis has been popular for studying complex systems, such as the co-authorship network \citep{ji2016coauthorship} and gene-interaction network \citep{liu2017structure,xia2018multiple}. Taking network cohesion into account can lead to improvements of traditional methods, see, for example, the regression model in \cite{yu2016sparse}, the prediction models in \cite{li2019prediction}, the classification model in \cite{liu2019graph}. Yet as far as we know, network-assisted estimation of loadings and  scores in factor models has not been studied.   In the current article, a variable-to-variable network is assumed to be observed  in the factor model, and represents intrinsic similarities of the loadings. The assumption quite makes sense in real application, for instance, in financial markets the network can always be observed according to  the industrial sectors or ownership relations of public companies. We aim to improve the estimation accuracy for high dimensional  factor models with the prior network information, while the new approach shall also perform robustly even if  the observed  network is actually irrelevant.

We summarize the major contributions of this paper as follows. Firstly, we establish the framework which incorporates priori network  information into factor models. The framework is natural, interpretable, and can be flexibly extended to factor-based statistical learning methods. Secondly, we incorporate the network information by exerting two different penalties. An interesting finding is that the relationship between the two penalized problems are analogous to those between the ridge regression and principal component regression. Thirdly, we provide closed-form solutions to the penalized problems and  unified theoretical analysis for the corresponding estimators. Theoretical results show that the proposed methods achieve  guaranteed convergence rate improvement when priori network is correct while performs no worse than conventional PCA method even if the priori network  is totally misleading. Finally, we propose a data-driven $C_L$ criterion to select the tuning parameters for the estimation. The criterion is computationally  more  efficient than generic cross validation (CV), and  can be extended to deal with more general factor models where tuning parameters are involved.  We also  provide an adaptive method to specify the number of common factors with the observed network.

The notations throughout this paper are introduced here. For a matrix $\Ab$, $\mathrm A_{ij}$ or $\mathrm A_{i,j}$ is the $i$-th row $j$-th column entry of $\Ab$, tr($\Ab$) denotes the trace of $\Ab$ and  $(\mathrm A_{ij})_{n_1\times n_2}$ is a $n_1\times n_2$ matrix with entries $\mathrm A_{ij}$. We denote $\lambda_j(\Ab)$ as the $j$-th eigenvalue of a  symmetric matrix $\Ab$ while $\lambda_{\max}(\Ab)$ and $\lambda_{\min}(\Ab)$ correspond to the maximum and minimum eigenvalues. We denote $\|\Ab\|_F=\text{tr}^{1/2}(\Ab^\top\Ab)$, $\|\Ab\|=\lambda_{\max}^{1/2}(\Ab^\top\Ab)$ as the Frobenius norm and spectral norm of $\Ab$ respectively. For a vector $\bv$, $\|\bv\|$ is denoted as the Euclidean norm of $\bv$. We define $\text{diag}(a_1,\ldots,a_p)$ as the diagonal matrix with $j$-th diagonal entry being $a_j$ and the definition also holds if the diagonal entries are square matrices. $X_n\gtrsim Y_n$ means there exists constant $c>0$ such that $X_n\ge cY_n$ for  sufficiently large $n$, while $X_n\lesssim Y_n$ means there exists $c>0$ such that $X_n\le cY_n$ for sufficiently large $n$, and $X_n\asymp Y_n$ means both $X_n\gtrsim Y_n$ and $X_n\lesssim Y_n$ hold. For a random variable (or vector) $\bX$, ${\rm E}(\bX)$ denotes the expectation. The constants $c$ and $M$ in different lines can be not identical.

The  rest of the paper is structured as follows. Section \ref{sec2} illustrates how to fuse an observed network into the  factor models. Section {\ref{sec3}} displays the penalized estimation procedures, while theoretical assumptions and analysis are presented in section \ref{sec4}. In section \ref{sec5}, we focus on the selection of tuning parameters, including the number of common factors. Simulation studies and real data application are organized separately in section \ref{sec6} and \ref{sec7}, to empirically compare the proposed approaches with conventional PCA solution. Section \ref{sec8} concludes the paper and discusses the future works while all the technical proofs are in our supplementary materials..

\section{Model Setting}\label{sec2}
\subsection{Preliminary}
This paper focuses on the static approximate factor model originated from \cite{chamberlain1983arbitrage} and further studied in \cite{bai2002determining}. It assumes that the volatilities  of large-scale variables are driven by some latent  common factors and idiosyncratic errors,  which can be generally written as
\begin{equation}\label{equ2.1}
x_{tj}=\sum_{k=1}^{r}b_{jk}f_{tk}+e_{tj},\quad\text{for}\quad t=1,\ldots,T, j=1,\ldots,p,
\end{equation}
where the idiosyncratic errors can be cross-sectionally correlated. In model (\ref{equ2.1}), only $x_{tj}$ are observable, while the number of common factors $r$ (assumed to be fixed), the factor loadings $b_{jk}$, the random factor scores $f_{tk}$ and  idiosyncratic  errors $e_{tj}$ are all unobserved. The primary research interest of this paper is to estimate factor loadings, factor scores and further the common components $c_{tj}=\sum_{k=1}^{r}b_{jk}f_{tk}$ for the large-dimensional case $p\gtrsim T$.

 Model (\ref{equ2.1}) is also frequently written in vector form or matrix form as
\begin{equation}
\bx_t=\Bb\bF_t+\be_t,\quad\text{or}\quad\Xb=\Fb\Bb^\top+\Eb,
\end{equation}
where $\bx_t^\top=(x_{t1},\ldots,x_{tp})$, $\Bb=(b_{jk})_{p\times r}$, $\bF_t^\top=(f_{t1},\ldots,f_{tr})$, $\be_t^\top=(e_{t1},\ldots,e_{tp})$, $\Xb=(x_{tj})_{T\times p}$, $\Fb=(f_{tk})_{T\times r}$ and $\Eb=(e_{tj})_{T\times p}$. By the matrix form, it's easy to see the factor model is not identifiable unless exerting additional constrains. Most existing literatures  assume $\Fb^\top\Fb/T=\Ib_r$ almost surely as $T\rightarrow\infty$ to ensure identifiability up to orthogonal transformations. More detailed discussion about identifiability can be found in \cite{bai2012statistical} and  \cite{bai2013principal}. For  model (\ref{equ2.1}), principal component analysis (PCA) and maximum likelihood estimation (MLE) are two popular estimation methods, see \cite{bai2003inferential} for the former and \cite{bai2012statistical} for the latter as two examples. In this paper we focus on exploring the PCA-based method. A conventional PCA solution to model (\ref{equ2.1}) with $p\ge T$ in \cite{bai2003inferential} is to minimize
\[
Q_1(\Bb,\Fb)=(pT)^{-1}\|\Xb-\Fb\Bb^\top\|_F^2,\quad\text{s.t.}\quad \Fb^\top\Fb/T=\Ib_r.
\]
It's well known the solution is $\hat\Bb=T^{-1}\Xb^\top\hat\Fb$ while $\hat\Fb=\arg\min_{\Fb}\text{tr}(\Fb^\top\Xb\Xb^\top\Fb)$. Usually the columns of  $\hat\Fb$ are taken as $\sqrt{T}$ times the leading $r$ eigenvectors of $\Xb\Xb^\top$, while $r$ is predetermined by information criterion in \cite{bai2002determining} or other criteria. Further,  plug-in estimators of the common components are naturally $\hat c_{tj}=\sum_{k=1}^{r}\hat b_{jk}\hat f_{tk}$. Consistency and limiting distributions of $\hat b_{jk}$, $\hat f_{tk}$ and $\hat c_{tj}$ from such a conventional  PCA procedure are obtained in \cite{bai2003inferential} for the large-dimensional setting, with the assumption that  serial and cross-sectional correlations between  the idiosyncratic errors are weak.

\subsection{Factor model with prior network information}
For model (\ref{equ2.1}), convergence rates of conventional PCA estimators are well known to be determined by both $T$ and $p$. The rates in \cite{bai2003inferential} are $T^{-1}\|\hat \Fb-\Fb\Hb\|_F^2=O_p(T^{-2}+p^{-1})$ and $p^{-1}\|\hat\Bb-\Bb\Hb\|_F^2=O_p(T^{-1})$ for $p\gtrsim T$, where $\Hb$ is rotational matrix which depends on $T,p$. Hence it benefits from the so-called \emph{blessing of dimensionality} that a large $p$ improves the estimation.  However,  the generic PCA can lead to unreliable estimators for small $T$.  A scenario with large $p$ but small $T$ is common for the analysis of real data, especially in finance or macroeconomics. For instance, usually weekly but not daily stock returns are studied in order to control serial dependencies, which decreases the size of $T$. On the other hand, it's more likely that the factor system undergoes structural breaks if the sampling period is too long.

Fortunately, reliable estimation is hopeful for small $T$ if priori information is taken into account for the factor model. Incorporating a network structure to show the interconnectivity of massive variables is rather natural and reasonable in real application. Hence, we aim to improve the estimation of factor loadings and factor scores for model (\ref{equ2.1}) when a priori network is given. In this paper, the variable-to-variable network structure is assumed to be observed  according to common knowledge or intrinsic relationship of the variables, rather than estimated from the data matrix $\Xb$. For instance, when we study the stock returns of $p$ companies, an edge between companies $(i,j)$ either represents they are in the same industrial categories or they are stakeholders to each other. We use a $p\times p$ adjacency matrix $\Ab$ to denote the observed network. We set $\mathrm A_{ij}=\mathrm A_{ji}=1$ if there exists an edge between two variables $i$ and $j$ and set $\mathrm A_{ij}=0$ otherwise. In the factor model (\ref{equ2.1}), it's very natural to assume  $\|\bb_i-\bb_j\|^2$  being  small if $\mathrm A_{ij}=1$, where $\bb_i^\top=(b_{i1},\ldots,b_{ir})$ is the loading vector for variable $i$. In other word,  $\mathrm A_{ij}$ reflects the similarity  between $\bb_i$ and $\bb_j$, which indicates how the common factors affect corresponding variables.

In addition to the observed data matrix $\Xb$, now we also have prior network information collected in the adjacency matrix $\Ab$, thus we refer the new factor model as the Network-linked Factor Model (NFM). In next section, we  propose an adaptive procedure to estimate the loadings and scores of the proposed NFM. Intuitively if the network perfectly or partially describes the connection of corresponding variables, we can reduce the uncertainty of loadings and improve the accuracy. Our theoretical analysis will show that not only loadings but also factor scores can be more precisely specified.  On the other hand, even if the network is totally misleading, our approach still performs comparably to the PCA solutions in \cite{bai2003inferential}, by adaptively controlling the influence of the network information on the estimation procedure.

\section{Adaptive estimation for NFM}\label{sec3}
\subsection{Laplacian penalty and Projection penalty}
We study the estimation of loadings and factor scores under the NFM framework in this section. If $r$ is given, a natural idea is to penalize the loadings in the objective function $Q_1(\Bb,\Fb)$, which leads to
\[
Q_2(\Bb,\Fb)=\frac{1}{pT}\|\Xb-\Fb\Bb^\top\|_F^2+\tilde\alpha \sum_{i}\sum_j\mathrm A_{ij}\|\bb_i-\bb_j\|^2,\quad\text{s.t.}\quad\Fb^\top\Fb/T=\Ib_r,
\]
where $\tilde\alpha$ is a tuning parameter taken as given for now. To better explain how the penalty regularizes the estimation, the  Laplacian matrix of $\Ab$ is defined as $\mathcal{L}=\mathcal{D}-\Ab$, where $\mathcal{D}$ is a diagonal matrix with entries $\mathcal{D}_{ii}=\sum_{j=1}^p\mathrm{A}_{ij}$. Meanwhile, define the normalized Laplacian as $\mathcal{L}_n=\bar d^{-1}\mathcal{L}$ with $\bar d=p^{-1}\sum_{i=1}^{p}\mathcal{D}_{ii}$. It's easy to verify that for the symmetric adjacency matrix $\Ab$,
\begin{equation}\label{equ3.1}
\sum_{i=1}^{p}\sum_{j=1}^p\mathrm{A}_{ij}\|\bb_i-\bb_j\|^2=2\text{tr}(\Bb^\top\mathcal{L}\Bb)=2\bar d\text{tr}(\Bb^\top\mathcal{L}_n\Bb).
\end{equation}
Suppose $\mathcal{L}_n$ has the spectral decomposition as $\mathcal{L}_n=\Ub\bLambda\Ub^\prime$, where the columns of $\Ub=(\bu_1,\ldots,\bu_p)$ are composed of eigenvectors and $\bLambda=\text{diag}(\tau_1,\ldots,\tau_p)$ contains the eigenvalues in decreasing order. Define $\tilde\Xb=\Xb\Ub$, $\tilde\Bb=\Ub^\top\Bb=(\tilde b_{jk})_{p\times r}$ and rescale $\tilde\alpha$ by $\alpha=2p\bar d\tilde\alpha$, then we have
\begin{equation}\label{equ3.2}
Q_2(\Bb,\Fb)=\frac{1}{pT}\|\tilde\Xb-\Fb\tilde\Bb^\top\|_F^2+\frac{\alpha}{p} \sum_{k=1}^r\sum_{j=1}^{p}\tau_j\tilde b_{jk}^2.
\end{equation}
We refer the penalty term in $Q_2(\Bb,\Fb)$ to the Laplacian penalty.

It's clear that in equation (\ref{equ3.2}) the penalization to $\tilde b_{jk}$ depends on corresponding eigenvalues $\tau_j$, where $\tilde b_{jk}$ are representations of $b_{jk}$ under the new basis $\Ub$. For larger $\tau_j$, the penalization becomes stronger. In another word, we actually  expect $|\tilde b_{jk}|$ to be small if $\tau_j$ is large. By the decomposition $\Bb=\sum_{j=1}^{p}\bu_j\tilde\bb_j^\top$, the Laplacian penalty regularizes the estimated loading matrix to lie closer to the space spanned by $\bu_j$ corresponding to smaller eigenvalues, especially when these small eigenvalues are sufficiently close to 0 and well distinguished from those spiked ones.

Though the Laplacian penalty is frequently adopted for network analysis, it may not always be the best for our NFM framework. The penalty subjectively supposes $\tilde b_{jk}$ are related to $\tau_j$, and assigns different weights based on the eigenvalues. However, in real cases the given network can be  {misspecified} so that  $\tilde b_{jk}$ are not necessarily determined by $\tau_j$. A more general penalty should put separate tuning parameters to $\tilde b_{jk}$, such as $\sum_{k=1}^{r}\sum_{j=1}^{p}p^{-1}\alpha_j\tilde b_{jk}^2$, but it contains $p$ tuning parameters and is impractical in real application. In fact, the Laplacian penalty can be viewed as a special and feasible case by taking  $\alpha_j=\alpha \tau_j$.

 We propose another alternative penalty, named the Projection penalty. For the Projection penalty, the eigenvalues are divided into two groups $\{\tau_1,\ldots,\tau_{p-m}\}$ and $\{\tau_{p-m+1},\ldots,\tau_p\}$ by a preliminarily determined \emph{truncation parameter} $m$. Then, we penalize $Q_1(\Bb,\Fb)$ by $p^{-1}\alpha \sum_{k=1}^{r}\sum_{j=1}^{p-m}\tilde b_{jk}^2$. Compared with the Laplacian penalty, now we assign equal penalization to $\tilde b_{jk}$ if $\tau_j$ is large, but no penalization to the small part. In this way,  the estimated  $\Bb$ are still more likely to lie in the space spanned by $\bu_j$ corresponding to small eigenvalues, or more precisely by $(\bu_{p-m+1},\ldots,\bu_p)$. To see why we name it as Projection penalty, separate $\Ub$ as $\Ub=(\Ub_1,\Ub_2)$ where $\Ub_1$ is composed of the leading $(p-m)$ eigenvectors. Then, the penalty can be written in matrix form as
\[
\frac{\alpha}{p} \sum_{k=1}^{r}\sum_{j=1}^{p-m}\tilde b_{jk}^2=\frac{\alpha}{p}\|\Ub_1(\Ub_1^\top\Ub_1)^{-1}\Ub_1^\top\Bb\|^2,
\]
where $\Ub_1(\Ub_1^\top\Ub_1)^{-1}\Ub_1^\top$ is exactly the projection matrix to the column space of $\Ub_1$.

\subsection{Comparison between Laplacian and Projection penalties}
Both Laplacian and Projection penalties are motivated from the priori network information, but they show different understandings of the loading structure. Neither of them can always outperform the other, as the accurate relationship between the network and factor loadings remains unknown. However, both the penalties result in more efficient estimators compared with  conventional PCA  solutions for certain cases. We compare the penalties in the following three aspects, while more sophisticated theoretical results are presented in Section \ref{sec4}.

Firstly, both of them are special cases of the more general penalty $p^{-1}\sum_{k=1}^{r}\sum_{j=1}^{p}\alpha_j\tilde b_{jk}^2$. The Laplacian penalty takes $\alpha_j=\alpha\tau_j$ while the Projection penalty takes $\alpha_j=\alpha$ for $j\le p-m$ and $\alpha_j=0$ for $j>p-m$. Hence if $\tilde b_{jk}$ are truly small when $\tau_j$ are large, the Laplacian penalty can be more efficient due to the proper weights. Otherwise, the network information is abused by Laplacian penalty with undesirable misspecification.

Secondly,  both penalties tend to project the loading matrix to the space spanned by the last several eigenvectors of $\mathcal{L}_n$.  However, the regularization of Laplacian penalty seems to be stronger as it would force $\hat b_{ik}=\hat b_{jk}$ for all $i,j\le p$ if the network is  connected and taking $\alpha\rightarrow\infty$. This would result in an estimated  loading matrix of rank one, i.e., there exists only one factor in the system. In contrast,  the Projection penalty only forces that the estimated $\hat\Bb$  lies in the space of $\Ub_2$ under the same scenario, which is of rank $m$ and thus more flexible.

Thirdly,  both the penalties involve tuning parameters to be selected and are  computationally more demanding than conventional PCA. In addition, the Projection penalty has one more tuning parameter than the Laplacian penalty.

The relationship between the two penalties are analogous to  that between the ridge regression and principal component regression (PCR). Ridge regression  assigns different weights to the singular vectors of the design matrix, while the weights  depend on the corresponding singular values, similar to  the Laplacian penalty here. On the other hand, PCR only picks up the leading singular vectors and assigns equal importance to them,  similar to the Projection penalty here.  Ridge regression and PCR  both work in existence of multi-collinearity, while the  Laplacian and Projection penalties here are mainly imposed to incorporate prior loading similarity structure in the factor model.

\subsection{Solutions to the optimization problems}\label{sec3.3}
We first show the solution with Laplacian penalty, that is, to minimize $Q_2(\Bb,\Fb)$ in equation (\ref{equ3.1}). First assume $\alpha$ and $\hat\Fb_{lap}$ are given while $T^{-1}\hat\Fb_{lap}^\top\hat\Fb_{lap}=\Ib_r$, then
\begin{equation}\label{equ3.3}
\begin{split} Q_2(\Bb,\hat\Fb_{lap})=&\frac{1}{pT}\text{tr}\Big(\Xb^\top\Xb-\Xb^\top\hat\Fb_{lap}\Bb^\top-\Bb\hat\Fb_{lap}^\top\Xb+\Bb\hat\Fb_{lap}^\top\hat\Fb_{lap}\Bb^\top\Big)+\frac{\alpha}{p}\text{tr}(\Bb^\top\mathcal{L}_n\Bb)\\
=&\frac{1}{pT}\text{tr}(\Xb^\top\Xb)-\frac{1}{pT^2}\text{tr}(\Xb^\top\hat\Fb_{lap}\hat\Fb_{lap}^\top\Xb)+\frac{1}{p}\|T^{-1}\Xb^\top\hat\Fb_{lap}-\Bb\|_F^2+\frac{\alpha}{p}\text{tr}(\Bb^\top\mathcal{L}_n\Bb).
\end{split}
\end{equation}
Therefore, the solution for loadings should be
\[
\hat\Bb_{lap}=\frac{1}{T}(\Ib_p+\alpha\mathcal{L}_n)^{-1}\Xb^\top\hat\Fb_{lap}.
\]
Take $\Bb=T^{-1}(\Ib_p+\alpha\mathcal{L}_n)^{-1}\Xb^\top\Fb$ back to $Q_2(\Bb,\Fb)$ so that
\[ Q_2(\Bb,\Fb)=\frac{1}{pT}\text{tr}(\Xb^\top\Xb)-\frac{1}{pT^2}\text{tr}\Big\{\Fb^\top\Xb(\Ib_p+\alpha\mathcal{L}_n)^{-1}\Xb^\top\Fb\Big\}.
\]
Thus, the solution for $\hat\Fb_{lap}$ can be taken as $\sqrt{T}$ times the leading $r$ eigenvectors of $\Xb(\Ib_p+\alpha\mathcal{L}_n)^{-1}\Xb^\top$.  For the Projection penalty, simply replace $\text{tr}(\Bb^\top\mathcal{L}_n\Bb)$ in (\ref{equ3.3}) with $\text{tr}(\Bb^\top\Ub_1\Ub_1^\top\Bb)$ and  similar procedure as above can be taken to obtain the solution. Denote the solution as $\hat\Fb_{proj}$ and $\hat\Bb_{proj}$, where $\hat\Fb_{proj}$ is $\sqrt{T}$ times the leading $r$ eigenvectors of $\Xb(\Ib_p+\alpha\Ub_1\Ub_1^\top)^{-1}\Xb^\top$ and $\hat\Bb_{proj}=T^{-1}(\Ib_p+\alpha\Ub_1\Ub_1^\top)^{-1}\Xb^\top\hat\Fb_{proj}$. The common components are then given directly by $\hat c_{ij}=\sum_{k=1}^{r}\hat b_{ik}\hat f_{tk}$. To simplify the notations for the rest of this paper, we define $\Db_1(\alpha)=\Ib_p+\alpha\mathcal{L}_n$, $\Db_2(\alpha,m)=\Ib_p+\alpha\Ub_1\Ub_1^\top$, and use $\Db_1$ (or $\Db_2$) as long as it doesn't lead to misunderstanding.

As a consequence, the closed-form solutions exist for the penalized method, {which benefits from the $\ell_2$-type penalty}. As can be seen from the solutions, both  the penalized methods have ``shrinkage" effect on the estimators. Shrinkage estimators are usually biased but may have lower mean squared error by the trade-off between bias and variance. Similar phenomenon appears here with carefully chosen tuning parameters, so that the common factors will remain powerful while the idiosyncratic errors are eliminated.  {Regularized method for factor model is also proposed in  \cite{bai2019rank}, but we clarify this paper differentiates from \cite{bai2019rank} with respect to the research purpose, the interpretation of penalties and the estimation procedure.}

The closed-form solutions also simplify the computation and parameter selection for the proposed penalized approaches. In detail, we can keep the matrices  $\Ub$, $\bLambda$ and $\tilde \Xb=\Xb\Ub$ to avoid repetitively  computing inverses of $p\times p$ matrices when selecting the tuning parameters.  For example, {$\hat\Fb_{lap}$ and $\hat\Bb_{lap}$ can be more easily computed by   $\Xb\Db_1^{-1}\Xb^\top=\tilde\Xb(\Ib_p+\alpha\bLambda)^{-1}\tilde\Xb^\top$ and $\hat\Bb_{lap}=T^{-1}\Ub(\Ib_p+\alpha\bLambda)^{-1}\tilde\Xb^\top\hat\Fb_{lap}$ as $(\Ib_p+\alpha\bLambda_L)$ is diagonal. }Similar technique can be applied to the optimization problem with Projection penalty, by noting that $\Db_2^{-1}=\Ub\{\Ib_p+\alpha\text{diag}(\Ib_{p-m},{\bf 0})\}^{-1}\Ub^\top$.

\section{Theoretical results}\label{sec4}
This section aims to explore the asymptotic properties of the shrinkage estimators corresponding to  the Laplacian and Projection penalties respectively. We first present some mild conditions, which bring into a large-dimensional framework with both serially and cross-sectionally correlated errors. Then, consistency rates of the new estimators are studied, followed by comparisons to conventional PCA solution. Furthermore, when certain conditions on central limit theorems are satisfied, we prove the asymptotic normalities of the resulting estimators.  It will be concluded that the estimation of factor scores, loadings and common components can all be improved when the network structure truly reflects the relationships among variables. Meanwhile, the new approaches are indeed adaptive in the sense that even if the network information is incorrect/misleading, they just result in almost the same efficient estimators as conventional PCA solution.

\subsection{Assumptions}
Diversified assumptions have been proposed in the literature so as to cope with different scenarios and motivations on factor models.  For instance, \cite{bai2002determining} and \cite{bai2003inferential} directly assumed bounded moment constraints on factors and idiosyncratic errors, while \cite{fan2013large} focused more on the data generating process and tail probability bounds. In this paper, we adopt the following general and interpretable assumptions.
\begin{description}
	\item[Assumption A] We assume that $p\gtrsim T$, and $r$ is assumed to be constant {while $p,T\rightarrow\infty$}.
\end{description}
\begin{description}
	\item[Assumption B] For all $t\le T$ and $k\le r$, ${\rm E}(f_{tk})=0$, ${\rm E}(f_{tk}^2)=1$ and ${\rm E}(f_{tk}^4)\le M$ for some constant $M>0$. Further assume $\Fb^\top\Fb/T=\Ib_r$ almost surely and for any $T$ dimensional vector $\bv$ that $\|\bv\|=1$, we have ${\rm E}\|\bv^\top\Fb\|^2\le M$.
\end{description}
\begin{description}
	\item[Assumption C] For any $j\le p$ and $k\le r$, assume $|b_{jk}|\le M$. Further assume $p^{-1}\Bb^\top\Bb\rightarrow \bSigma_B$, while the eigenvalues of $\bSigma_B$ are distinct and bounded away from zero and infinity, i.e., $c_1\ge\lambda_1(\bSigma_B)>\cdots>\lambda_r(\bSigma_B)\ge c_2$, for some constants $c_1,c_2>0$.
\end{description}

\begin{description}
	\item[Assumption D]  The error matrix $\Eb=\Pb_1{\bf\mathcal{E}}\Pb_2$, where $\mathcal{E}=(\epsilon_{tj})_{T\times p}$ with $\epsilon_{tj}$ being independent  and ${\rm E}(\epsilon_{tj})=0$, ${\rm E}(\epsilon_{tj}^2)=1$, ${\rm E}(\epsilon_{tj}^4)\le M$, $\Pb_1$ and $\Pb_2$ are two deterministic square matrices. There exists positive constants $c_1$ and $c_2$ so that $c_2\le\lambda_t(\Pb_1^\top\Pb_1)\le c_1$ for any $t\le T$ and $c_2\le\lambda_j(\Pb_2^\top\Pb_2)\le c_1$ for any $j\le p$. In addition, ${\bf \mathcal{E}}$ and $\Fb$ are independent.
\end{description}	

Due to the existence of  prior network information, it's more meaningful to consider the large $p$ and possibly small $T$ scenarios. We clarify this in Assumption A, where the number of factors shall be predetermined and taken as fixed similar to  most existing works. The practical selection of factor number $r$ is discussed in the next section. Assumption B requires the latent factors have bounded fourth moments, which can easily hold, for example when the factors are from multiple time series models such as the VAR process. By assuming that $\Fb^\top\Fb=\Ib_r/T$ almost surely and ${\rm E}\|\bv^\top\Fb\|^2\le M$ for any $\|\bv\|=1$, we essentially require that the serial dependence among factors can not be too strong, and guarantee the model is identifiable up to orthogonal transformations.  Assumption C is quite common in the factor model literature. The condition of distinct and bounded eigenvalues ensures the factors are sufficiently powerful and the corresponding eigenvectors are identifiable. Assumption D allows both serial and cross-sectional dependences for the idiosyncratic errors, even though  the entries of $\mathcal{E}$ are independent. Similar assumption can be found in \cite{bai2006determining} and \cite{han2017determining}.

We have obtained the solutions corresponding to the  Laplacian and Projection penalties in section \ref{sec3}. Compared with conventional PCA solution which is based on $\Xb\Xb^\top$, the new solutions are established on a shrunk version $\Xb\Db^{-1}\Xb^\top$, where $\Db=\Db_1$ or $\Db_2$. Basically the factors should remain informative after shrinkage. To this end, we introduce the following assumption.
\begin{description}
	\item[Assumption E] Define ${\bf S}=p^{-1}\Bb^\top\Db^{-1}\Bb$ where $\Db=\Db_1$ (or $\Db_2$), then we assume there exists constant c such that $\lambda_1({\bf S})>\cdots>\lambda_r({\bf S})\ge c$ as $p,T\rightarrow \infty$.
\end{description}
It's easy to verify the major term in $(pT)^{-1}\Xb\Db^{-1}\Xb^\top$ is $\Fb{\bf S}\Fb^\top/T$, which asymptotically shares the same non zero eigenvalues of ${\bf S}$ backed by Assumption B. Hence, Assumption E ensures the power of factors after the shrinkage. Note that $\Db_1$ (or $\Db_2$) contains the tuning parameters, which are usually chosen as functions of the dimensions $p$ and $T$. Consequently, Assumption E performs more like a checking procedure  about whether the tuning parameters are reasonably selected.  Besides, Assumption E always holds if we set $\alpha=0$ as long as Assumption C holds.

\subsection{Consistency}
In this subsection, we establish the consistency of the proposed adaptive estimators. We first present a general theorem on how the consistent rates depend on the tuning parameters.
\begin{theorem}\label{thm4.1}
	Suppose  Assumptions A-E hold, then there exist a sequence of invertible matrices $\Hb$ (depending on $p,T$ and tuning parameters) such that $\Hb^\top\Hb\stackrel{p}{\rightarrow}\Ib_r$ as $p,T\rightarrow\infty$, and
\[
\begin{split}
\frac{1}{T}\|\hat\Fb-\Fb\Hb\|_F^2=&O_p\bigg(\frac{1}{p}+\frac{\text{tr}(\Db^{-2})}{pT^2}\bigg),\\
\frac{1}{p}\|\hat\Bb-\Bb\Hb \|_F^2=&O_p\bigg(\frac{1}{p}\|(\Db^{-1}-\Ib_p)\Bb\|_F^2+\frac{\text{tr}(\Db^{-2})}{pT}\bigg),\\
\frac{1}{pT}\|\hat\Cb-\Cb\|_F^2=&O_p\bigg(\frac{1}{p}\|(\Db^{-1}-\Ib_p)\Bb\|_F^2+\frac{\text{tr}(\Db^{-2})}{pT}+\frac{1}{p}\bigg),
\end{split}
\]
where $\hat\Bb=T^{-1}\Db^{-1}\Xb^\top\hat\Fb$, $\hat\Fb$ is $\sqrt{T}$ times the leading $r$ eigenvectors of $(pT)^{-1}\Xb\Db^{-1}\Xb^\top$ and $\hat\Cb=\hat\Fb\hat\Bb^\top$. In addition,  $\Db=\Db_1=\Ib_p+\alpha\mathcal{L}_n$ for Laplacian penalty, while $\Db=\Db_2=\Ib_p+\alpha\Ub_1\Ub_1^\top$ for Projection penalty with $\Ub_1=(\bu_1,\ldots,\bu_{p-m})$, where $\bu_j$ is the $j$-th eigenvector of $\mathcal{L}_n$.
\end{theorem}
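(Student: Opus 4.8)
The plan is to follow the principal-components template of \cite{bai2003inferential}, modified throughout to carry the symmetric positive-definite shrinkage factor $\Db^{-1}$ (whose eigenvalues lie in $(0,1]$, so $\|\Db^{-1}\|\le 1$). Write $\widehat{\mathbf V}$ for the $r\times r$ diagonal matrix of the leading $r$ eigenvalues of $(pT)^{-1}\Xb\Db^{-1}\Xb^\top$. By construction of $\hat\Fb$ we have the eigen-equation $(pT)^{-1}\Xb\Db^{-1}\Xb^\top\hat\Fb=\hat\Fb\widehat{\mathbf V}$ together with $T^{-1}\hat\Fb^\top\hat\Fb=\Ib_r$. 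Substituting $\Xb=\Fb\Bb^\top+\Eb$ and isolating the signal block $(pT)^{-1}\Fb\Bb^\top\Db^{-1}\Bb\Fb^\top\hat\Fb=T^{-1}\Fb{\mathbf S}\Fb^\top\hat\Fb$, I would set $\Hb={\mathbf S}\,(T^{-1}\Fb^\top\hat\Fb)\,\widehat{\mathbf V}^{-1}$, which yields the fundamental identity
\[
\hat\Fb-\Fb\Hb=\Big(\tfrac{1}{pT}\Fb\Bb^\top\Db^{-1}\Eb^\top+\tfrac{1}{pT}\Eb\Db^{-1}\Bb\Fb^\top+\tfrac{1}{pT}\Eb\Db^{-1}\Eb^\top\Big)\hat\Fb\,\widehat{\mathbf V}^{-1}.
\]
The entire argument then reduces to bounding the three driving terms on the right; the role played by $p$ in the unpenalized theory is taken over here by the trace functionals $\mathrm{tr}(\Db^{-1})$ and $\mathrm{tr}(\Db^{-2})$.

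Before the error bounds I would control $\widehat{\mathbf V}^{-1}$ and $\Hb$. Since the signal block has the same nonzero eigenvalues as ${\mathbf S}$ (using $T^{-1}\Fb^\top\Fb=\Ib_r$), Assumption E together with Weyl's perturbation inequality shows the eigenvalues collected in $\widehat{\mathbf V}$ stay bounded away from $0$ and $\infty$, so $\widehat{\mathbf V}^{-1}=O_p(1)$; this step does not use $\Hb$ and hence avoids circularity. Consistency of $\hat\Fb$ combined with $T^{-1}\hat\Fb^\top\hat\Fb=\Ib_r$ and $T^{-1}\Fb^\top\Fb=\Ib_r$ then gives $T^{-1}\hat\Fb^\top\hat\Fb=\Hb^\top\Hb+o_p(1)$, so $\Hb^\top\Hb\stackrel{p}{\to}\Ib_r$ and $\Hb$ is asymptotically orthogonal. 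In particular the distinction between $\Hb$, $\Hb^{-1}$ and $\Hb^{-\top}$ is immaterial to the rates, which is why a single $\Hb$ serves all three estimators.

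For the factor-score rate I would bound each driving term in Frobenius norm after scaling by $T^{-1/2}$. Under Assumption D one has ${\rm E}[\Eb\Db^{-1}\Eb^\top]=\mathrm{tr}(\Db^{-1}\Pb_2^\top\Pb_2)\,\Pb_1\Pb_1^\top$ and ${\rm E}[\Eb^\top\Eb]\asymp T\,\Pb_2^\top\Pb_2$, from which the two cross terms contribute $O_p(p^{-1})$ and $O_p((pT^2)^{-1})$ respectively. The pure-noise term $(pT)^{-1}\Eb\Db^{-1}\Eb^\top\hat\Fb\widehat{\mathbf V}^{-1}$ is the \textbf{main obstacle}: I would split it into its non-negligible mean, which produces a contribution of order $(\mathrm{tr}(\Db^{-1}))^2/(p^2T^2)$, and a fluctuation controlled by the variance of quadratic forms in the independent entries of $\mathcal E$, of order $\mathrm{tr}(\Db^{-2})/(p^2T)$. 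The crucial algebraic step is the Cauchy--Schwarz bound $(\mathrm{tr}(\Db^{-1}))^2\le p\,\mathrm{tr}(\Db^{-2})$, which turns the mean contribution into $\mathrm{tr}(\Db^{-2})/(pT^2)$ and delivers the stated rate $O_p(p^{-1}+\mathrm{tr}(\Db^{-2})/(pT^2))$; this also reproduces the PCA rate at $\Db=\Ib_p$, where $\mathrm{tr}(\Db^{-2})=p$.

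Finally, from $\hat\Bb=\Db^{-1}\Bb(T^{-1}\Fb^\top\hat\Fb)+T^{-1}\Db^{-1}\Eb^\top\hat\Fb$ I would decompose $\hat\Bb-\Bb\Hb$ into a \emph{systematic shrinkage bias} $(\Db^{-1}-\Ib_p)\Bb(T^{-1}\Fb^\top\hat\Fb)$, a rotation-discrepancy term $\Bb(T^{-1}\Fb^\top\hat\Fb-\Hb)$, and a noise term $T^{-1}\Db^{-1}\Eb^\top\hat\Fb$. The bias term yields $p^{-1}\|(\Db^{-1}-\Ib_p)\Bb\|_F^2$ directly, and the noise term yields $\mathrm{tr}(\Db^{-2})/(pT)$ via ${\rm E}\|\Db^{-1}\Eb^\top\Fb\|_F^2\asymp T\,\mathrm{tr}(\Db^{-2})$. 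The rotation-discrepancy term is the point needing care: a naive bound by the factor-score rate is too lossy, so I would instead write $T^{-1}\Fb^\top\hat\Fb-\Hb=T^{-1}\Fb^\top(\hat\Fb-\Fb\Hb)$ and re-expand through the three driving terms, exploiting the cancellation $T^{-1}\Fb^\top\Fb=\Ib_r$ to gain an extra factor and place its contribution safely inside the loading budget. The common-component rate then follows from $\hat\Cb-\Cb=(\hat\Fb-\Fb\Hb)\hat\Bb^\top+\Fb\Hb(\hat\Bb-\Bb\Hb)^\top+\Fb(\Hb\Hb^\top-\Ib_r)\Bb^\top$, where the first two pieces inherit the score and loading rates and the last contributes the additional $p^{-1}$ through $\Hb\Hb^\top\to\Ib_r$.
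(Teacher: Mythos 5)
Your proposal follows essentially the same route as the paper's own proof: you use the identical rotation matrix $\Hb={\bf S}\,(T^{-1}\Fb^\top\hat\Fb)\,\bLambda_r^{-1}$, the same three-term expansion of $(\hat\Fb-\Fb\Hb)\bLambda_r$ with the pure-noise quadratic form as the dominant obstacle, the same key inequality $\mathrm{tr}^2(\Db^{-1})\le p\,\mathrm{tr}(\Db^{-2})$ to absorb the mean of $\Eb\Db^{-1}\Eb^\top$, and the same bias/rotation-discrepancy/noise split of $\hat\Bb-\Bb\Hb$ with the refinement $T^{-1}\Fb^\top\hat\Fb-\Hb=T^{-1}\Fb^\top(\hat\Fb-\Fb\Hb)$. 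The remaining differences (e.g., your decomposition of $\hat\Cb-\Cb$ versus the paper's $\hat\Bb\Hb^{-1}\Hb\hat\Fb^\top-\Bb\Fb^\top$, and leaving the cross term $T^{-1}\Db^{-1}\Eb^\top(\hat\Fb-\Fb\Hb)$ implicit inside your noise term) are cosmetic, so the argument is correct and essentially identical to the paper's.
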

Theorem \ref{thm4.1} provides unified results for both the Laplacian and Projection penalties. Actually this theorem makes it possible to analyse all penalties with a form of $p^{-1}\alpha\text{tr}(\Bb^\top\Db\Bb)$. For the factor score matrix, compared with the convergence rate $O_p(p^{-1}+T^{-2})$ in \cite{bai2003inferential}, our methods can be more accurate if $\text{tr}(\Db^{-2})=o(p)$ and $T=o(\sqrt{p})$. For the loading matrix, compared with the conventional rate $O_p(T^{-1})$ (we assume $p\gtrsim T$), the new consistency rate is composed of two parts. The first part corresponds to shrinkage bias, which depends on the tuning parameters, and is driven by the relations between the observed network and the latent loading matrix.  We claim that with suitably selected tuning parameters,  the shrinkage bias is always negligible.  The second term is a scaling of $T^{-1}$, while the scaling coefficient depends on the tuning parameters. Note that when $\alpha$ is large, $\text{tr}(\Db^{-2})/p$ can be small and even converge to 0, so the new estimators are impressively more efficient. The consistency rate for common components is obtained by combining the convergence rates of the loadings and factor scores, where the loadings' part always dominates because of  $p\gtrsim T$. We emphasize that by Theorem \ref{thm4.1},  even if the network is totally incorrect, we can always obtain the convergence rates as for the conventional PCA solutions, by setting $\alpha\approx 0$. On the other hand, if the network precisely capture the interconnectivity of the variables, our approach achieves significant improvements. Therefore, the new estimators achieve adaptivity to the efficiency and accuracy of the prior network information.

Theorem \ref{thm4.1} also makes it easier to study the optimal selection of tuning parameters. Given the consistent rates, the tuning parameters can be selected naturally by minimizing $h(\Db)$:
\begin{equation}\label{equ4.1}
h(\Db)=\frac{1}{p}\|(\Db^{-1}-\Ib_p)\Bb\|_F^2+\frac{\text{tr}(\Db^{-2})}{pT}.
\end{equation}
To this end, the Laplacian penalty and Projection penalty should be treated separately. We start with the Projection penalty, so $\Db=\Db_2=\Ib_p+\alpha\Ub_1\Ub_1^\top$, where $\Ub_1$ is composed of the leading $(p-m)$ eigenvectors of the normalized Laplacian matrix. Using the notation $\tilde\Bb_1=\Ub_1^\top\Bb$, we have
\[
h(\Db_2)=\frac{\alpha^2}{p(1+\alpha)^2}\|\tilde\Bb_1\|_F^2+\frac{1}{T(1+\alpha)^2}+O\bigg(\frac{m}{pT}\bigg).
\]
Typically $m$  is set as a small number compared to $p$, thus we ignore the term $m/(pT)$. Hence, $h(\Db_2)$ is approximately minimized by taking $\alpha=p/(T\|\tilde\Bb_1\|_F^2)$  so that
\[
h(\Db_2)=\frac{(pT)^{-1}\|\tilde\Bb_1\|_F^2}{p^{-1}\|\tilde\Bb_1\|_F^2+T^{-1}}+O\bigg(\frac{m}{pT}\bigg).
\]
It's necessary to guarantee that  Assumption E holds with this ``optimal" $\alpha$. The following corollary confirms the above judgement.
\begin{corollary}\label{cor1}
	Given $m$, Assumption E always holds with $\alpha=p/(T\|\tilde\Bb_1\|_F^2)$ for Projection penalty as long as Assumption C holds.
\end{corollary}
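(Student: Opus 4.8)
The plan is to show that the chosen $\alpha$ turns the shrinkage into a spectral-norm perturbation of $p^{-1}\Bb^\top\Bb$ that vanishes as $T\to\infty$, so that ${\bf S}$ simply inherits both the distinctness and the lower bound of the spectrum of $\bSigma_B$ from Assumption C. First I would exploit the projection structure of $\Db_2$. Because the columns of $\Ub_1$ are orthonormal, $\Ub_1\Ub_1^\top$ is idempotent, and the inversion identity for such a rank-deficient perturbation gives the closed form
\[
\Db_2^{-1}=\Ib_p-\frac{\alpha}{1+\alpha}\Ub_1\Ub_1^\top .
\]
Substituting into ${\bf S}=p^{-1}\Bb^\top\Db_2^{-1}\Bb$ and recalling $\tilde\Bb_1=\Ub_1^\top\Bb$ yields
\[
{\bf S}=\frac{1}{p}\Bb^\top\Bb-\frac{\alpha}{1+\alpha}\cdot\frac{1}{p}\tilde\Bb_1^\top\tilde\Bb_1 ,
\]
so ${\bf S}$ is exactly $p^{-1}\Bb^\top\Bb$ minus a positive semidefinite correction.

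Next I would bound that correction in spectral norm. Since $\tilde\Bb_1^\top\tilde\Bb_1$ is $r\times r$ and positive semidefinite, its largest eigenvalue is at most its trace, i.e.\ $\tilde\Bb_1^\top\tilde\Bb_1\preceq\|\tilde\Bb_1\|_F^2\,\Ib_r$. The role of the calibration $\alpha=p/(T\|\tilde\Bb_1\|_F^2)$ now becomes transparent: it gives $\frac{\alpha}{1+\alpha}=\frac{p}{T\|\tilde\Bb_1\|_F^2+p}$, whence
\[
\frac{\alpha}{1+\alpha}\cdot\frac{\|\tilde\Bb_1\|_F^2}{p}=\frac{\|\tilde\Bb_1\|_F^2}{T\|\tilde\Bb_1\|_F^2+p}\le\frac{1}{T},
\]
the last step discarding the nonnegative $p$ in the denominator. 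Consequently the correction matrix has spectral norm at most $1/T$, so ${\bf S}\succeq p^{-1}\Bb^\top\Bb-T^{-1}\Ib_r$ and, more strongly, ${\bf S}$ differs from $p^{-1}\Bb^\top\Bb$ by a perturbation whose operator norm tends to $0$.

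Finally I would transfer the spectrum. Weyl's inequality gives $|\lambda_i({\bf S})-\lambda_i(p^{-1}\Bb^\top\Bb)|\le T^{-1}$ for every $i\le r$. Assumption C supplies $p^{-1}\Bb^\top\Bb\to\bSigma_B$ with $c_1\ge\lambda_1(\bSigma_B)>\cdots>\lambda_r(\bSigma_B)\ge c_2>0$, so each $\lambda_i({\bf S})\to\lambda_i(\bSigma_B)$. Convergence to \emph{distinct} limits forces $\lambda_1({\bf S})>\cdots>\lambda_r({\bf S})$ for all large $p,T$, while $\lambda_r({\bf S})\ge\lambda_r(p^{-1}\Bb^\top\Bb)-T^{-1}\ge c_2-o(1)-T^{-1}\ge c_2/2$; this verifies Assumption E with $c=c_2/2$.

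The only genuine difficulty is uniformity. The quantity $\|\tilde\Bb_1\|_F^2$ measures how the (possibly misspecified) network distributes the loadings between $\Ub_1$ and $\Ub_2$, and it can range anywhere from $O(1)$ up to order $p$; a generic bound on the correction would therefore not be controlled uniformly in the network. The strength of the proposed $\alpha$ is precisely that the relevant ratio collapses to the $\|\tilde\Bb_1\|_F^2$-free bound $1/T$, which is why no case analysis is needed. The single degenerate point $\|\tilde\Bb_1\|_F^2=0$ is harmless, since then the correction is identically zero and ${\bf S}=p^{-1}\Bb^\top\Bb$ exactly.
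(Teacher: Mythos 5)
Your proof is correct, and it rests on the same identity the paper uses, namely $\Db_2^{-1}=\Ib_p-\frac{\alpha}{1+\alpha}\Ub_1\Ub_1^\top$ and hence ${\bf S}=p^{-1}\Bb^\top\Bb-\frac{\alpha}{1+\alpha}p^{-1}\tilde\Bb_1^\top\tilde\Bb_1$. Where you diverge is in how the correction term is shown to vanish: the paper splits into two cases, arguing that if $p^{-1}\|\tilde\Bb_1\|_F^2=o(1)$ the factor $p^{-1}\tilde\Bb_1^\top\tilde\Bb_1$ itself is negligible, while if $p^{-1}\|\tilde\Bb_1\|_F^2\gtrsim 1$ then $\alpha=o(1)$ kills the prefactor. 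You instead observe the single uniform bound $\frac{\alpha}{1+\alpha}\cdot\frac{\|\tilde\Bb_1\|_F^2}{p}=\frac{\|\tilde\Bb_1\|_F^2}{T\|\tilde\Bb_1\|_F^2+p}\le T^{-1}$, combined with $\lambda_{\max}(\tilde\Bb_1^\top\tilde\Bb_1)\le\mathrm{tr}(\tilde\Bb_1^\top\tilde\Bb_1)$, and then invoke Weyl's inequality. This buys two things: it removes the case analysis entirely, and it is strictly more complete, since the paper's dichotomy is not exhaustive in its own notation (a sequence $p^{-1}\|\tilde\Bb_1\|_F^2$ can be neither $o(1)$ nor bounded below by a positive constant for all large $p,T$), whereas your $T^{-1}$ bound covers every regime at once and even gives an explicit perturbation rate. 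The final transfer of distinctness and the lower bound $c_2/2$ via Assumption C matches the paper's conclusion.
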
	
Consequently, the consistency for Projection based estimator depends on the size of $p^{-1}\|\tilde\Bb_1\|_F^2$, which corresponds to the shrinkage bias. When the projection is reasonable and precise, $p^{-1}\|\tilde\Bb_1\|_F^2$ would be small, i.e., $p^{-1}\|\tilde\Bb_1\|_F^2=o(T^{-1})$, and the consistent rate for the estimated loadings and common components can be improved. It should also be  pointed out that even under the worst case $p^{-1}\|\tilde\Bb_1\|_F^2\gg T^{-1}$, we still have $h(\Db_2)\le cT^{-1}$ for some constant $c$, which further indicates the estimators would still be efficient. Actually, the new approach still tends to outperform conventional PCA method even in this case based on our simulations.

As for Laplacian penalty, the discussion is  more complicated. Plug in $\Db_1=\Ib_p+\alpha\bLambda$, then $h(\Db_1)$ has the following form:
\[
h(\Db_1)=\frac{1}{p}\sum_{j=1}^{p}\frac{\alpha^2\tau_j^2\|\tilde\bb_j\|^2}{(1+\alpha\tau_j)^2}+\frac{1}{pT}\sum_{j=1}^{p}\frac{1}{(1+\alpha\tau_j)^2}.
\]
By taking derivative to $\alpha$, we have
\begin{equation}\label{equ4.2}
\frac{\partial h(\Db_1)}{\partial \alpha}=p^{-1}\sum_{j=1}^{p}\frac{2\tau_j(\alpha\tau_j\|\tilde \bb_j\|^2-T^{-1})}{(1+\alpha\tau_j)^3}.
\end{equation}
Hence, there may be multiple local minimum points for $h(\Db_1)$, making it impossible to find the ``optimal" $\alpha$. However, $h(\Db_1)$ is always monotonically decreasing when $\alpha\le\beta$, where $\beta= (T\max_{j}\{\tau_j\|\bb_j\|^2\})^{-1}$. It implies the estimation error can be smaller than conventional PCA solutions when $\alpha\in (0,\beta]$. We consider a sub-optimal tuning parameter $\alpha$ by taking $\alpha=\beta$. Actually $\beta$ is also the exact global minimum point under the special case where $\tau_j\|\tilde\bb_j\|^2$ are  identical for all $\tau_j\ne 0$. By taking $\alpha=\beta$, we then have
\[
h(\Db_1)\le\frac{1}{pT}\sum_{j=1}^{p}\frac{1}{1+\alpha\tau_j}.
\]
Parallelly,
 we also have the following corollary which  ensures that Assumption E holds with this sub-optimal $\alpha$.
\begin{corollary}\label{cor2}
	Assumption E always holds with $\alpha=(T\max_{j}\{\tau_j\|\bb_j\|^2\})^{-1}$ for Laplacian penalty as long as Assumption C holds.
\end{corollary}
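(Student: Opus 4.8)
The plan is to verify the two requirements of Assumption E for ${\bf S}=p^{-1}\Bb^\top\Db_1^{-1}\Bb$ under the prescribed $\alpha$, namely the lower bound $\lambda_r({\bf S})\ge c$ and the strict separation $\lambda_1({\bf S})>\cdots>\lambda_r({\bf S})$. The guiding idea is to treat ${\bf S}$ as a vanishing perturbation of $p^{-1}\Bb^\top\Bb$, whose spectrum is pinned down by Assumption C, and to show that this perturbation is controlled \emph{uniformly} precisely because of the specific choice of $\alpha$.

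First I would make the shrinkage explicit in the eigenbasis of $\mathcal{L}_n$. Writing $\Db_1^{-1}=\Ub(\Ib_p+\alpha\bLambda)^{-1}\Ub^\top$ and letting $\tilde\bb_j^\top$ denote the $j$-th row of $\tilde\Bb=\Ub^\top\Bb$, a direct computation gives
\[
\frac{1}{p}\Bb^\top\Bb-{\bf S}=\frac{1}{p}\Bb^\top(\Ib_p-\Db_1^{-1})\Bb=\frac{1}{p}\sum_{j=1}^{p}\frac{\alpha\tau_j}{1+\alpha\tau_j}\tilde\bb_j\tilde\bb_j^\top=:\bm{\Delta}.
\]
Since the normalized Laplacian is positive semidefinite, all $\tau_j\ge 0$, so every summand is positive semidefinite and hence $\bm{\Delta}$ is positive semidefinite.

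The key step, where the prescribed $\alpha$ does the work, is to bound $\bm{\Delta}$ in spectral norm. Using $\alpha\tau_j/(1+\alpha\tau_j)\le\alpha\tau_j$ together with $\|\bm{\Delta}\|\le\text{tr}(\bm{\Delta})$ for a positive semidefinite matrix,
\[
\|\bm{\Delta}\|\le\text{tr}(\bm{\Delta})\le\frac{\alpha}{p}\sum_{j=1}^{p}\tau_j\|\tilde\bb_j\|^2\le\alpha\max_j\{\tau_j\|\tilde\bb_j\|^2\}=\frac{1}{T},
\]
where the final equality is exactly the definition $\alpha=(T\max_j\{\tau_j\|\tilde\bb_j\|^2\})^{-1}$ (the truncated-loading quantity $\tilde\bb_j$ being the one appearing in the monotonicity computation (\ref{equ4.2})). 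Thus the per-direction shrinkage is never allowed to exceed $T^{-1}$.

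Finally I would apply Weyl's inequality. From ${\bf S}=p^{-1}\Bb^\top\Bb-\bm{\Delta}$ and $\|\bm{\Delta}\|\le T^{-1}\to 0$ we obtain $\lambda_r({\bf S})\ge\lambda_r(p^{-1}\Bb^\top\Bb)-\|\bm{\Delta}\|\to\lambda_r(\bSigma_B)\ge c_2>0$ by Assumption C, which is the lower bound. For the strict separation, the same estimate gives $\|{\bf S}-\bSigma_B\|\le\|\bm{\Delta}\|+\|p^{-1}\Bb^\top\Bb-\bSigma_B\|\to 0$, so $\lambda_k({\bf S})\to\lambda_k(\bSigma_B)$ for each $k$, and since the limiting eigenvalues are distinct by Assumption C, the eigenvalues of ${\bf S}$ are eventually strictly ordered. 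I expect the only delicate point to be the \emph{uniformity} of the trace bound: were $\alpha$ not normalized by $\max_j\{\tau_j\|\tilde\bb_j\|^2\}$, the product $\alpha\tau_1$ could diverge and the leading loading directions would be over-shrunk, destroying the lower bound on $\lambda_r({\bf S})$. The argument mirrors that for Corollary \ref{cor1}, the only change being the explicit shrinkage weights $\alpha\tau_j/(1+\alpha\tau_j)$.
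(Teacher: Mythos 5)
Your proposal is correct and follows essentially the same route as the paper: both write ${\bf S}=p^{-1}\Bb^\top\Bb-p^{-1}\Bb^\top(\Ib_p-\Db_1^{-1})\Bb$ and bound the trace of the positive semidefinite shrinkage term by $\alpha\max_j\{\tau_j\|\tilde\bb_j\|^2\}=T^{-1}$, so that ${\bf S}\rightarrow\bSigma_B$ and Assumption E follows from Assumption C. Your writeup is slightly more careful (making the PSD/trace-dominates-norm/Weyl steps explicit and dispensing with the paper's two-case split, which is unnecessary since the bound holds uniformly), but the key idea is identical.
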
	
Consequently, the accuracy of Laplacian based estimator depends on the sizes of $\tau_j\|\tilde\bb_j\|^2$. For the case $\max_{j}\tau_j\|\tilde\bb_j\|^2$ is small and sufficiently large number of $\tau_j$ are valid ($\tau_j$ do not vanish with $p\rightarrow \infty$), the penalized method can be  more reliable than conventional PCA method. Note that small $\max_{j}\tau_j\|\tilde\bb_j\|^2$ implies $\text{tr}(\Bb^\top\mathcal{L}_n\Bb)$  is also small. In addition, the adaptivity preserves for Laplacian based estimator such that it  always performs no worse than conventional PCA estimator even if the network is totally incorrect/misleading. However, the lack of theoretically ``optimal" $\alpha$ makes it challenging to directly compare the Laplacian penalty with Projection penalty. We discuss a special case in our supplementary materials, where the asymptotic properties for both methods can be easily studied and compared.

\subsection{Asymptotic normality }\label{sec4.3}
As we have claimed, the adaptive estimators may bring shrinkage biases to the system, so it's more meaningful to investigate  the trade-off between asymptotic biases and variances. In this section, we study the asymptotic normality of the proposed approaches. Additional assumptions on central limit theorem  are necessary for further discussions.
\begin{description}
	\item[Assumption F1] For any $t\le T$,
		\[
				\frac{1}{\sqrt{p}}\Bb^\top\Db^{-1}\be_t\stackrel{d}{\rightarrow}\mathcal{N}({\bf 0},\Vb_t),
		\]
where $\Vb_t=\lim_{p,T\rightarrow\infty}p^{-1}\text{cov}(\Bb^\top\Db^{-1}\be_t)$, $\bp_{1,t}$ is the $t$-th row of $\Pb_1$. Note that $\Db^{-1}$ contains tuning parameters which may depend on $p,T$;
	\end{description}
\begin{description}
	\item[Assumption F2] For any $j\le p$,
		\[
		\frac{1}{\sqrt{T}}\Fb^\top\Eb\frac{\bd_j^{-1}}{\|\bd_j^{-1}\|}=\frac{1}{\sqrt{T}}\sum_{t=1}^{T}\bF_t\be_t^\top\frac{\bd_j^{-1}}{\|\bd_j^{-1}\|}\stackrel{d}{\rightarrow}\mathcal{N}({\bf 0},\Wb_j),
		\]
		where $\bd_j^{-1}$ is the $j$-th column of the matrix $\Db^{-1}$ and $\Wb_j=\lim_{p,T\rightarrow \infty}T^{-1}\text{cov}(\Fb^\top\Eb\bd_j^{-1}/\|\bd_j^{-1}\|)$.
\end{description}

The Assumption F1 is for the asymptotic normality of estimated factor scores, while Assumption F2 is for the loadings.   Validation of the assumptions is out of the scope of current paper, but we claim neither of  them  is stringent while similar assumptions are adopted for obtaining the limiting distributions of PCA estimators  in \cite{bai2003inferential}.  Asymptotic variances in the above assumptions can be further proved to be bounded and non-degenerate in later theorems.
\begin{theorem}\label{thm4.2}
	Denote the spectral decomposition ${\bf S}=\bGamma_S\bLambda_S\bGamma_S^\top$, with ${\bf S}$ defined in Assumption E. When Assumptions A-E and Assumption F1 hold,  for the estimated  factor scores, we have
	\begin{enumerate}
		\item if $\|\Db^{-1}\|_F/T=o(1)$, then
			\[
		\sqrt{p}(\hat\bF_t-\Hb^\top\bF_t)\stackrel{d}{\rightarrow}\mathcal{N}({\bf 0}, \bLambda_S^{-1}\bGamma_S^\top\Vb_t\bGamma_S\bLambda_S^{-1}),
		\]
		where $\Hb$ is the same as in theorem \ref{thm4.1} and $\Vb_t$ is defined in Assumption F1. It can be shown $\Vb_t=\|\bp_{1,t}\|^2\lim_{p,T\rightarrow\infty}p^{-1}(\Bb^\top\Db^{-1}\Pb_2^\top\Pb_2\Db^{-1}\Bb)$, where $\bp_{1,t}$ is the $t$-th row of $\Pb_1$. $\Pb_1$ and $\Pb_2$ are defined in Assumption D.
		\item 	 if $\|\Db^{-1}\|_F/T\ge O(1)$, then $\hat\bF_t-\Hb^\top\bF_t=O_p\big(\|\Db^{-1}\|_F/(T\sqrt{p})\big)$.
	\end{enumerate}
\end{theorem}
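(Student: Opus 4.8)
The plan is to follow the eigen-analysis route of \cite{bai2003inferential}, adapted to the shrunk Gram matrix $(pT)^{-1}\Xb\Db^{-1}\Xb^\top$. Let $\hat\Vb$ denote the diagonal matrix collecting its $r$ largest eigenvalues, so that the defining relation $(pT)^{-1}\Xb\Db^{-1}\Xb^\top\hat\Fb=\hat\Fb\hat\Vb$ holds. Substituting $\Xb=\Fb\Bb^\top+\Eb$ and using ${\bf S}=p^{-1}\Bb^\top\Db^{-1}\Bb$, the dominant piece is $T^{-1}\Fb{\bf S}\Fb^\top\hat\Fb$, and I would set $\Hb={\bf S}(\Fb^\top\hat\Fb/T)\hat\Vb^{-1}$. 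Rearranging the eigen-equation gives the basic identity
\[
\hat\Fb-\Fb\Hb=\frac{1}{pT}\big(\Eb\Db^{-1}\Bb\Fb^\top+\Fb\Bb^\top\Db^{-1}\Eb^\top+\Eb\Db^{-1}\Eb^\top\big)\hat\Fb\hat\Vb^{-1},
\]
whose $t$-th row I would analyze term by term. That $\Hb^\top\Hb\stackrel{p}{\to}\Ib_r$ follows by left-multiplying $\hat\Fb\approx\Fb\Hb$ by its transpose and invoking $\Fb^\top\Fb/T=\Ib_r$ from Assumption B.

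Next I would pin down the probability limits of the nuisance quantities $\hat\Vb$ and $\tilde\Hb:=\Fb^\top\hat\Fb/T$. Since $T^{-1}\Fb{\bf S}\Fb^\top$ shares the nonzero eigenvalues of ${\bf S}$ and has $\{T^{-1/2}\Fb\bGamma_S\}$ as corresponding eigenvectors, an eigenvalue/eigenvector perturbation argument (using the distinctness in Assumption E together with the consistency rates of Theorem \ref{thm4.1} to bound the perturbation) yields $\hat\Vb\stackrel{p}{\to}\bLambda_S$ and $\hat\Fb\approx\Fb\bGamma_S$, hence $\tilde\Hb\stackrel{p}{\to}\bGamma_S$ and $\hat\Vb^{-1}\tilde\Hb^\top\stackrel{p}{\to}\bLambda_S^{-1}\bGamma_S^\top$. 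The $t$-th row of the first term of the identity equals $p^{-1}\be_t^\top\Db^{-1}\Bb\,\tilde\Hb\hat\Vb^{-1}$, so scaling by $\sqrt p$ and transposing gives $(\hat\Vb^{-1}\tilde\Hb^\top)\,(p^{-1/2}\Bb^\top\Db^{-1}\be_t)$. Assumption F1 supplies $p^{-1/2}\Bb^\top\Db^{-1}\be_t\stackrel{d}{\to}\mathcal N({\bf 0},\Vb_t)$, and a direct covariance computation using $\Eb=\Pb_1\mathcal E\Pb_2$ with independent entries of $\mathcal E$ gives ${\rm cov}(\be_t)=\|\bp_{1,t}\|^2\Pb_2^\top\Pb_2$, which reproduces the stated form $\Vb_t=\|\bp_{1,t}\|^2\lim p^{-1}\Bb^\top\Db^{-1}\Pb_2^\top\Pb_2\Db^{-1}\Bb$.

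The main obstacle is controlling the two remaining terms and identifying which one governs the rate. The cross term $(pT)^{-1}\Fb\Bb^\top\Db^{-1}\Eb^\top\hat\Fb\hat\Vb^{-1}$ is negligible: its $t$-th row carries the factor $\bF_t^\top$ multiplying the average $T^{-1}\sum_s(p^{-1}\Bb^\top\Db^{-1}\be_s)\hat\bF_s^\top$, and since each $p^{-1}\Bb^\top\Db^{-1}\be_s=O_p(p^{-1/2})$ by the scaling of Assumption F1, the contribution is $o_p(p^{-1/2})$. The delicate term is the quadratic-in-noise piece $(pT)^{-1}\Eb\Db^{-1}\Eb^\top\hat\Fb\hat\Vb^{-1}$. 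Splitting $\Eb\Db^{-1}\Eb^\top$ into mean and fluctuation, the mean equals ${\rm tr}(\Db^{-1}\Pb_2^\top\Pb_2)\Pb_1\Pb_1^\top$, so after the $(pT)^{-1}$ scaling the $t$-th row is of order ${\rm tr}(\Db^{-1})/(pT)$; the Cauchy--Schwarz bound ${\rm tr}(\Db^{-1})\le\sqrt p\,\|\Db^{-1}\|_F$ converts this into $O_p(\|\Db^{-1}\|_F/(\sqrt p\,T))$, and the fluctuation contribution is shown to be of smaller order when $p\gtrsim T$. Hence, after scaling by $\sqrt p$, this term is $O_p(\|\Db^{-1}\|_F/T)$, which vanishes exactly when $\|\Db^{-1}\|_F/T=o(1)$ and otherwise dictates the rate. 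This is the step where the tuning parameters hidden in $\Db^{-1}$ interact with $T$, and is what separates the two regimes of the statement.

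Assembling the pieces proves both parts. When $\|\Db^{-1}\|_F/T=o(1)$, the two non-CLT terms are $o_p(1)$ after scaling by $\sqrt p$, so Assumption F1, the limit $\hat\Vb^{-1}\tilde\Hb^\top\stackrel{p}{\to}\bLambda_S^{-1}\bGamma_S^\top$, and Slutsky's theorem give $\sqrt p(\hat\bF_t-\Hb^\top\bF_t)\stackrel{d}{\to}\mathcal N({\bf 0},\bLambda_S^{-1}\bGamma_S^\top\Vb_t\bGamma_S\bLambda_S^{-1})$, establishing part 1; boundedness and non-degeneracy of $\Vb_t$ follow from Assumptions C--E. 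When $\|\Db^{-1}\|_F/T\ge O(1)$, the quadratic-in-noise term dominates, producing the stochastic bound $\hat\bF_t-\Hb^\top\bF_t=O_p(\|\Db^{-1}\|_F/(T\sqrt p))$ of part 2. This is coherent with the aggregate rate $T^{-1}\|\hat\Fb-\Fb\Hb\|_F^2=O_p(p^{-1}+{\rm tr}(\Db^{-2})/(pT^2))$ of Theorem \ref{thm4.1}, since ${\rm tr}(\Db^{-2})=\|\Db^{-1}\|_F^2$.
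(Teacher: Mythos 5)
Your proposal follows essentially the same route as the paper: the same eigen-equation identity with $\Hb={\bf S}(\Fb^\top\hat\Fb/T)\bLambda_r^{-1}$, the same three-term decomposition of the $t$-th row, the limits $\bLambda_r\stackrel{p}{\to}\bLambda_S$ and $T^{-1}\Fb^\top\hat\Fb\stackrel{p}{\to}\bGamma_S$, Assumption F1 for the dominant term, and the quadratic-in-noise term (controlled via the mean/fluctuation split, the paper's Lemma A.12) producing the $O_p(\|\Db^{-1}\|_F/T)$ remainder that separates the two regimes. The only compressed step is the cross term, where the $o_p(p^{-1/2})$ bound needs the extra $T^{-1/2}$ from averaging over $s$ (together with the split $\hat\Fb=(\hat\Fb-\Fb\Hb)+\Fb\Hb$) rather than the pointwise bound $p^{-1}\Bb^\top\Db^{-1}\be_s=O_p(p^{-1/2})$ alone, exactly as the paper does.
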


The first part of Theorem \ref{thm4.2} shows that the biases of the estimated factor scores are asymptotically negligible  when $\|\Db^{-1}\|_F/T=o(1)$. Note that $\|\Db^{-1}\|_F$ is smaller than $\sqrt{p}$ if the penalty is strong (then $\alpha$ shall be large), so the asymptotic normality can also hold for $p\gtrsim T^2$, while the conventional boundary is $p=cT^2$ in \cite{bai2003inferential}. On the other hand, the variances of the new approaches  are asymptotically equivalent to conventional PCA solution. The second part of Theorem \ref{thm4.2} corresponds to the rate $\text{tr}(\Db^{-2})/(pT^2)$ in Theorem \ref{thm4.1}. Since the consistency has  been discussed in previous subsection, we don't go into details here.

\begin{theorem}\label{thm4.3}
	Suppose Assumptions A-E and Assumption F2 hold and adopt the  notations  in Theorem \ref{thm4.2},   then for the estimated factor loadings,  we have
		\[		\frac{\sqrt{T}(\hat\bb_j-\Hb^\top\Bb^\top\bd_j^{-1})}{\|\bd_j^{-1}\|}\stackrel{d}{\rightarrow}\mathcal{N}({\bf 0},\bGamma_S^\top\Wb_j\bGamma_S),
		\]
		where $\Wb_j$ is defined in Assumption F2 and it can be shown that
		\[
		\Wb_j=\lim_{p,T\rightarrow \infty}\bigg\|\frac{\Pb_2\bd_j^{-1}}{\|\bd_j^{-1}\|}\bigg\|^2\frac{1}{T}{\rm E}(\Fb^\top\Pb_1\Pb_1^\top\Fb).
		\]
\end{theorem}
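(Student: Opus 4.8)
The plan is to follow the loading central-limit argument of \cite{bai2003inferential}, adapted to the shrunk estimator $\hat\Bb=T^{-1}\Db^{-1}\Xb^\top\hat\Fb$. First I would write $\hat\bb_j$ out explicitly: since $\Db^{-1}$ is symmetric its $j$-th row is $\bd_j^{-1\top}\Xb^\top\hat\Fb$, so $\hat\bb_j=T^{-1}\hat\Fb^\top\Xb\bd_j^{-1}$. Substituting $\Xb=\Fb\Bb^\top+\Eb$ and subtracting the centering yields the decomposition
\[
\hat\bb_j-\Hb^\top\Bb^\top\bd_j^{-1}=\big(T^{-1}\hat\Fb^\top\Fb-\Hb^\top\big)\Bb^\top\bd_j^{-1}+T^{-1}\hat\Fb^\top\Eb\bd_j^{-1},
\]
in which the first summand is a rotation/shrinkage bias and the second is the stochastic term expected to drive the limit.

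For the stochastic term I would insert $\hat\Fb=\Fb\Hb+(\hat\Fb-\Fb\Hb)$, giving $T^{-1}\hat\Fb^\top\Eb\bd_j^{-1}=\Hb^\top\,T^{-1}\Fb^\top\Eb\bd_j^{-1}+T^{-1}(\hat\Fb-\Fb\Hb)^\top\Eb\bd_j^{-1}$. Scaling the first piece by $\sqrt{T}/\|\bd_j^{-1}\|$ produces $\Hb^\top\cdot T^{-1/2}\Fb^\top\Eb\big(\bd_j^{-1}/\|\bd_j^{-1}\|\big)$, which converges to $\bGamma_S^\top\mathcal N(\mathbf 0,\Wb_j)=\mathcal N(\mathbf 0,\bGamma_S^\top\Wb_j\bGamma_S)$ by Assumption F2, Slutsky's theorem, and $\Hb\stackrel{p}{\rightarrow}\bGamma_S$ (this last fact, together with convergence of the leading sample eigenvalues to $\bLambda_S$, I would import from the eigen-analysis underlying Theorem \ref{thm4.1}). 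The explicit form of $\Wb_j$ then follows from a routine conditional second-moment calculation: conditioning on $\Fb$ and using $\Eb=\Pb_1\mathcal E\Pb_2$ with the entries of $\mathcal E$ independent, mean zero and unit variance, one obtains $\mathrm{cov}\big(\Fb^\top\Eb\bd_j^{-1}/\|\bd_j^{-1}\|\mid\Fb\big)=\big\|\Pb_2\bd_j^{-1}/\|\bd_j^{-1}\|\big\|^2\,\Fb^\top\Pb_1\Pb_1^\top\Fb$, and averaging over $\Fb$ and passing to the limit gives the stated $\Wb_j$.

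It then remains to show that the bias summand and the replacement error $T^{-1}(\hat\Fb-\Fb\Hb)^\top\Eb\bd_j^{-1}$ are both $o_p(\|\bd_j^{-1}\|/\sqrt{T})$. For this I would use the familiar eigen-equation identity $\hat\Fb-\Fb\Hb=\mathbf R\,\hat\Fb\,\widehat{\bLambda}^{-1}$, with $\mathbf R=(pT)^{-1}(\Fb\Bb^\top\Db^{-1}\Eb^\top+\Eb\Db^{-1}\Bb\Fb^\top+\Eb\Db^{-1}\Eb^\top)$ and $\widehat{\bLambda}$ the diagonal matrix of leading eigenvalues of $(pT)^{-1}\Xb\Db^{-1}\Xb^\top$. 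Expanding each product against $\Bb^\top\bd_j^{-1}$ (resp.\ against $\Eb\bd_j^{-1}$) produces a fixed number of cross terms, each of which I would bound by a second-moment expansion.

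I expect these remainder bounds to be the \textbf{main obstacle}. A naive Cauchy--Schwarz using Theorem \ref{thm4.1}, e.g.\ $\|\hat\Fb-\Fb\Hb\|_F\cdot\|\Eb\bd_j^{-1}\|$, only yields a factor of order $\sqrt{T/p+\mathrm{tr}(\Db^{-2})/(pT)}$ after scaling, which need not vanish since Assumption A permits $T/p\asymp 1$; similarly, controlling the bias through operator norms blows up because $\|\Bb^\top\bd_j^{-1}\|$ can be as large as $O(\sqrt p\,\|\bd_j^{-1}\|)$. Hence these remainders cannot be dispatched by norm inequalities alone. Instead I would carry out the term-by-term variance computations in the spirit of \cite{bai2003inferential}, exploiting the independence of $\mathcal E$ and $\Fb$ (so the cross terms are conditionally mean zero and enjoy genuine averaging) and the sandwich structure $\Eb=\Pb_1\mathcal E\Pb_2$, while carefully tracking how the weight vector $\bd_j^{-1}$ and the shrinkage factor $\Db^{-1}$ enter each moment. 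Verifying that every such term is $o_p(\|\bd_j^{-1}\|/\sqrt T)$, so that both the scaled bias and the replacement error vanish, is the technical heart of the argument.
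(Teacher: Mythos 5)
Your proposal is correct and follows essentially the same route as the paper: the identical three-term decomposition of $\sqrt{T}(\hat\bb_j-\Hb^\top\Bb^\top\bd_j^{-1})$ into the leading term $T^{-1/2}\Hb^\top\Fb^\top\Eb\bd_j^{-1}$ (handled by Assumption F2, $\Hb\stackrel{p}{\rightarrow}\bGamma_S$ and Slutsky, with $\Wb_j$ computed by conditioning on $\Fb$) plus the rotation-bias and replacement remainders. The term-by-term second-moment analysis you flag as the technical heart is exactly how the paper disposes of those remainders, via its Lemmas \ref{lema7} and \ref{lema9} (the refined bounds $T^{-1}\Fb^\top(\hat\Fb-\Fb\Hb)=O_p(T^{-1})$ and $T^{-1}\|\Eb^\top(\hat\Fb-\Fb\Hb)\|_F$), so your diagnosis that naive Cauchy--Schwarz is insufficient matches the paper's treatment.
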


Theorem \ref{thm4.3} is of importance to explain  why the new methods can perform better.  It's clear that the estimated loadings are actually biased due to the shrinkage, so the trade-off between bias and variance accompanies with the selection of tuning parameters. If we consider finite sample case that $T,p$ are given, for any estimated loading vector $\hat\bb_j$, the bias is $\Hb^\top(\bb_j-\Bb^\top\bd_j^{-1})$ while the variance  is roughly $\|\Pb_2\bd_j^{-1}\|^2T^{-1}{\rm E}(\bGamma_S^\top\Fb^\top\Pb_1\Pb_1^\top\Fb\bGamma_S)$. Hence, we can obtain  the mean squared error (MSE) for the whole loading matrix as
\[
\begin{split}
\text{MSE}\approx&\frac{1}{p}\sum_{j=1}^{p}\bigg\{\|\Hb^\top(\bb_j-\Bb^\top\bd_j^{-1})\|^2+\|\Pb_2\bd_j^{-1}\|^2T^{-1}\text{tr}\bigg({\rm E}(\Fb^\top\Pb_1\Pb_1^\top\Fb)\bigg)\bigg\}
\\=&\frac{1}{p}\|\Hb^\top\Bb^\top(\Ib_p-\Db^{-1})\|_F^2+\frac{1}{pT}\|\Pb_2\Db^{-1}\|_F^2\text{tr}\bigg({\rm E}(\Fb^\top\Pb_1\Pb_1^\top\Fb)\bigg).
\end{split}
\]

By Theorem \ref{thm4.1}, we have $\Hb^\top\Hb\stackrel{p}{\rightarrow}\Ib_r$. Further if $\Pb_1$ and $\Pb_2$ are all identity matrices in Assumption D, which forces an independent and homogeneous case,  we will have
\begin{equation}\label{equ4.3}
\text{MSE}\approx\frac{1}{p}\|\Bb^\top(\Ib_p-\Db^{-1})\|_F^2+\frac{1}{pT}\|\Db^{-1}\|_F^2,
\end{equation}
which is exactly $h(\Db)$ defined in equation (\ref{equ4.1}). Therefore, the ``(sub-)optimal" tuning parameters in last subsection  lead to the minimum MSE for the estimated factor loadings. Actually, the MSE of common components may also be minimized since the errors for estimating loadings seem to dominate. If the idiosyncratic errors are not identically and independently distributed, the ``optimal" tuning parameters still approximately minimize the mean squared error when $p,T$ go to infinity simultaneously. However,  so far the selected tuning parameters depend on unknown loading matrix $\Bb$, which is practically inaccessible. In the next section, we aim to explore data driven methods to determine tuning parameters involved in the estimation procedure.

\section{Tuning parameters}\label{sec5}
We illustrate the criteria for selecting tuning parameters in this section. Basically,  Projection penalty contains three tuning parameters, the \emph{truncation parameter} $m$, the \emph{shrinkage parameter} $\alpha$ and the number of common factors $r$. While for Laplacian penalty, only $\alpha$ and $r$ need to be determined. We  start with $\alpha$ (and $m$), which further provides foundations for  determining  $r$.

\subsection{$C_L$ criterion for $\alpha$}
The shrinkage parameter $\alpha$ is critical to the adaptive estimation and must be carefully chosen. On one hand, if the network is actually irrelevant to the loadings,  the selected $\alpha$ should be nearly 0 in order  to control the shrinkage bias. On the other hand, when the loading matrix is accurately captured by the network structure, more powerful penalization is preferred. When $r$ is given,
 cross validation  procedures can be used to select $\alpha$ (and $m$). We refer to the bandwidth selection in \cite{su2017time} for a typical leave-one-out cross validation (CV) method.  However, the generic CV will suffer from the following two major drawbacks, which motivates us to explore more effective approach. Firstly, the target in CV is usually to minimize $\|\Xb-\hat\Fb\hat\Bb^\top\|_F^2$ at each validation step rather than the real estimation error $\|\Fb\Bb^\top-\hat\Fb\hat\Bb^\top\|_F^2$, so it's contaminated by the idiosyncratic errors. The problem is worthy of more attention since the idiosyncratic errors dominate  in the CV error. Secondly, it's well known the CV procedure is usually computationally inefficient, which makes it less appealing in practice.

 We propose a $C_L$ criterion which avoids the mentioned two disadvantages of generic CV. It borrows the idea of Mallows's $C_L$ (\cite{mallows1973some}), a method  for selecting tuning parameter in ridge regression.  Our criterion is based on the finding that if the idiosyncratic errors are identically and independently distributed ($\Pb_1=\sigma_{e}\Ib_T$, and $\Pb_2=\Ib_p$ in Assumption D), it holds that
\[
\frac{1}{pT}\|\Fb\Bb^\top-\hat\Fb\hat\Bb^\top\|_F^2\approx \frac{1}{pT}\|\Xb-\hat\Fb\hat\Bb^\top\|_F^2-\sigma_e^2+\frac{2r\sigma_e^2}{pT}\text{tr}(\Db^{-1}),
\]
where $\hat\Fb$ and $\hat\Bb$ are the estimated factor score and loading matrix, $\sigma_e^2$ is the variance of $e_{tj}$, while $\Db=\Db_1$ for Laplacian penalty and $\Db=\Db_2$ for Projection penalty  as we  have defined.  Hence, if $\sigma_e^2$ is known, the selected tuning parameters should minimize $\|\Xb-\hat\Fb\hat\Bb\|_F^2+2r\sigma_e^2\text{tr}(\Db^{-1})$. This will directly control the estimation error of common components, and remove all the cross validation process.   If $\sigma_e^2$ is unknown, we can simply replace it with a plug-in estimator, i.e.,
\[
\hat\sigma_e^2=\frac{1}{pT}\|\Xb-\hat\Fb_0\hat\Bb_0^\top\|_F^2,
\]
where $\hat\Fb_0$ and $\hat\Bb_0$ are estimated by conventional PCA method in \cite{bai2003inferential}. In real applications, the idiosyncratic errors may be heterogeneous and serially correlated, but our simulation studies  show the proposed $C_L$ criteria  still works well with weak correlations. Nevertheless, the number of common factors must be determined before applying $C_L$ criteria, which is of separate interest and discussed in the next subsection.

\subsection{``One step further"  approach for specifying $r$}\label{sec:onestepfurther}
Extensive literatures have concerned how to consistently determine the number of common factors. Typically the literatures can be divided into the information criterion-based category (\cite{bai2002determining}, \cite{alessi2010improved}) and eigenvalue-based category(\cite{lam2012factor}, \cite{ahn2013eigenvalue}). One simple way is to directly apply the existing approaches and then take the maximum of the different estimators. However, this may lead to a potential loss of the benefits from priori network.  Therefore, we propose a new approach to determine $r$  by utilizing the network information, which can be more accurate than traditional methods.

The new approach is inspired by the eigenvalue ratio-based estimator (``ER") from \cite{ahn2013eigenvalue}. ``ER" is easily implemented and has been proved promising for the cases with serially and cross-sectionally correlated errors in the literature.  Given a predetermined maximum value $k_{\max}$ for $r$, the ``ER" estimator is simply constructed by
\begin{equation}\label{equ5.1}
\hat r_{er}=\arg\max_{1\le k\le k_{\max}}\frac{\lambda_k(\Xb\Xb^\top)}{\lambda_{k+1}(\Xb\Xb^\top)}.
\end{equation}
Since only the leading $r$ eigenvalues of $\Xb\Xb^\top$ are spiked under the factor structure, the ratio in  equation (\ref{equ5.1}) asymptotically achieves maximization with $k=r$.

 Our method starts with  the estimator  $\hat r_{er}$ in the first step. Then, we use the $C_L$ criterion to select $\alpha$ (and $m$) by setting $r=\hat r_{er}$, and construct two new estimators by
\[
\hat r_{lap}=arg\max_{1\le k\le k_{\max}}\frac{\lambda_k(\Xb\Db_1^{-1}\Xb^\top)}{\lambda_{k+1}(\Xb\Db_1^{-1}\Xb^\top)}, \quad\text{and}\quad\hat r_{proj}=arg\max_{1\le k\le k_{\max}}\frac{\lambda_k(\Xb\Db_{2}^{-1}\Xb^\top)}{\lambda_{k+1}(\Xb\Db_{2}^{-1}\Xb^\top)},
\]
where $\Db_1$ and $\Db_{2}$ are defined as before but  calculated with the selected $\alpha$ (and $m$). We refer this approach as ``one step further".
It's straightforward why it works.
The matrices $\Db_1$ and $\Db_2$ shrink the unspiked eigenvalues while the leading $r$ eigenvalues preserve their power as long as the network is correct. Consequently, the gap between the $r$-th  and $(r+1)$-th eigenvalues is enlarged, which naturally improves the efficiency of eigenvalue ratio-based method in finite sample cases. When  $\hat r_{er}$ in the first step is not accurate, the selected $\alpha$ (and $m$) may not be optimal, but our simulation results demonstrate the  ``one step further" approach still performs convincingly. If that's the case, we may repeat the procedure by starting with $\hat r_{lap}$ or $\hat r_{proj}$ rather than $\hat r_{er}$ until it {becomes sufficiently stable.} From the  simulation studies,  we find the estimator is sufficiently accurate with just one step further.

\section{Simulation studies}\label{sec6}
\subsection{Simulation settings}
The adaptive methods in this paper primarily work for the large $p$-small $T$ cases, so we set $r=3$, $T\in\{20,50\}$, while $p\in\{100,150\ldots,300\}$. Totally 4 cases are designed to comprehensively study the numerical performances  of the proposed approaches with multiple network-loading structures. \\
{\bf Case 1: network structure and factor loadings are generated independently}

The lower-triangular of the adjacency matrix $\Ab$ are \emph{i.i.d.} from $Bernoulli(0.5)$.  The loading matrix is generated by  $B_{jk}\stackrel{i.i.d.}{\sim}\mathcal{N}(0,1)$.\\
{\bf Case 2: factor loadings are generated according to the network}

The adjacency matrix $\Ab$ is generated by the same way as in Case 1. Do the spectral decomposition $\mathcal{L}_n=\Ub\bLambda\Ub^\top$. Define $\Zb_1=(\bu_1,\ldots,\bu_{p-50})$ and $\Zb_2=(\bu_{p-49},\ldots,\bu_p)$. Generate a $(p-50)\times r$ matrix $\bGamma_1$ and $50\times r$ matrix $\bGamma_2$  so that $\bGamma_1^\top\bGamma_1=\bGamma_2^\top\bGamma_2=\Ib_r$. Set $\Bb=0.25\sqrt{s}\Zb_1\bGamma_1+\sqrt{p}\Zb_2\bGamma_2,s=p$.\\
{\bf Case 3: grouped variables}

The $p$ variables are randomly divided into 50 groups with equal probabilities. Let $\mathrm A_{ij}=1$ if and only if  $i$ and $j$ are in the same group. Define $d=\#\{\tau_j<0.001,j\le p\}$, $\Zb_1=(\bu_1,\ldots,\bu_{p-d})$, $\Zb_2=(\bu_{p-d+1},\ldots,\bu_p)$. Generate a $(p-d)\times r$ matrix $\bGamma_1$ by setting $\Gamma_{1,jk}=\tau_j^{-1/2}$ and a $d\times r$ column-orthogonal matrix $\bGamma_2$. Define $s=pr/\|\bGamma_1\|_F^2$ and set  $\Bb=0.25\sqrt{s}\Zb_1\bGamma_1+\sqrt{p}\Zb_2\bGamma_2$.\\
{\bf Case 4: ``active'' and ``inactive" variables}

The last 50 variables are labelled as ``isolated", while the left $(p-50)$ are randomly labelled as ``active" and ``inactive" with equal probabilities. For indexes $i,j$ which are both inactive, generate $\mathrm{A}_{ij}$ from $ Bernoulli(0.1)$, while for those $i,j$ which are both active, set $\mathrm{A}_{ij}=1$ and $\mathrm{A}_{ij}=0$ otherwise. Given the generated adjacency matrix $\Ab$, generate $\Bb$ by the same way as Case 3.

In Case 1, the network and the factor loading matrix are generated separately, thus a totally irrelevant/misleading network is utilized to regularize the estimation. An adaptive method should be able to handle this scenario such that the network information is not abused. In Case 2, the loading matrix is deliberately designed to nearly lie in the space of $\Zb_2$, which is in favour  of the Projection penalty.  Case 3 is motivated by real applications, where the variables are often structured into groups, and we assume two nodes are connected if and only if they are in the same group. Case 4 corresponds to another real case, where some variables are more likely to interact while some are less active or even isolated. The matrices $\bGamma_1$ and $\bGamma_2$ correspond to the projected loadings $\tilde \Bb_1$ and $\tilde\Bb_2$, while the ratio of $\|\tilde\Bb_1\|_F^2$ and  $\|\tilde\Bb_2\|_F^2$ is controlled by $s$. In Case 3 and Case 4, we also deliberately  set the term $\tau_j\|\tilde\bb_j\|^2$ to be identical for $j\le d$. It should be noticed that the eigenvectors of $\mathcal{L}_n$ may  not be unique as many eigenvalues are identical. In the simulation study, we simply use the results calculated by \textsf{R} function \texttt{eigen}. Totally there are 3 network structures incorporated in the 4 cases,  and we illustrate their adjacency matrices  in Figure \ref{fig1} when $p=100$.
\begin{figure}[htbp]
	\centering
	\begin{subfigure}{.3\textwidth}
	\centering
	% include third image
	\includegraphics[width=4cm,height=4cm]{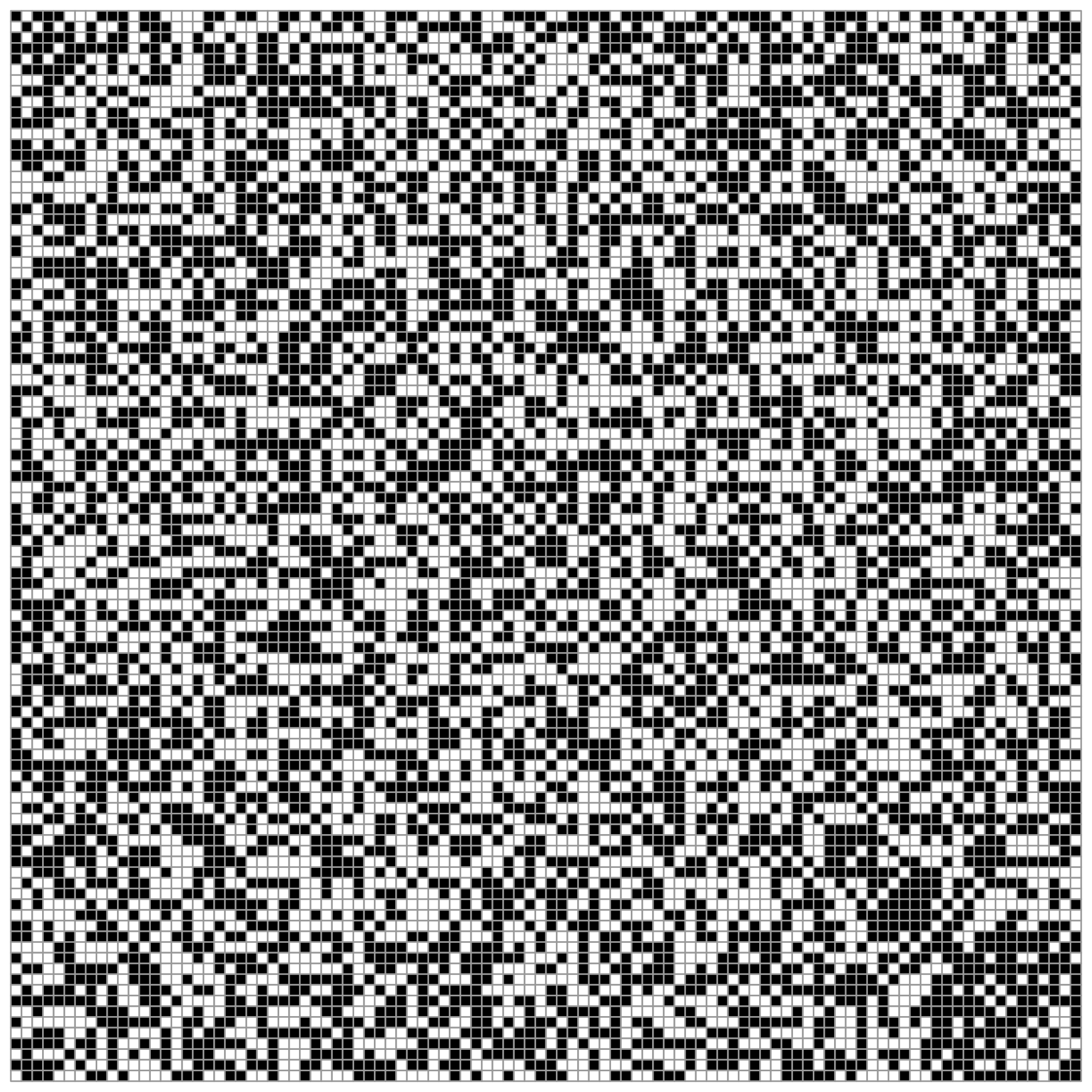}
	\subcaption{Cases 1 and 2}
\end{subfigure}
	\begin{subfigure}{.3\textwidth}
	\centering
	% include third image
	\includegraphics[width=4cm,height=4cm]{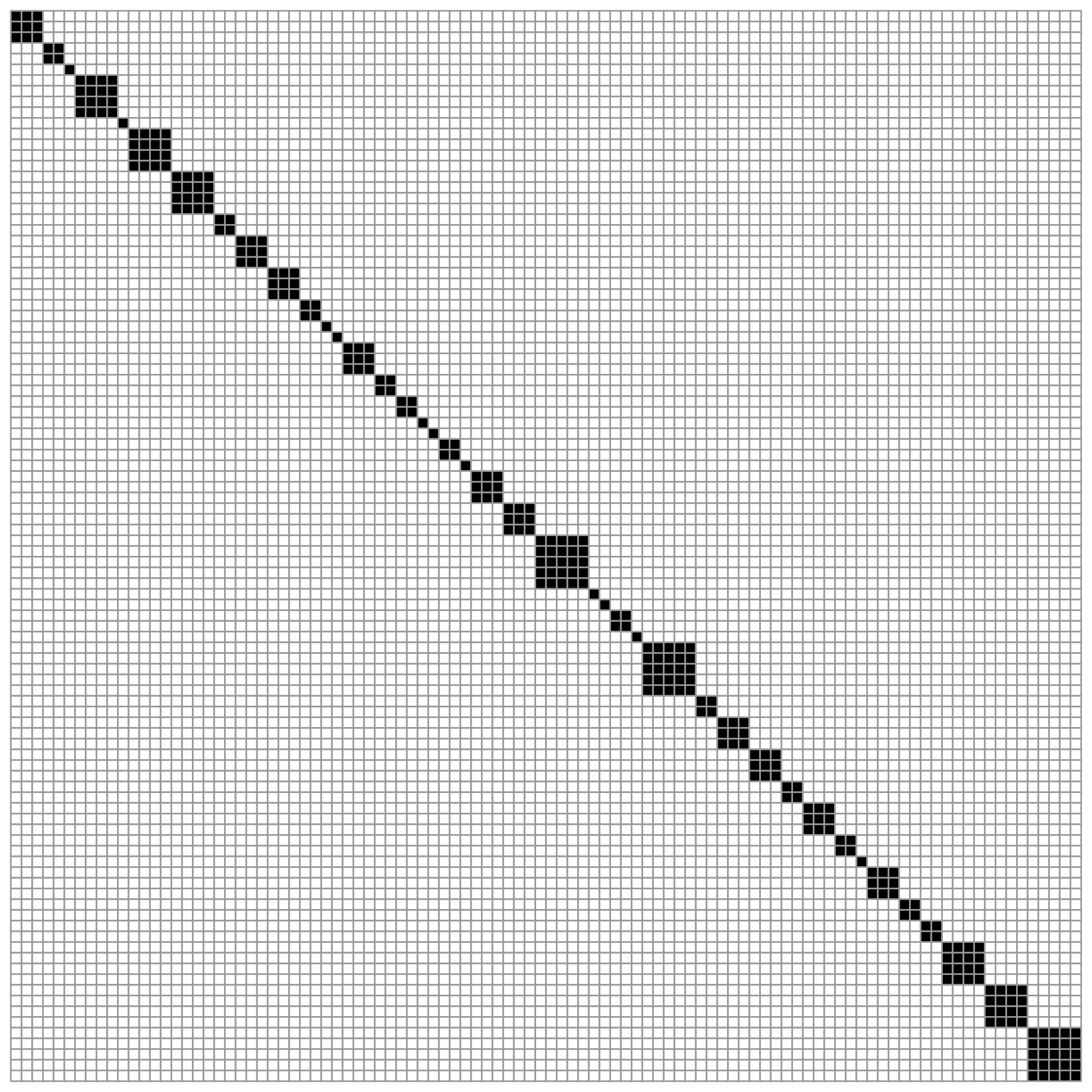}
	\subcaption{Case 3}
\end{subfigure}
	\begin{subfigure}{.3\textwidth}
		\centering
		% include third image
		\includegraphics[width=4cm,height=4cm]{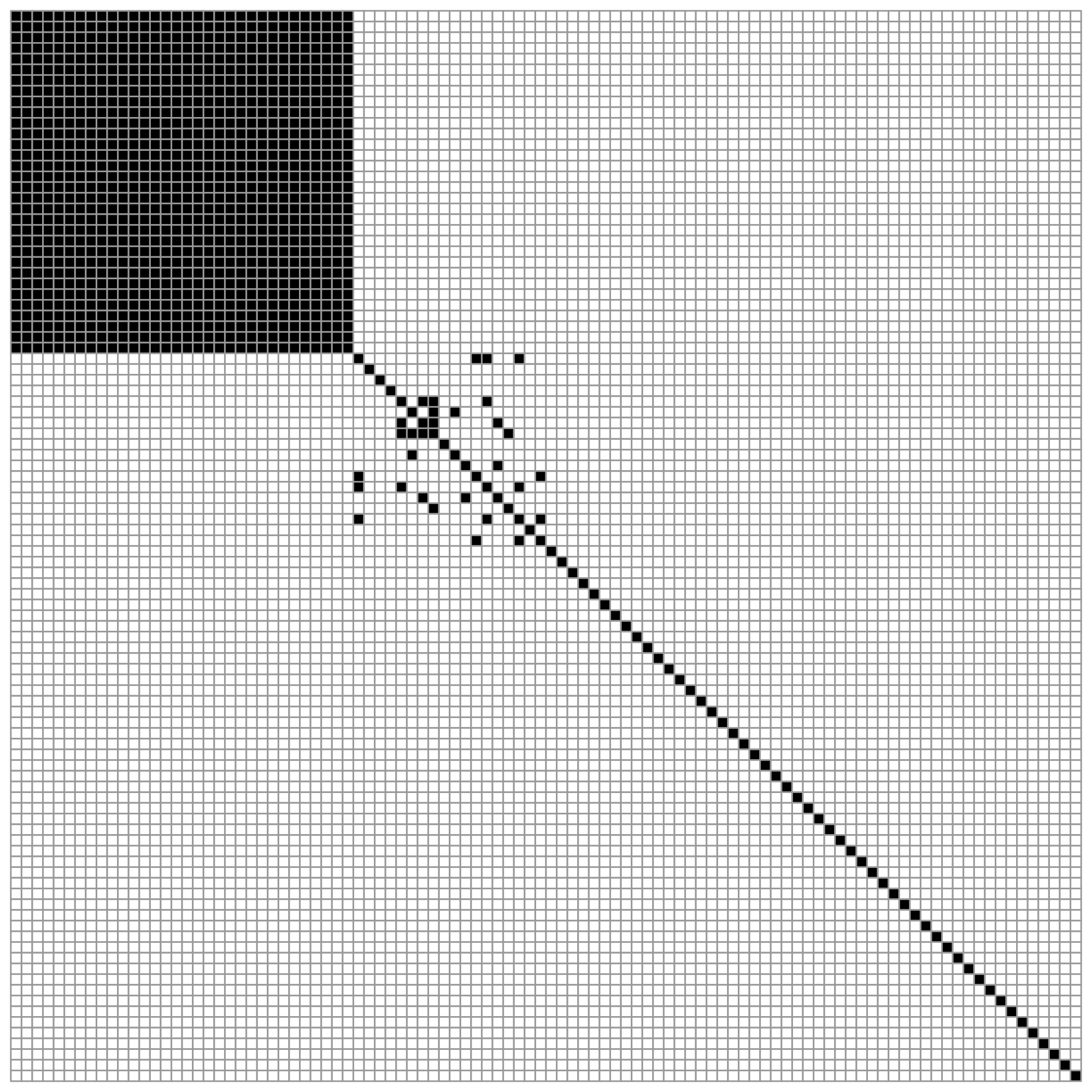}
		\subcaption{Case 4}
	\end{subfigure}
	\caption{The adjacency matrices of the 3 network structures considered in the simulation studies.}
	\label{fig1}
\end{figure}

The factor scores and idiosyncratic errors are always generated based on the following  process. For any $k\le r$, the corresponding factor scores are generated by AR(1) process, specifically by $f_{tk}=0.2f_{t-1,k}+v_{tk}$, where $v_{tk}\stackrel{i.i.d.}{\sim}\mathcal{N}(0,1)$. The idiosyncratic error matrix is produced by $\Eb=\Pb_1\mathcal{E}\Pb_2$, where $\mathcal{E}_{tj}\stackrel{i.i.d.}{\sim}\mathcal{N}(0,\sigma_e^2)$, $\Pb_1$ and $\Pb_2$ are both banded matrices with bandwidth equal to 2 and all the non-zero-off-diagonal entries equal to 0.2. When applying $C_L$ criterion, the parameters $\alpha$ and $m$ are selected from grid-search with $\alpha\in\{p,\frac{1}{b}-1\}$ for $b\in\{0.05,0.10,\ldots,1\}$ and $m\in\{p^{0.1},\ldots,p^{0.9}\}$.

\subsection{Estimation accuracy for common components}\label{sec6.3}
In this section we set $\sigma_e^2=1$ and investigate the estimation accuracy for the common components by our adaptive Laplacian and Projection penalized approaches. The PCA method in \cite{bai2003inferential} is taken as a benchmark. We evaluate their performances by empirical mean squared error (MSE),  which is $(pT)^{-1}\|\hat\Fb\hat\Bb^\top-\Fb\Bb^\top\|_F^2$. All the simulation settings are repeated 500 times and the average MSE is reported in Figure \ref{fig2} when $T=50$. The $T=20$ case leads to similar findings. More detailed results are presented in Section C of the supplementary materials.
\begin{figure}[htbp]
	\begin{subfigure}{.24\textwidth}
	\centering
	% include third image
	\includegraphics[width=4.3cm,height=4.3cm]{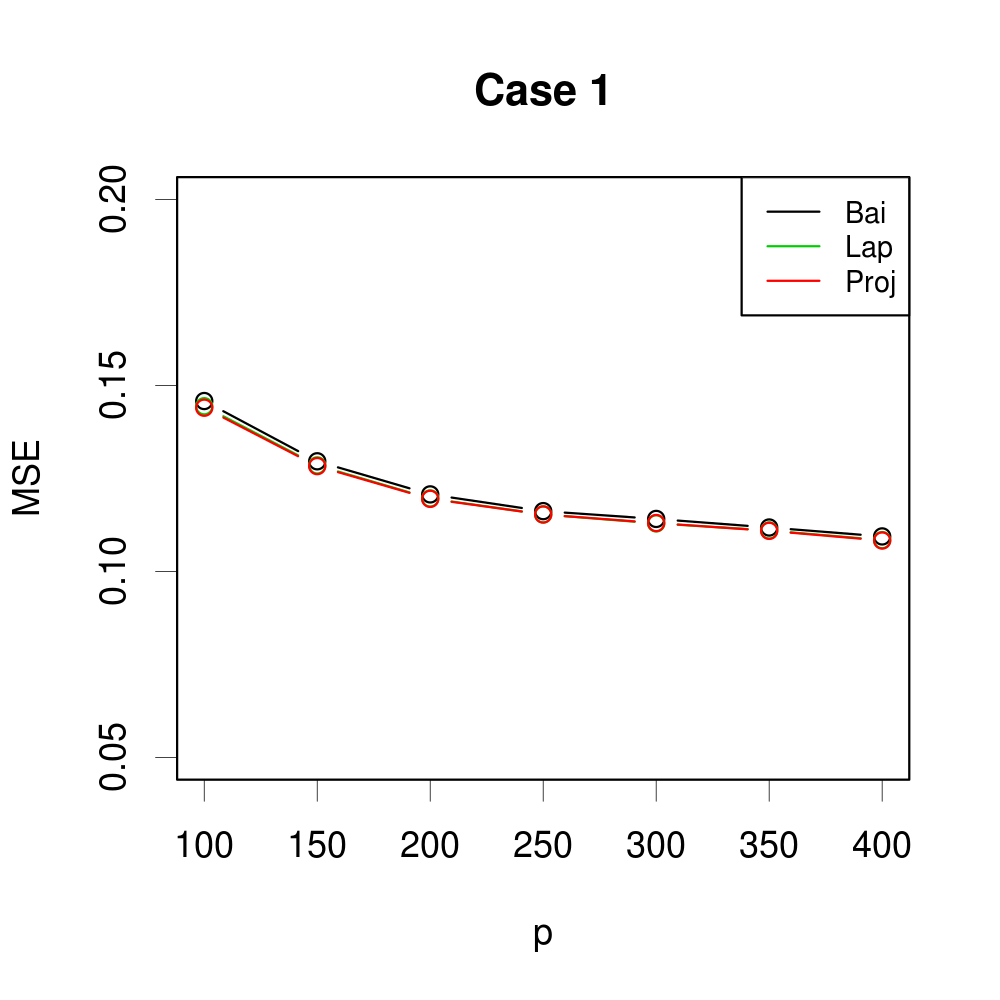}
\end{subfigure}
	\begin{subfigure}{.24\textwidth}
	\centering
	% include third image
	\includegraphics[width=4.3cm,height=4.3cm]{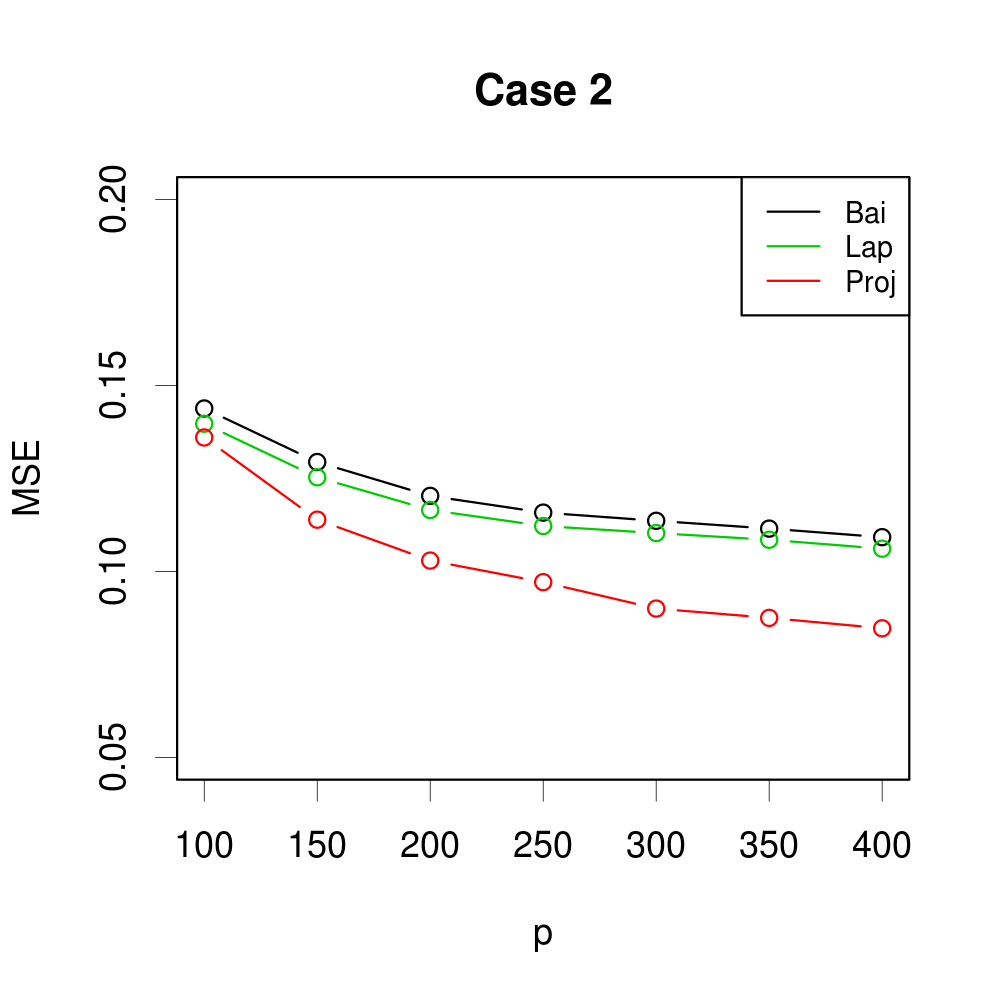}
\end{subfigure}
	\begin{subfigure}{.24\textwidth}
		\centering
		% include third image
		\includegraphics[width=4.3cm,height=4.3cm]{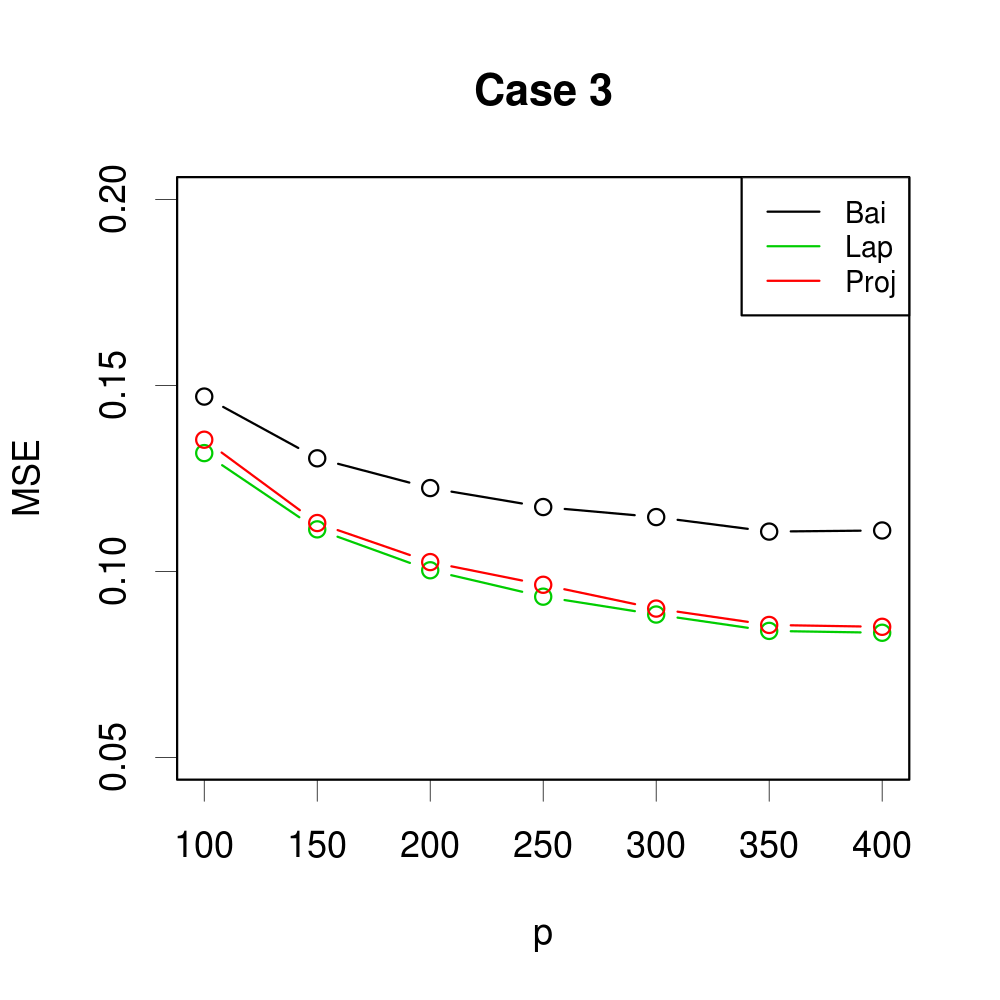}
	\end{subfigure}
	\begin{subfigure}{.24\textwidth}
	\centering
	% include third image
	\includegraphics[width=4.3cm,height=4.3cm]{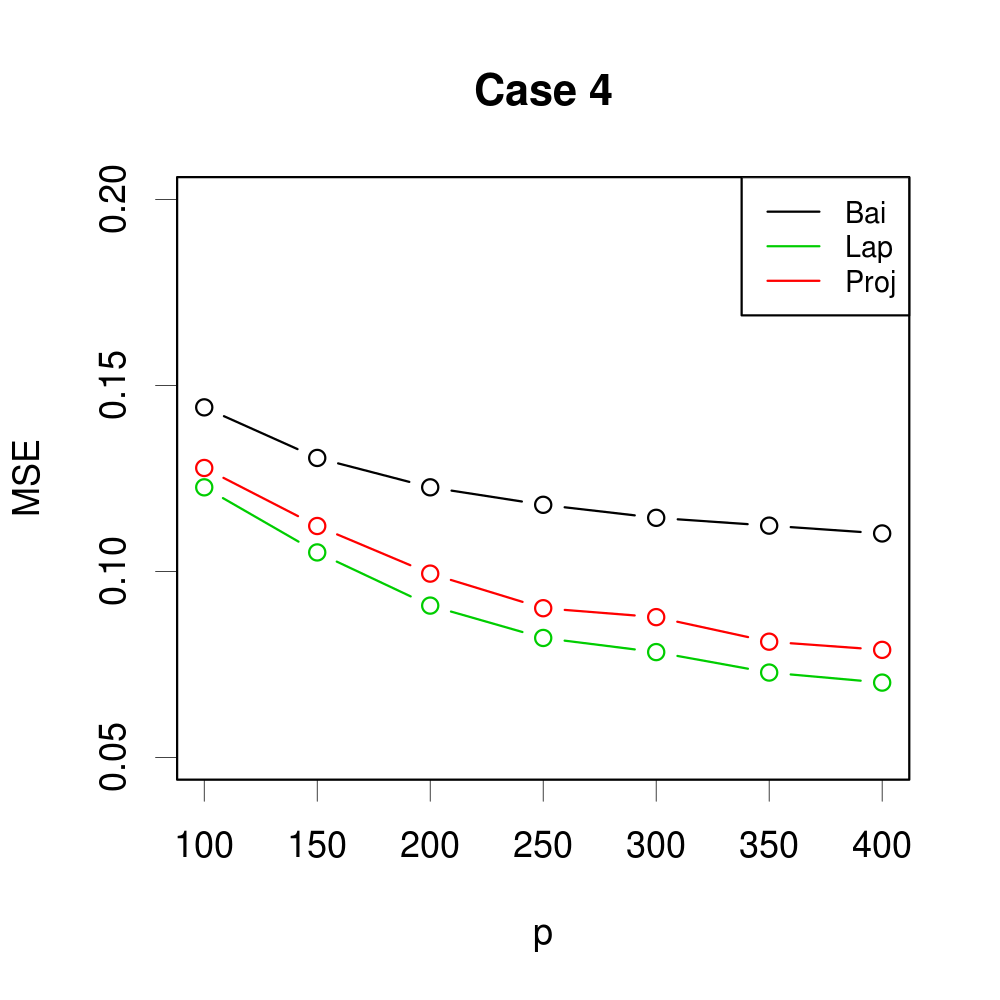}
\end{subfigure}
	\caption{Average errors for estimated common components under 4 cases with 500 replications. ``Bai" (black) is for the PCA method in \cite{bai2003inferential}, ``Lap" (green) is for the adaptive approach with Laplacian penalty, and ``Proj" (red) is for Projection penalty.}
	\label{fig2}
\end{figure}

Figure \ref{fig2} illustrates the  adaptivity and advantages of the new approaches. For Case 1 where the network is totally irrelevant, the proposed methods still give as good results as conventional PCA solution by automatically controlling the penalization. Actually, even in this case both the Laplacian  and Projection penalized methods perform slightly better than PCA method. In Case 2, the Projection penalized method performs the best due to the correct specification of the loading space. The Laplacian penalized method is not as promising as the Projection penalized method  because the assumption of small $\text{tr}(\Bb^\top\mathcal{L}_n\Bb)$ is violated in this case, though it still outperforms the conventional PCA method. In Case 3, both penalized methods lead to relatively smaller and similar estimation errors. This is because the nonzeros eigenvalues $\tau_j$ decrease smoothly, and the Laplacian penalty doesn't benefit a lot from the exactly weighting. We discuss more details about this case in  Section B of the supplementary materials. In Case 4, large eigenvalues of $\mathcal{L}_n$ are significantly distinguished from the small ones, so the Laplacian based method performs the best. It can be concluded from Figure \ref{fig2} that the new approaches are adaptive to multiple network structures. It's worth applying the penalized methods in real application, as one only sacrifices a bit more computation for the potential impressive improvements.

\subsection{Determine the number of factors}
	\begin{table*}[hbpt]
	\begin{center}
		\addtolength{\tabcolsep}{15pt}
		\caption{Selection of $r$ in 500 replications. ``ER" is for the eigenvalue ratio method  in \cite{ahn2013eigenvalue}, ``Lap" and ``Proj" are our `` one step further" approaches with corresponding penalties. $T=50$ and $r=3$.}\label{tab1}
		\renewcommand{\arraystretch}{1}
		\scalebox{1}{ 		\begin{tabular*}{15cm}{lllll}
				\toprule[1.2pt]
Case&$p$&ER&Lap&Proj\\\hline
\multirow{7}*{Case1}&100&2.526(170$|$7)&2.534(169$|$9)&2.534(169$|$9)
\\
&150&2.740(92$|$0)&2.738(92$|$0)&2.740(91$|$0)
\\
&200&2.816(67$|$0)&2.818(66$|$0)&2.820(65$|$0)
\\
&250&2.864(52$|$0)&2.864(52$|$0)&2.868(51$|$0)
\\
&300&2.910(33$|$0)&2.910(33$|$0)&2.910(33$|$0)
\\
&350&2.928(29$|$0)&2.928(29$|$0)&2.928(29$|$0)
\\
&400&2.936(23$|$0)&2.938(22$|$0)&2.938(22$|$0)
\\\hline
\multirow{7}*{Case2}&100&2.802(74$|$4)&2.796(76$|$3)&2.894(44$|$4)
\\
&150&2.884(45$|$0)&2.880(47$|$0)&2.970(12$|$0)
\\
&200&2.912(35$|$0)&2.912(35$|$0)&2.976(10$|$0)
\\
&250&2.946(24$|$0)&2.948(23$|$0)&2.986(7$|$0)
\\
&300&2.950(18$|$0)&2.950(18$|$0)&2.990(3$|$0)
\\
&350&2.960(17$|$0)&2.962(16$|$0)&3.000(0$|$0)
\\
&400&2.964(12$|$0)&2.966(11$|$0)&3.000(0$|$0)
\\\hline
\multirow{7}*{Case3}&100&2.774(86$|$3)&2.906(36$|$1)&2.892(39$|$1)
\\
&150&2.868(46$|$0)&2.968(12$|$0)&2.968(12$|$0)
\\
&200&2.904(36$|$1)&2.984(7$|$0)&2.976(10$|$0)
\\
&250&2.932(28$|$0)&2.984(7$|$0)&2.982(8$|$0)
\\
&300&2.946(22$|$0)&3.000(0$|$0)&2.998(1$|$0)
\\
&350&2.964(12$|$0)&2.994(2$|$0)&2.994(2$|$0)
\\
&400&2.946(22$|$0)&2.996(2$|$0)&2.996(2$|$0)
\\\hline
\multirow{7}*{Case4}&100&2.762(90$|$3)&2.880(48$|$1)&2.878(50$|$2)\\
&150&2.880(46$|$0)&2.968(11$|$0)&2.960(14$|$0)
\\
&200&2.898(39$|$0)&2.982(7$|$0)&2.976(10$|$0)
\\
&250&2.918(31$|$0)&2.992(4$|$0)&2.986(7$|$0)
\\
&300&2.964(16$|$0)&3.000(0$|$0)&3.000(0$|$0)
\\
&350&2.944(20$|$0)&3.000(0$|$0)&2.998(1$|$0)
\\
&400&2.970(12$|$0)&2.998(1$|$0)&2.994(3$|$0)\\
				\bottomrule[1.2pt]		
		\end{tabular*}}

	\end{center}
\end{table*}
We still use Case 1 to Case 4 to empirically study the performances of the ``one step further'' approach introduced in Section \ref{sec:onestepfurther},  except that we set $\sigma_e^2=4$. It makes the problem more challenging due to the low signal-to-noise ratio. We compare the ``one step further" approach (both penalties are considered) with ``ER" criterion.  For all the methods, $k_{\max}$ is always set as 10. We report the results ($T=50$) in Table  \ref{tab1}  by the form $a(b|c)$, where $a$ is the sample mean of the estimated $r$ in 500 replications, $b$ and $c$ are the number of underestimation and overestimation respectively. The  $T=20$ case is presented in the supplementary materials Section C.

The results in Table \ref{tab1} match with our findings in Figure \ref{fig2}. The estimators perform similarly in Case 1 where the network and loadings are independent, while for the other three cases, the ``one step further'' approach  estimates $r$  more accurately. The improvements will be more impressive for  the large $p$ small $T$ scenarios with $T=20$.  Hence, it is  convincing that our method  is more reliable when the network information is  correct. Meanwhile, it's sufficiently adaptive and safe to handle the case with misleading network structure.

\section{Real data analysis}\label{sec7}
We collect weekly returns (calculated by adjusted close price) of companies composing the Standard and Poor's 100 (S\&P100) index  from 2016-01-08 to 2019-01-01. The dataset is downloaded from Yahoo Finance. The Fox Corporation and Alphabet Inc. have two classes of stocks, while the returns of  Berkshire Hathaway are not completely downloaded and thus removed, so the raw dataset is a $157\times 101$ panel. We standardize these time series separately, then use the \textsf{R} package \texttt{factorcpt} in \cite{barigozzi2018simultaneous} to detect change points. It turns out the weeks containing 2016-07-15 and 2018-01-19 might be two change points for the dataset. We only keep the stock returns in 2016 and 2017 for further analysis,  thus  a $104\times 101$ panel remains. We construct the network by sectors, i.e., two companies are linked if and only if they are in the same sector. The sector information is collected with function \texttt{tq\_index} in \textsf{R} package \texttt{tidyquant}.

\subsection{Factor number}
The first step is to specify the number of common factors.   The ``ER"  criterion and our ``one step further" approach with both penalties lead to the same results that  $r=1$. In Figure \ref{fig3}, we plot the leading 20 eigenvalues of the matrix $(pT)^{-1}\Xb\Db^{-1}\Xb^\top$ and the corresponding explained variability, where $\Db=\Ib_p$ for ``ER", $\Db=\Ib_p+\alpha\Ub_1\Ub_1^\top$ for Projection penalty ($\alpha$ and $m$ are selected by $C_L$ with $r=1$). The Laplacian penalty leads to almost the same results as Projection penalty and  is not presented in this figure.
\begin{figure}[hbpt]
	\begin{subfigure}{.5\textwidth}
		\centering
		% include first image
		\includegraphics[width=7.5cm,height=7cm]{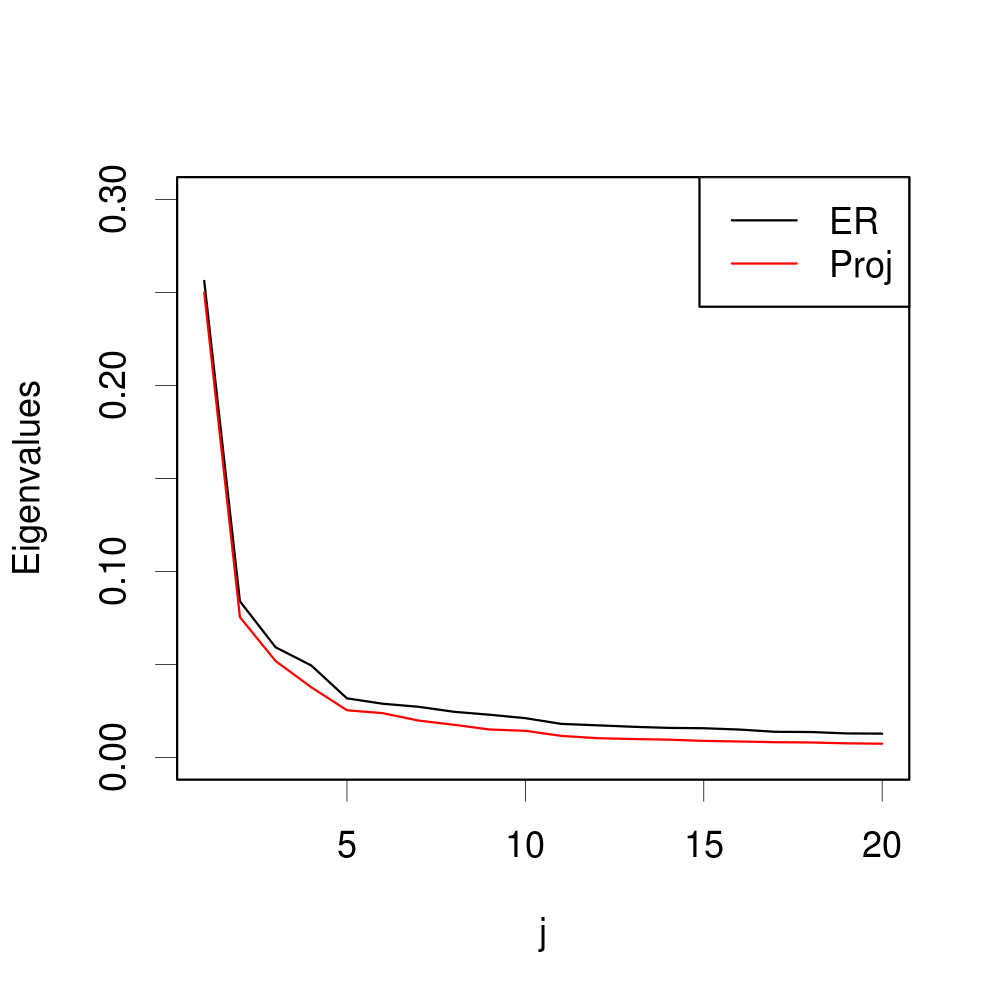}
		\caption{Eigenvalues before and after shrinkage}
	\end{subfigure}
	\begin{subfigure}{.5\textwidth}
		\centering
		% include second image
		\includegraphics[width=7.5cm,height=7cm]{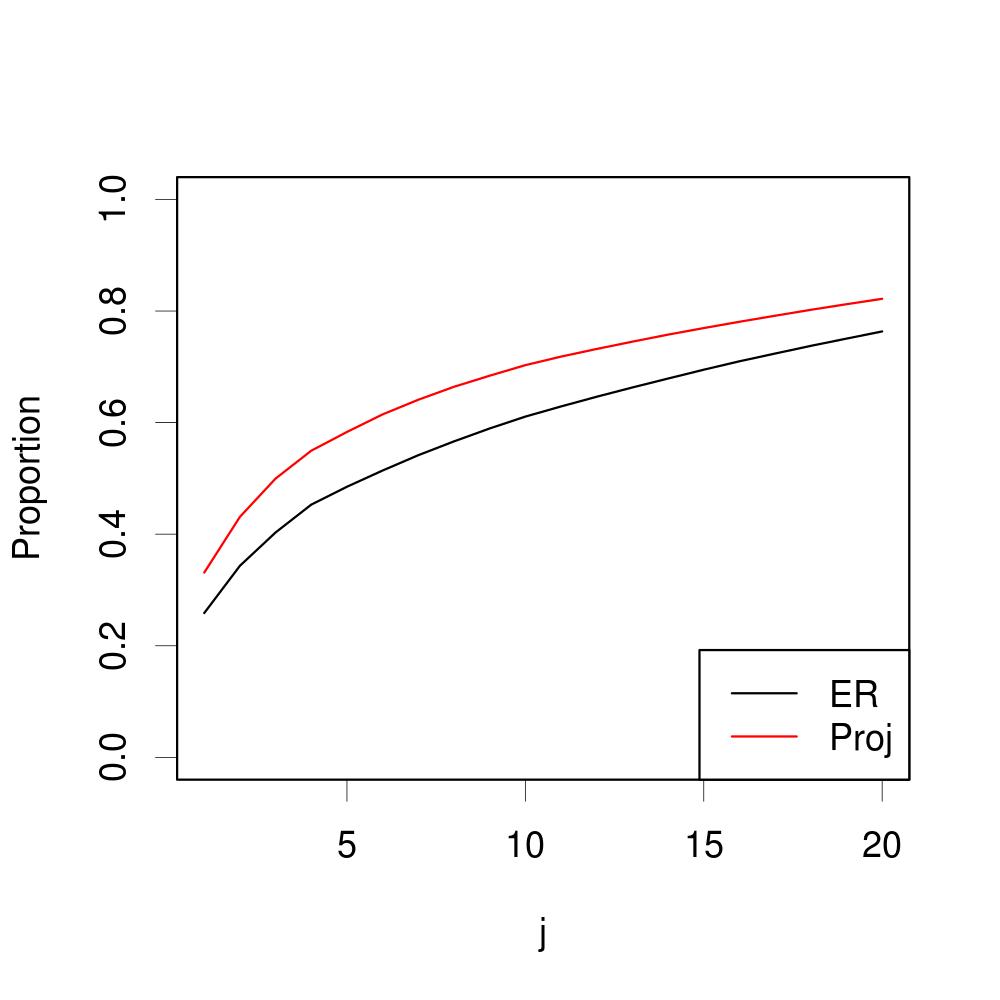}
		\caption{Explained variability}
	\end{subfigure}
	\caption{Eigenvalues for S\&P100 weekly stock return data}
	\label{fig3}
\end{figure}

Figure \ref{fig3} shows that there is a huge gap between the first and second eigenvalues both before and after shrinkage. It implies the existence of a strong factor, which might be the market premium in CAPM model (\cite{ross1977capital}). However, without shrinkage the first eigenvalue can only explain 25.9\% of the variability for this large dataset. Motivated by the Fama-French 5 factor model, we decide to take $r=5$, while without penalty the leading 5 eigenvalues can explain 48.5\% of the variability. The proportion goes up to 58.3\% with the penalty due to the shrinkage of errors, which increases the importance of  factors. We are also interested in the selected $m$ and $\alpha$. By taking $r=5$, applying $C_L$ again leads to $m=11$, which is exactly the total number of sectors for S\&P100 components. It also suggests $\alpha\approx 0.25$ for both Laplacian and Projection penalties, which implies the network can partially explain the interconnectivity of the variables.

\subsection{Estimation and interpretation of factors}
We then apply our methods and also the PCA solution in \cite{bai2003inferential} to estimate the loadings and factor scores given $r=5$, $\alpha=0.25$ and $m=11$. In Figure \ref{fig4}, we plot the estimated loadings (after varimax rotation) of two sectors, Consumer Discretionary and Energy, while the others are presented in Section D of the supplementary materials. As we expect, the new approach forces the loadings to be more ``similar" if they are in the same sector.
\begin{figure}[hbpt]
	\begin{subfigure}{.5\textwidth}
		\centering
		% include first image
		\includegraphics[width=7.5cm,height=5cm]{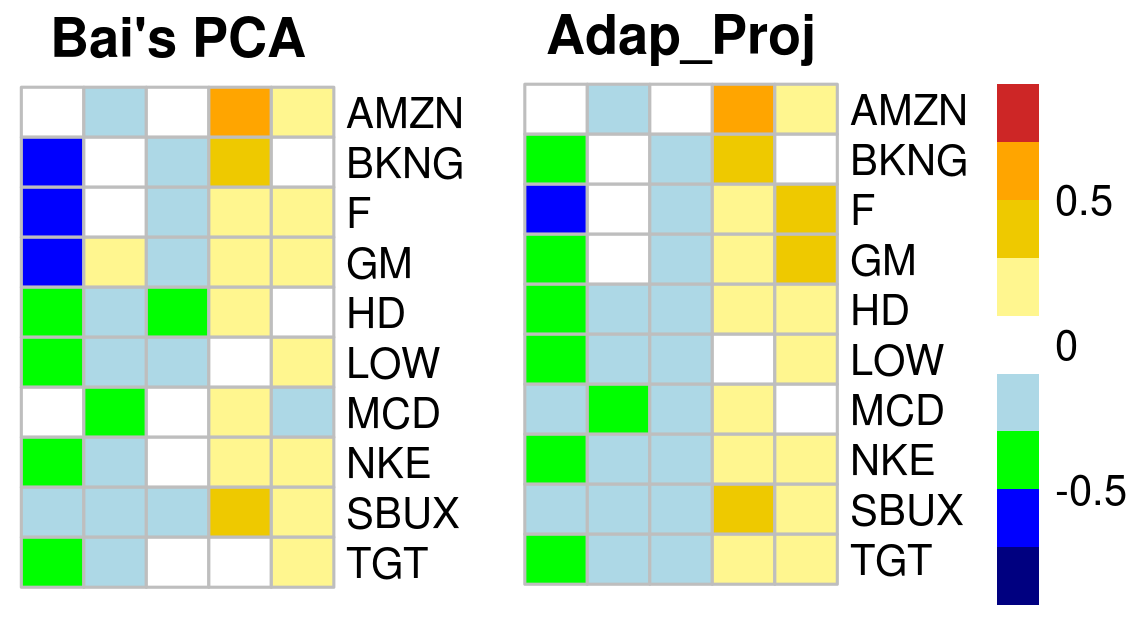}
		\caption{Consumer Discretionary}
	\end{subfigure}
	\begin{subfigure}{.5\textwidth}
		\centering
		% include second image
		\includegraphics[width=7.5cm,height=4cm]{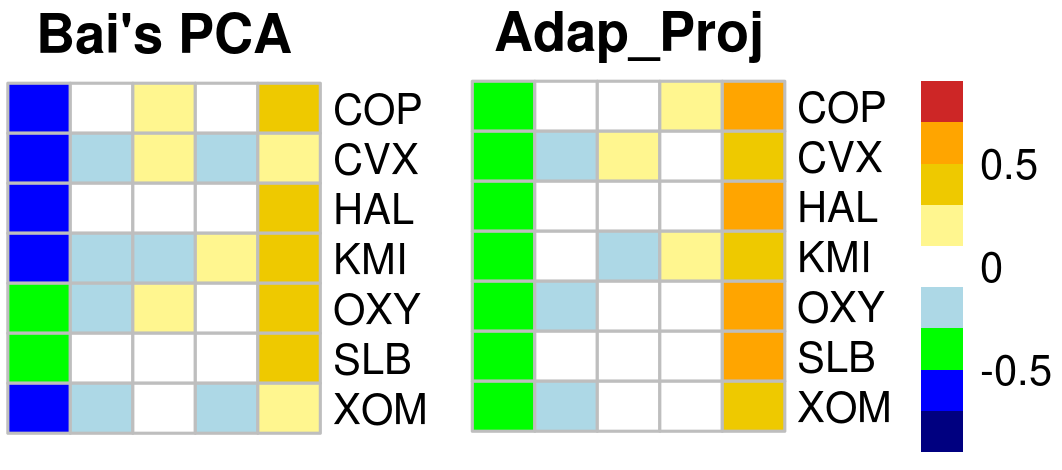}
		\caption{Energy}
	\end{subfigure}
	\caption{Heatmap of the estimated loadings corresponding to two sectors: Consumer Discretionary and Energy. ``Adap\_Proj" is for the proposed Projection based approach, while the Laplacian based shows very close results.}
	\label{fig4}
\end{figure}

In classical factor analysis,  the common factors are usually defined based on nonzero loadings. However, in the large-dimensional cases, each column of the estimated loading matrix can contain a large number of nonzero entries, making it more challenging to define and interpret the corresponding factor. For the penalized methods, the factors can be more easily defined at a ``sector" level instead of ``company" level, because the estimated loadings in the same sector are close to each other.  For example, the Projection based approach in Figure \ref{fig4} shows more clearly that the sector ``Energy" is driven by  factor 1 and factor 5. It's not necessary to exactly specify which companies are driven by the factors. Note that the priori network can be more general than a grouping structure and in such cases we could first do clustering based on the adjacency matrix.

\subsection{Comparison with PCA solution }
It's hard to evaluate the performances of these approaches with real data because the true common components are inaccessible. To compare them, the following  recursive validation procedure is adopted. For any week $t$ in the year 2017, we recursively  estimate a loading matrix $\hat\Bb_t$ using the nearest 52 observations before  $t$ (a $52\times 101$ panel), and get $\hat\bF_t$ by linear regression of $\bx_t$ on $\hat\Bb_t$. In each step, we record the mean squared error $p^{-1}\|\bx_t-\Bb_t\bF_t\|^2$ and the coefficient of determination R$^2$. Then, the average error (Ave\_MSE) and average R$^2$ (Ave\_R$^2$)  are calculated to compare these methods. We also expect a better approach should be more robust and lead to smaller variability for the estimated loading matrix  in the recursive procedure. The variability (Var\_B) is approximatively calculated  by ${1}/{52}\sum_{t=1}^{52}(pr)^{-1}\|\hat\Bb_t-\hat\Bb_{t-1}\|_F^2$. Furthermore, inspired by our $C_L$ criterion,  the accuracy of estimated common components can also be roughly compared by
\[
\frac{1}{pT}\|\Fb\Bb^\top-\hat\Fb\hat\Bb\|_F^2\approx \frac{1}{pT}\|\Xb-\hat\Fb\hat\Bb\|_F^2-\hat\sigma_e^2+\frac{2r\hat\sigma_e^2}{pT}\text{tr}(\Db^{-1}).
\]
 We refer this to the adjusted error (Adj\_error). Table \ref{tab2} summarizes the results,  while Figure \ref{fig3} shows the  R$^2$ and variability of $\Bb$ (calculated by $\frac{1}{pr}\|\hat\Bb_t-\hat\Bb_{t-1}\|_F^2$) at each step $t$.
	\begin{table*}[hbpt]
	\begin{center}
		\addtolength{\tabcolsep}{2pt}
		\caption{Comparison of estimations on S\&P100 weekly stock returns}\label{tab2}
		\renewcommand{\arraystretch}{1.2}
		\scalebox{1}{ 		\begin{tabular*}{10cm}{ccccc}
				
				\toprule[1.2pt]
				Methods&Adj\_error&Ave\_MSE&Var\_B&Ave\_R$^2$\\\hline
				Bai&0.0490&0.5533&0.1596&0.1694
				\\
				Lap&0.0449&0.5438&0.1446&0.1830
				\\
				Proj&0.0339&0.5441&0.0726&0.1826\\
				\bottomrule[1.2pt]		
		\end{tabular*}}
		
	\end{center}
\end{table*}
\begin{figure}[hbpt]
	\begin{subfigure}{0.45\textwidth}
		\centering
		% include first image
		\includegraphics[width=7.5cm,height=5cm]{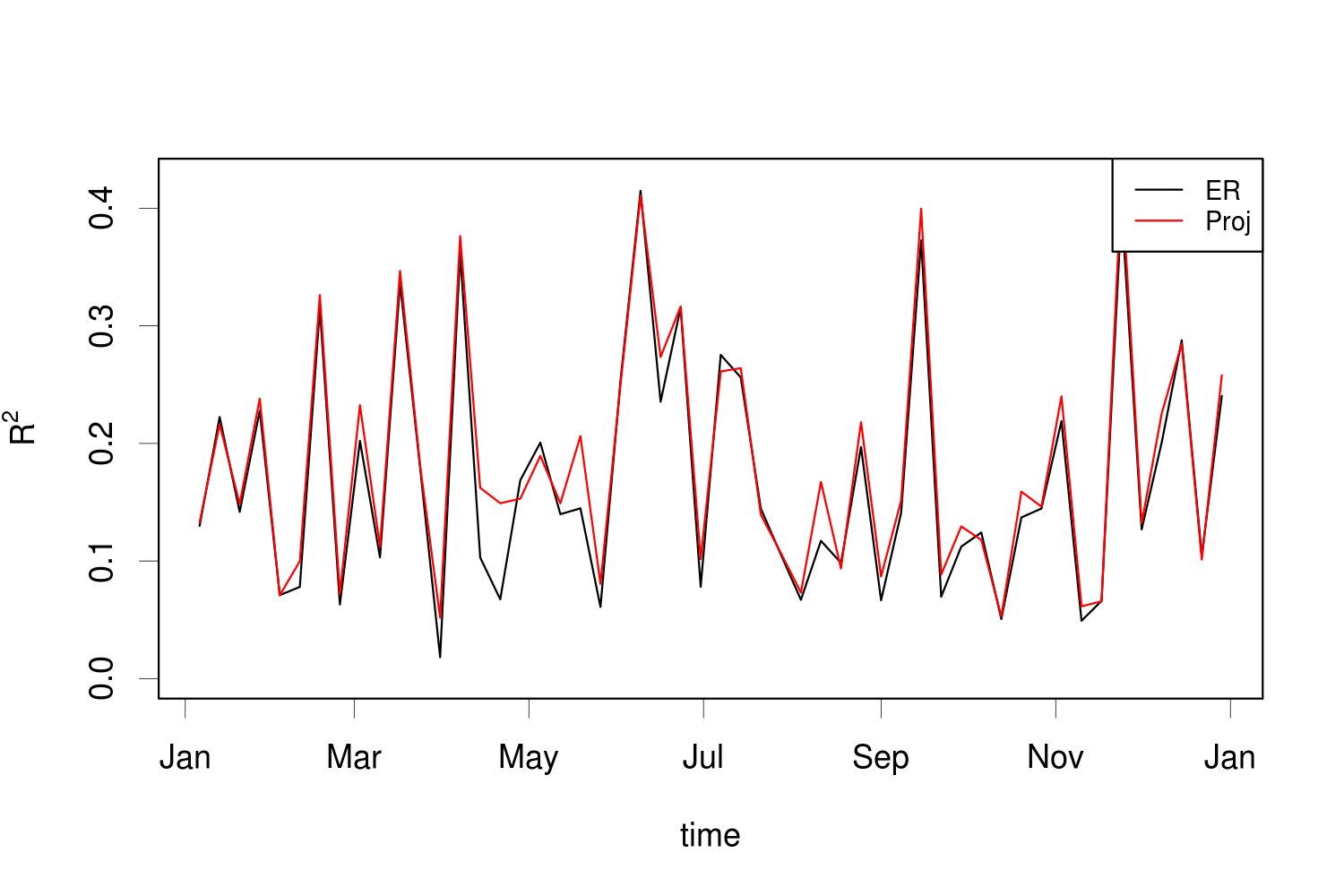}
		\caption{R$^2$}
	\end{subfigure}
	\begin{subfigure}{0.45\textwidth}
		\centering
		% include second image
		\includegraphics[width=7.5cm,height=5cm]{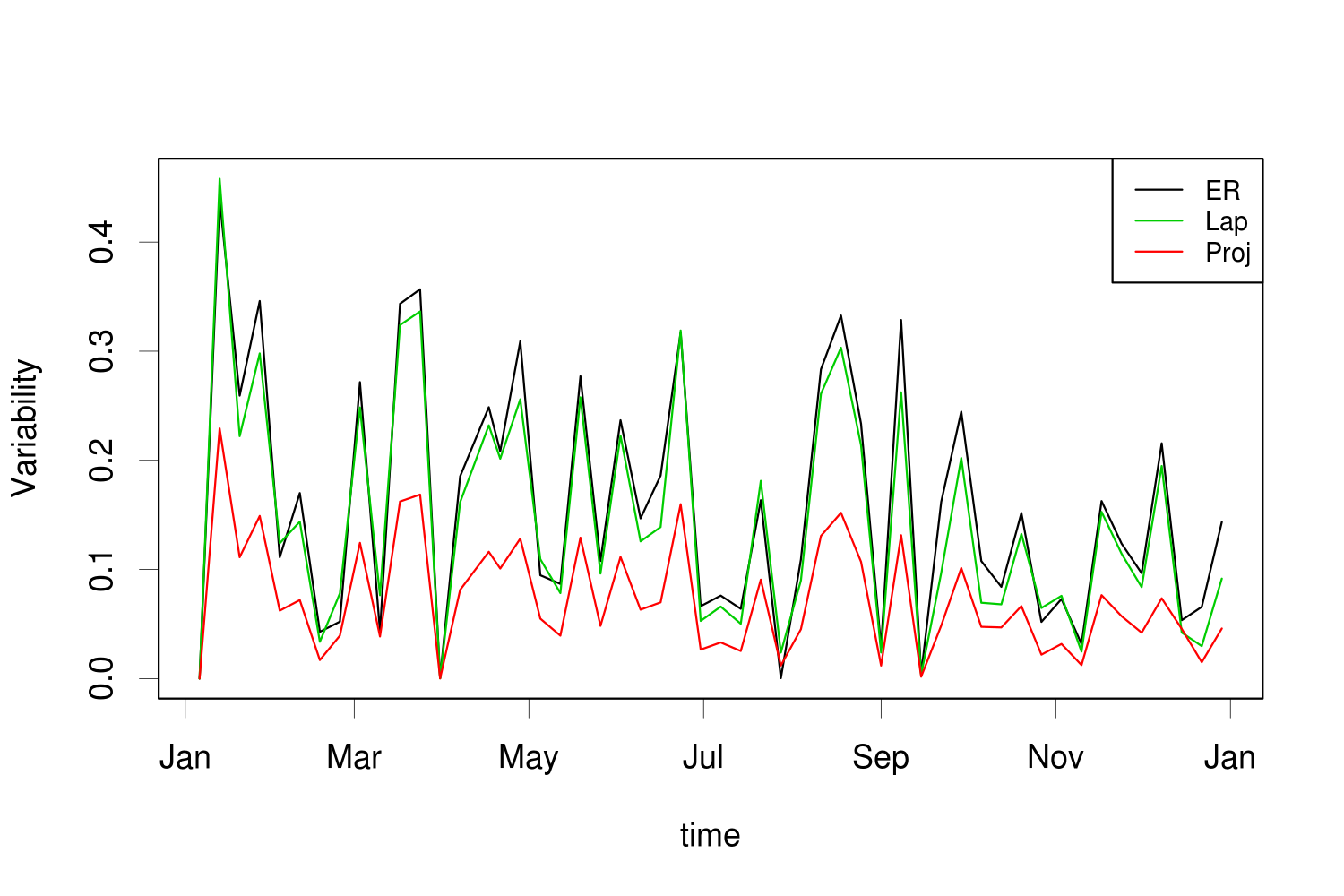}
		\caption{Variability of $\hat\Bb_t$}
	\end{subfigure}
	\caption{The results of the recursive validation process. R$^2$ of Laplacian based method shows nearly the same results as Projection based and is omitted.}
	\label{fig5}
\end{figure}

From  Table \ref{tab2} and Figure \ref{fig5}, we can see that the adaptive methods achieve lower validation errors and higher R$^2$, which indicates the estimated loading matrix and factor scores are more accurate. In addition, the Projection penalty makes the loading matrix much more stable compared with its competitors. To summarize, we conclude that the public companies in S\&P100 index are connected  by the sector network. Our methods contributes to the extraction of common factors and specification of the factor structure by adaptively utilizing the network information.

\section{Conclusions and discussions}\label{sec8}
In this paper,  we assume a priori network is observed in advance to characterize interconnectivity of large scale variables in large-dimensional  approximate factor models. We propose adaptive estimations with Laplacian penalty and Projection penalty based on the network information, and provide closed-form solutions.  Consistency and asymptotic  normality  are studied under very mild conditions, which concludes the new methods can lead to more accurate estimations due to the trade-off between bias and variance. A computationally efficient $C_L$ criterion and a ``one step further" approach are proposed to select tuning parameters and adaptively  determine the number of common factors respectively.

The network linked  framework in this paper can be further extended as follows. Firstly, it may be interesting to combine the Laplacian penalty and Projection penalty by setting $\Db=\Ib_p+\Ub_1\bLambda_1\Ub_1^\top$ for the estimation procedure, where $\Ub_1$, $\bLambda_1$ are the leading $(p-m)$ eigenvector matrix and corresponding eigenvalue matrix. This new penalty is in essence still projection-based, and it  leads to very similar results as Projection penalty for the simulation scenarios, so we do not present it in the paper. Secondly, the network can also be applied to describe the connection of mean vector (if the mean is assumed unknown but not necessarily 0 in the factor model) or the covariance matrix of idiosyncratic errors.  In addition, if the considered factor model are with matrix-value observations as in \cite{wang2019factor}, we can use two network structures to separately regularize  corresponding loading matrices.

On the other hand, the proposed adaptive approach  facilitates many existing factor-adjusted statistical learning problems. For instance, it can be applied to the covariance matrix estimation in \cite{fan2011high} and \cite{fan2013large}, the screening procedure in \cite{wang2012factor}, or the change point localization in \cite{baltagi2017identification}, with more accurately estimated loadings and factor scores. Such extensions deserve separate study and are left as our future work.

\section*{Acknowledgements}
 Long Yu's research is partially supported by China Scholarship Council (No.201806100081). Yong He's research is partially supported by the grant of the National Science Foundation of China (NSFC 11801316),  Natural Science Foundation
of Shandong Province (ZR2019QA002) and National Statistical Scientific Research Project (2018LY63).  Xinsheng Zhang's research is partially supported by the grant of the National Science Foundation of China (NSFC 11571080).

\bibliographystyle{plain}
\bibliography{newref}

\begin{thebibliography}{10}

\bibitem{ahn2013eigenvalue}
Seung~C. Ahn and Alex~R. Horenstein.
\newblock Eigenvalue ratio test for the number of factors.
\newblock {\em Econometrica}, 81(3):1203--1227, 2013.

\bibitem{alessi2010improved}
Lucia Alessi, Matteo Barigozzi, and Marco Capasso.
\newblock Improved penalization for determining the number of factors in
  approximate factor models.
\newblock {\em Statist. Probab. Lett.}, 80(23-24):1806--1813, 2010.

\bibitem{bai2003inferential}
Jushan Bai.
\newblock Inferential theory for factor models of large dimensions.
\newblock {\em Econometrica}, 71(1):135--171, 2003.

\bibitem{bai2012statistical}
Jushan Bai and Kunpeng Li.
\newblock Statistical analysis of factor models of high dimension.
\newblock {\em Ann. Statist.}, 40(1):436--465, 2012.

\bibitem{bai2016maximum}
Jushan Bai and Kunpeng Li.
\newblock Maximum likelihood estimation and inference for approximate factor
  models of high dimension.
\newblock {\em Review of Economics and Statistics}, 98(2):298--309, 2016.

\bibitem{bai2002determining}
Jushan Bai and Serena Ng.
\newblock Determining the number of factors in approximate factor models.
\newblock {\em Econometrica}, 70(1):191--221, 2002.

\bibitem{bai2006determining}
Jushan Bai and Serena Ng.
\newblock Determining the number of factors in approximate factor models,
  errata.
\newblock {\em manuscript, Columbia University}, 2006.

\bibitem{bai2013principal}
Jushan Bai and Serena Ng.
\newblock Principal components estimation and identification of static factors.
\newblock {\em J. Econometrics}, 176(1):18--29, 2013.

\bibitem{bai2019rank}
Jushan Bai and Serena Ng.
\newblock Rank regularized estimation of approximate factor models, accepted.
\newblock {\em J. Econometrics}, 2019.

\bibitem{baltagi2017identification}
Badi~H. Baltagi, Chihwa Kao, and Fa~Wang.
\newblock Identification and estimation of a large factor model with structural
  instability.
\newblock {\em J. Econometrics}, 197(1):87--100, 2017.

\bibitem{barigozzi2018simultaneous}
Matteo Barigozzi, Haeran Cho, and Piotr Fryzlewicz.
\newblock Simultaneous multiple change-point and factor analysis for
  high-dimensional time series.
\newblock {\em J. Econometrics}, 206(1):187--225, 2018.

\bibitem{bates2013consistent}
Brandon~J. Bates, Mikkel Plagborg-M\o~ller, James~H. Stock, and Mark~W. Watson.
\newblock Consistent factor estimation in dynamic factor models with structural
  instability.
\newblock {\em J. Econometrics}, 177(2):289--304, 2013.

\bibitem{chamberlain1983arbitrage}
Gary Chamberlain and Michael Rothschild.
\newblock Arbitrage, factor structure, and mean-variance analysis on large
  asset markets.
\newblock {\em Econometrica}, 51(5):1281--1304, 1983.

\bibitem{chen2019constrained}
Elynn~Y Chen, Ruey~S Tsay, and Rong Chen.
\newblock Constrained factor models for high-dimensional matrix-variate time
  series.
\newblock {\em J. Amer. Statist. Assoc.}, page to appear, 2019.

\bibitem{connor2012efficient}
Gregory Connor, Matthias Hagmann, and Oliver Linton.
\newblock Efficient semiparametric estimation of the {F}ama-{F}rench model and
  extensions.
\newblock {\em Econometrica}, 80(2):713--754, 2012.

\bibitem{fama1992cross}
Eugene~F Fama and Kenneth~R French.
\newblock The cross-section of expected stock returns.
\newblock {\em the Journal of Finance}, 47(2):427--465, 1992.

\bibitem{fama1993common}
Eugene~F Fama and Kenneth~R French.
\newblock Common risk factors in the returns on stocks and bonds.
\newblock {\em Journal of Financial Economics}, 33(1):3--56, 1993.

\bibitem{fan2019farmtest}
Jianqing Fan, Yuan Ke, Qiang Sun, and Wen-Xin Zhou.
\newblock Farmtest: Factor-adjusted robust multiple testing with approximate
  false discovery control.
\newblock {\em J. Amer. Statist. Assoc.}, page to appear, 2019.

\bibitem{fan2011high}
Jianqing Fan, Yuan Liao, and Martina Mincheva.
\newblock High-dimensional covariance matrix estimation in approximate factor
  models.
\newblock {\em Ann. Statist.}, 39(6):3320--3356, 2011.

\bibitem{fan2013large}
Jianqing Fan, Yuan Liao, and Martina Mincheva.
\newblock Large covariance estimation by thresholding principal orthogonal
  complements.
\newblock {\em J. R. Stat. Soc. Ser. B. Stat. Methodol.}, 75(4):603--680, 2013.

\bibitem{fan2015risks}
Jianqing Fan, Yuan Liao, and Xiaofeng Shi.
\newblock Risks of large portfolios.
\newblock {\em J. Econometrics}, 186(2):367--387, 2015.

\bibitem{fan2016projected}
Jianqing Fan, Yuan Liao, and Weichen Wang.
\newblock Projected principal component analysis in factor models.
\newblock {\em Ann. Statist.}, 44(1):219--254, 2016.

\bibitem{fan2018large}
Jianqing Fan, Han Liu, and Weichen Wang.
\newblock Large covariance estimation through elliptical factor models.
\newblock {\em Ann. Statist.}, 46(4):1383--1414, 2018.

\bibitem{han2017determining}
Xu~Han and Mehmet Caner.
\newblock Determining the number of factors with potentially strong
  within-block correlations in error terms.
\newblock {\em Econometric Rev.}, 36(6-9):946--969, 2017.

\bibitem{ji2016coauthorship}
Pengsheng Ji and Jiashun Jin.
\newblock Coauthorship and citation networks for statisticians.
\newblock {\em Ann. Appl. Stat.}, 10(4):1779--1812, 2016.

\bibitem{lam2012factor}
Clifford Lam and Qiwei Yao.
\newblock Factor modeling for high-dimensional time series: inference for the
  number of factors.
\newblock {\em Ann. Statist.}, 40(2):694--726, 2012.

\bibitem{li2019prediction}
Tianxi Li, Elizaveta Levina, Ji~Zhu, et~al.
\newblock Prediction models for network-linked data.
\newblock {\em The Annals of Applied Statistics}, 13(1):132--164, 2019.

\bibitem{liu2019graph}
Jianyu Liu, Guan Yu, and Yufeng Liu.
\newblock Graph-based sparse linear discriminant analysis for high-dimensional
  classification.
\newblock {\em J. Multivariate Anal.}, 171:250--269, 2019.

\bibitem{liu2017structure}
Weidong Liu.
\newblock Structural similarity and difference testing on multiple sparse
  {G}aussian graphical models.
\newblock {\em Ann. Statist.}, 45(6):2680--2707, 2017.

\bibitem{mallows1973some}
Colin~L Mallows.
\newblock Some comments on $c_p$.
\newblock {\em Technometrics}, 15(4):661--675, 1973.

\bibitem{mayrink2013sparse}
Vinicius~Diniz Mayrink and Joseph~Edward Lucas.
\newblock Sparse latent factor models with interactions: analysis of gene
  expression data.
\newblock {\em Ann. Appl. Stat.}, 7(2):799--822, 2013.

\bibitem{ross1976arbitrage}
Stephen~A. Ross.
\newblock The arbitrage theory of capital asset pricing.
\newblock {\em J. Econom. Theory}, 13(3):341--360, 1976.

\bibitem{ross1977capital}
Stephen~A Ross.
\newblock The capital asset pricing model (capm), short-sale restrictions and
  related issues.
\newblock {\em the Journal of Finance}, 32(1):177--183, 1977.

\bibitem{stock2002forecasting}
James~H. Stock and Mark~W. Watson.
\newblock Forecasting using principal components from a large number of
  predictors.
\newblock {\em J. Amer. Statist. Assoc.}, 97(460):1167--1179, 2002.

\bibitem{su2017time}
Liangjun Su and Xia Wang.
\newblock On time-varying factor models: estimation and testing.
\newblock {\em J. Econometrics}, 198(1):84--101, 2017.

\bibitem{tsai2010constrained}
Henghsiu Tsai and Ruey~S. Tsay.
\newblock Constrained factor models.
\newblock {\em J. Amer. Statist. Assoc.}, 105(492):1593--1605, 2010.
\newblock Supplementary materials available online.

\bibitem{tsai2016doubly}
Henghsiu Tsai, Ruey~S. Tsay, Edward M.~H. Lin, and Ching-Wei Cheng.
\newblock Doubly constrained factor models with applications.
\newblock {\em Statist. Sinica}, 26(4):1453--1478, 2016.

\bibitem{wang2019factor}
Dong Wang, Xialu Liu, and Rong Chen.
\newblock Factor models for matrix-valued high-dimensional time series.
\newblock {\em J. Econometrics}, 208(1):231--248, 2019.

\bibitem{wang2012factor}
H.~Wang.
\newblock Factor profiled sure independence screening.
\newblock {\em Biometrika}, 99(1):15--28, 2012.

\bibitem{xia2018multiple}
Yin Xia, Tianxi Cai, and T.~Tony Cai.
\newblock Multiple testing of submatrices of a precision matrix with
  applications to identification of between pathway interactions.
\newblock {\em J. Amer. Statist. Assoc.}, 113(521):328--339, 2018.

\bibitem{yu2016sparse}
Guan Yu and Yufeng Liu.
\newblock Sparse regression incorporating graphical structure among predictors.
\newblock {\em J. Amer. Statist. Assoc.}, 111(514):707--720, 2016.

\end{thebibliography}

	\begin{appendices}
	
\section*{Appendix}

The supplementary document is organized as follows. In section \ref{seca}, we give the detailed proofs of all the main theorems. That is,  the consistency, asymptotic normality, and also the verification of Assumption E.  We design a special case in section \ref{secb}, where the theoretical optimal tuning parameters for Laplacian penalty can also be derived, so we can compare the theoretical results of the proposed two penalized methods. The case also explains the simulation results of case 3 in section 6.2 of the main paper. In section \ref{secc}, we present Tables \ref{tab1}-\ref{tab3} to illustrate detailed simulation results, which supplement section 6 of the main paper. In section \ref{secd}, we present the estimated loadings of the real dataset.

\section{Proofs of Main Theorems}\label{seca}

\begin{lemma}\label{lema1}
	Under the error  Assumption D, for any $p\times p$ symmetric matrix $\Qb$ with $\|\Qb\|\le O(1)$, we have
	\[
	\mathbb{E}|\bepsilon_t^\top\Qb\bepsilon_t-\text{tr}(\Qb)|^2\le (M+2)\|\Qb\|_F^2 ,\quad\mathbb{E}(\bepsilon_t^\top\Qb\bepsilon_s)^2=\|\Qb\|_F^2.
	\]
\end{lemma}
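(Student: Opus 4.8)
The plan is to treat both bounds as direct second-moment computations in the independent entries $\epsilon_{ti}$ making up the rows of $\mathcal{E}$: I expand each (bi)linear form, split the sums into diagonal and off-diagonal index sets, and invoke only $\mathbb{E}(\epsilon_{ti})=0$, $\mathbb{E}(\epsilon_{ti}^2)=1$ and $\mathbb{E}(\epsilon_{ti}^4)\le M$ from Assumption D, together with the symmetry $q_{ij}=q_{ji}$ of $\Qb=(q_{ij})_{p\times p}$. No spectral information about $\Qb$ is actually needed; only $\|\Qb\|_F$ enters the final bounds.

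For the first bound I would write
\[
\bepsilon_t^\top\Qb\bepsilon_t-\text{tr}(\Qb)=\sum_{i}q_{ii}(\epsilon_{ti}^2-1)+\sum_{i\ne j}q_{ij}\epsilon_{ti}\epsilon_{tj}=:A+B,
\]
which already records $\mathbb{E}(\bepsilon_t^\top\Qb\bepsilon_t)=\sum_i q_{ii}=\text{tr}(\Qb)$. Squaring and taking expectations, the cross term $\mathbb{E}(AB)$ vanishes because every summand contains an index appearing exactly once, hence an isolated mean-zero factor independent of the rest. In $\mathbb{E}(A^2)$ only the diagonal $i=k$ terms survive, contributing $\sum_i q_{ii}^2\big(\mathbb{E}(\epsilon_{ti}^4)-1\big)\le (M-1)\sum_i q_{ii}^2$. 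In $\mathbb{E}(B^2)$ the surviving four-index terms are exactly those with $\{i,j\}=\{k,l\}$, which by symmetry of $\Qb$ give $2\sum_{i\ne j}q_{ij}^2$. Adding these and comparing with $(M+2)\|\Qb\|_F^2=(M+2)\big(\sum_i q_{ii}^2+\sum_{i\ne j}q_{ij}^2\big)$ leaves the nonnegative slack $3\sum_i q_{ii}^2+M\sum_{i\ne j}q_{ij}^2\ge0$, which yields the claim.

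For the second identity (with $s\ne t$), cross-row independence makes the computation even cleaner:
\[
\mathbb{E}(\bepsilon_t^\top\Qb\bepsilon_s)^2=\sum_{i,j}\sum_{k,l}q_{ij}q_{kl}\,\mathbb{E}(\epsilon_{ti}\epsilon_{tk})\,\mathbb{E}(\epsilon_{sj}\epsilon_{sl}),
\]
and since $\mathbb{E}(\epsilon_{ti}\epsilon_{tk})\,\mathbb{E}(\epsilon_{sj}\epsilon_{sl})=\delta_{ik}\delta_{jl}$, the whole sum collapses to $\sum_{i,j}q_{ij}^2=\|\Qb\|_F^2$.

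Since the arguments are elementary, there is no real analytic obstacle; the only thing demanding care is the bookkeeping of which index patterns survive each expectation — in particular keeping the diagonal contribution (which carries the fourth moment $M$) separate from the off-diagonal one, and using the symmetry of $\Qb$ to combine the two orderings $(k,l)=(i,j)$ and $(k,l)=(j,i)$ in $\mathbb{E}(B^2)$. It is also worth recording that $M\ge1$ by Jensen's inequality, so $(M-1)\ge0$ and all coefficients are nonnegative, and that the hypothesis $\|\Qb\|\le O(1)$ plays no role in the exact (in)equalities claimed; it is presumably retained only to match the way the lemma is applied later.
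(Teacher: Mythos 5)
Your proof is correct and follows essentially the same route as the paper's: a direct expansion of the quadratic/bilinear forms in the independent entries $\epsilon_{ti}$, with the surviving index patterns identified term by term and the symmetry of $\Qb$ used to combine the two orderings in the off-diagonal part. The only (cosmetic) difference is that you center the quadratic form first into $A+B$, whereas the paper computes $\mathbb{E}(\bepsilon_t^\top\Qb\bepsilon_t)^2$ and subtracts $\text{tr}^2(\Qb)$; both yield the identical variance expression $\sum_i Q_{ii}^2(\mathbb{E}\epsilon_{ti}^4-1)+2\sum_{i\ne j}Q_{ij}^2$.
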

\begin{proof}
	By $\bepsilon_t^\top\Qb\bepsilon_t=\sum_{i,j}Q_{ij}\epsilon_{ti}\epsilon_{tj}$, we have
	\[
	\begin{split}
	\mathbb{E}(\bepsilon_t^\top\Qb\bepsilon_t)=&\sum_{i,j}Q_{ij}\mathbb{E}(\epsilon_{ti}\epsilon_{tj})=\text{tr}(\Qb)\\
	\mathbb{E}(\bepsilon_t^\top\Qb\bepsilon_t)^2=&\sum_{i,j}\sum_{u,v}Q_{ij}Q_{uv}\mathbb{E}(\epsilon_{ti}\epsilon_{tj}\epsilon_{tu}\epsilon_{tv})\\
	=&\sum_{i=j=u=v}Q_{ii}^2\mathbb{E}(\epsilon_{ti}^4)+\sum_{i=j,u=v,i\ne u}Q_{ii}Q_{uu}+\sum_{i=u,j=v,i\ne j}Q_{ij}^2+\sum_{i=v,j=u,i\ne j}Q_{ij}^2\\
	=&\sum_iQ_{ii}^2\bigg(\mathbb{E}(\epsilon_{ti}^4)-1\bigg)+(\sum_{i}Q_{ii})^2+2\sum_{i\ne j}Q_{ij}^2.
	\end{split}
	\]
	Therefore,
	\[
		\mathbb{E}|\bepsilon_t^\top\Qb\bepsilon_t-\text{tr}(\Qb)|^2=	\mathbb{E}(\bepsilon_t^\top\Qb\bepsilon_t)^2-\text{tr}^2(\Qb)\le (M+2)\|\Qb\|_F^2.
	\]
	Similarly for $t\ne s$,
	\[
\begin{split}
\mathbb{E}(\bepsilon_t^\top\Qb\bepsilon_s)^2=&\sum_{i,j}\sum_{u,v}Q_{ij}Q_{uv}\mathbb{E}(\epsilon_{ti}\epsilon_{sj}\epsilon_{tu}\epsilon_{sv})=\sum_{i=u,j=v}Q_{ij}Q_{uv}\mathbb{E}(\epsilon_{ti}\epsilon_{sj}\epsilon_{tu}\epsilon_{sv})=\sum_{i,j}Q_{ij}^2=\|\Qb\|_F^2.
\end{split}
\]
	which concludes the lemma.
\end{proof}

\begin{lemma}\label{lema2}
	Under Assumptions B and D, for any $p\times p$ symmetric matrix $\Qb$ that  $\|\Qb\|\le O(1)$, $T\times T$ matrix $\Pb$ that  $\|\Pb\|\le O(1)$ and $t\le T$, for sufficiently large $p,T$, we have
	\[
	\mathbb{E}\bigg\|\bepsilon_t^\top\Qb\mathcal{E}^\top\Pb\Fb-\text{tr}(\Qb)\bp_{t}^\top\Fb\bigg\|^2\le M^2T\|\Qb\|_F^2.
	\]
	where $\bp_t$ is the $t$-th row of $\Pb$.
\end{lemma}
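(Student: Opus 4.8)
The plan is to expand the row vector $\bepsilon_t^\top\Qb\mathcal{E}^\top$ entrywise and isolate the exact cancellation that creates the $\text{tr}(\Qb)\bp_t^\top\Fb$ term, thereby reducing the whole statement to a second-moment bound on a single centered vector. Writing $\bepsilon_s$ for the $s$-th row of $\mathcal{E}$, the $s$-th entry of $\bepsilon_t^\top\Qb\mathcal{E}^\top$ is exactly $\bepsilon_t^\top\Qb\bepsilon_s$. I would therefore define the $T$-dimensional row vector $\bu^\top$ with entries $u_s=\bepsilon_t^\top\Qb\bepsilon_s-\text{tr}(\Qb)\mathbf{1}\{s=t\}$. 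Since the $t$-th row of $\Pb$ equals $\bp_t^\top$, one checks immediately that $\bepsilon_t^\top\Qb\mathcal{E}^\top\Pb\Fb-\text{tr}(\Qb)\bp_t^\top\Fb=\bu^\top\Pb\Fb$, so the quantity to be controlled is just $\mathbb{E}\|\bu^\top\Pb\Fb\|^2$. The remaining work splits cleanly into a factor-score step (handling $\Pb\Fb$) and an idiosyncratic-error step (handling $\bu$).

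First I would dispose of the factors using the independence of $\mathcal{E}$ and $\Fb$ from Assumption D together with the second-moment bound in Assumption B. Conditioning on $\mathcal{E}$ freezes $\bu$, so the direction $\bu^\top\Pb/\|\bu^\top\Pb\|$ becomes a deterministic unit vector; since $\Fb$ is independent of $\mathcal{E}$, Assumption B applies realization-by-realization and gives $\mathbb{E}[\|\bu^\top\Pb\Fb\|^2\mid\mathcal{E}]\le M\|\bu^\top\Pb\|^2\le M\|\Pb\|^2\|\bu\|^2$, the last inequality using $\|\Pb\|\le O(1)$. Taking expectations and applying the tower property then yields $\mathbb{E}\|\bu^\top\Pb\Fb\|^2\le M\|\Pb\|^2\,\mathbb{E}\|\bu\|^2$.

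It then remains to control $\mathbb{E}\|\bu\|^2=\sum_{s=1}^T\mathbb{E}[u_s^2]$, which is precisely what Lemma \ref{lema1} supplies. For each $s\ne t$ the entry $u_s=\bepsilon_t^\top\Qb\bepsilon_s$ satisfies $\mathbb{E}[u_s^2]=\|\Qb\|_F^2$ by the second identity of Lemma \ref{lema1}, while the single diagonal entry $u_t=\bepsilon_t^\top\Qb\bepsilon_t-\text{tr}(\Qb)$ obeys $\mathbb{E}[u_t^2]\le(M+2)\|\Qb\|_F^2$ by the first. Summing gives $\mathbb{E}\|\bu\|^2\le(T-1+M+2)\|\Qb\|_F^2$, which for sufficiently large $T$ is at most a constant multiple of $T\|\Qb\|_F^2$. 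Combining this with the previous display and absorbing $\|\Pb\|^2$ and the numerical constants into the generic constant $M$ (as the paper explicitly permits) produces the claimed bound $M^2T\|\Qb\|_F^2$.

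The only genuinely delicate point is the conditioning step: Assumption B bounds $\mathbb{E}\|\bv^\top\Fb\|^2$ only for \emph{fixed} unit vectors $\bv$, whereas the relevant direction $\bu^\top\Pb/\|\bu^\top\Pb\|$ is random through $\mathcal{E}$. The independence of $\Fb$ and $\mathcal{E}$ is exactly what licenses conditioning on $\mathcal{E}$ to freeze this direction and invoke the bound pointwise; everything else is bookkeeping and a direct application of Lemma \ref{lema1}, so I expect no further obstruction.
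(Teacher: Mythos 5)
Your proof is correct, but it reaches the bound by a genuinely different route than the paper. The paper splits $\bepsilon_t^\top\Qb\mathcal{E}^\top\Pb\Fb$ into the diagonal term $\bepsilon_t^\top\Qb\bepsilon_t\,\bp_t^\top\Fb$ (centered by $\text{tr}(\Qb)\bp_t^\top\Fb$ and bounded via Lemma \ref{lema1} together with ${\rm E}\|\bp_t^\top\Fb\|^2\le M\|\bp_t\|^2$) and the off-diagonal sum $\bzeta_t=\sum_{s\ne t}\bepsilon_t^\top\Qb\bepsilon_s\,\bp_s^\top\Fb$, whose second moment it then computes by a direct expansion over the fourth moments of the $\epsilon$'s, identifying which index configurations survive. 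You instead package the entire centered quantity as $\bu^\top\Pb\Fb$ with $u_s=\bepsilon_t^\top\Qb\bepsilon_s-\text{tr}(\Qb)\mathbf{1}\{s=t\}$, integrate out $\Fb$ first — using independence of $\Fb$ and $\mathcal{E}$ to apply Assumption B along the random direction $\Pb^\top\bu/\|\Pb^\top\bu\|$, equivalently bounding ${\rm E}[\bu^\top\Pb\,{\rm E}(\Fb\Fb^\top)\Pb^\top\bu]\le M\|\Pb\|^2{\rm E}\|\bu\|^2$ — and then dispose of ${\rm E}\|\bu\|^2=\sum_s{\rm E}[u_s^2]$ entirely by the two moment identities of Lemma \ref{lema1}. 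This sidesteps the explicit four-index bookkeeping: the cross terms the paper must argue vanish are absorbed pointwise by the operator-norm bound on $\Pb\,{\rm E}(\Fb\Fb^\top)\Pb^\top$, and Lemma \ref{lema1} is reused rather than partially re-derived. The trade-off is that your argument leans on the freezing/independence step, which you correctly identify and justify as the one delicate point (and which the paper itself uses implicitly for its diagonal term and in Lemma \ref{lema3}); the paper's expansion is more pedestrian but makes the surviving moment configurations explicit. Your final constant $M\|\Pb\|^2(T+M+1)\|\Qb\|_F^2$ matches the stated $M^2T\|\Qb\|_F^2$ for large $T$ under the paper's convention that constants may change between lines.
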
	
\begin{proof}
	Firstly,
	\[
	\bepsilon_t^\top\Qb\mathcal{E}^\top\Pb\Fb=\bepsilon_t^\top\Qb(\sum_{s=1}^{T}\bepsilon_s\bp_s^\top\Fb)=\bepsilon_t^\top\Qb\bepsilon_t\bp_t^\top\Fb+\bepsilon_t^\top\Qb(\sum_{s\ne t}\bepsilon_s\bp_s^\top\Fb).
	\]
	By lemma \ref{lema1}, Assumptions B and D,
	\[
	\mathbb{E}\bigg\|\bepsilon_t^\top\Qb\bepsilon_t\bp_t^\top\Fb-\text{tr}(\Qb)\bp_{t}^\top\Fb\bigg\|^2\le M(M+2)\|\Qb\|_F^2\|\bp_t\|^2
	\]
	On the other hand,
	\[
	\bzeta_t:=\bepsilon_t^\top\Qb(\sum_{s\ne t}\bepsilon_s\bp_s^\top\Fb)=\sum_{i,j}Q_{ij}\epsilon_{ti}\sum_{s\ne t}\epsilon_{sj}\bp_s^\top\Fb.
	\]
	Similarly to lemma \ref{lema1}, it's not difficult  to verify $\mathbb{E}(\bzeta_t)={\bf 0}$ and
	\[
	\begin{split}
	\mathbb{E}(\|\bzeta_t\|^2)=&\sum_{i,j}\sum_{u,v}\sum_{s\ne t}\sum_{w\ne t}Q_{ij}Q_{uv}\mathbb{E}(\epsilon_{ti}\epsilon_{sj}\epsilon_{tu}\epsilon_{wv}\bp_s^\top\Fb\bp_{w}^\top\Fb)\\
	=&\sum_{i,j,s\ne t}Q_{ij}^2\mathbb{E}\bigg(\epsilon_{ti}^2\epsilon_{sj}^2(\bp_s^\top\Fb)^2\bigg)\le M\|\Qb\|_F^2\sum_{s}\|\bp_s\|^2\le M^2T\|\Qb\|_F^2.
	\end{split}
	\]
	Combine the above results to conclude the lemma.
\end{proof}

\begin{lemma}\label{lema3}
	Under Assumptions B and D, for any vector $\bu $ and $T\times T$ matrix $\Pb$ that $\|\bu\|^2=1$  and $\|\Pb\|\le O(1)$, we have
	\[
	\mathbb{E}\|\bu^\top\mathcal{E}^\top\Pb\Fb\|^2\le M^2T.
	\]
\end{lemma}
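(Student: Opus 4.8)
The plan is to exploit the independence between $\mathcal{E}$ and $\Fb$ granted by Assumption D, reduce the target to the expectation of a trace, and then close the bound with the second-moment condition $\mathbb{E}\|\bv^\top\Fb\|^2\le M$ from Assumption B. First I would set $\bm{w}:=\mathcal{E}\bu$, a $T$-dimensional vector whose $s$-th entry is $w_s=\sum_{j=1}^p u_j\epsilon_{sj}$, so that $\bu^\top\mathcal{E}^\top=\bm{w}^\top$ and hence $\|\bu^\top\mathcal{E}^\top\Pb\Fb\|^2=\bm{w}^\top\Pb\Fb\Fb^\top\Pb^\top\bm{w}$. The target thus becomes $\mathbb{E}[\bm{w}^\top\Pb\Fb\Fb^\top\Pb^\top\bm{w}]$.

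Next I would record the covariance of $\bm{w}$. Since the $\epsilon_{tj}$ are mutually independent with mean zero and unit variance, $\mathbb{E}[w_sw_t]=\sum_{j,k}u_ju_k\mathbb{E}[\epsilon_{sj}\epsilon_{tk}]=\delta_{st}\sum_j u_j^2=\delta_{st}$, using $\|\bu\|^2=1$; hence $\mathbb{E}[\bm{w}\bm{w}^\top]=\Ib_T$. Because $\bm{w}$ is a function of $\mathcal{E}$ alone, which is independent of $\Fb$, I can condition on $\Fb$ and apply this identity:
\[
\mathbb{E}\big[\bm{w}^\top\Pb\Fb\Fb^\top\Pb^\top\bm{w}\,\big|\,\Fb\big]=\text{tr}\big(\Pb\Fb\Fb^\top\Pb^\top\,\mathbb{E}[\bm{w}\bm{w}^\top]\big)=\text{tr}(\Fb^\top\Pb^\top\Pb\Fb).
\]
Taking the outer expectation gives $\mathbb{E}\|\bu^\top\mathcal{E}^\top\Pb\Fb\|^2=\mathbb{E}\,\text{tr}(\Fb^\top\Pb^\top\Pb\Fb)$.

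Finally I would bound this trace. Writing the spectral decomposition $\Pb^\top\Pb=\sum_{s}\lambda_s\bm{q}_s\bm{q}_s^\top$ with $\|\bm{q}_s\|=1$ and $\lambda_s\le\|\Pb\|^2$, we have $\text{tr}(\Fb^\top\Pb^\top\Pb\Fb)=\sum_s\lambda_s\|\bm{q}_s^\top\Fb\|^2$; applying $\mathbb{E}\|\bm{q}_s^\top\Fb\|^2\le M$ from Assumption B together with $\sum_s\lambda_s=\text{tr}(\Pb^\top\Pb)=\|\Pb\|_F^2\le T\|\Pb\|^2$ yields $\mathbb{E}\,\text{tr}(\Fb^\top\Pb^\top\Pb\Fb)\le M\|\Pb\|_F^2\le MT\|\Pb\|^2\le M^2T$, where the last step absorbs the $O(1)$ bound on $\|\Pb\|^2$ into the generic constant $M$.

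There is no genuine obstacle here; this is a routine second-moment computation, and the only care needed is bookkeeping. In particular I would deliberately use the weaker condition $\mathbb{E}\|\bv^\top\Fb\|^2\le M$ rather than the almost-sure identity $\Fb^\top\Fb/T=\Ib_r$, so that the bound lands exactly at $M^2T$ without an extraneous factor of $r$, and I would track constants so that the bound on $\|\Pb\|^2$ and the moment bound collapse into the single symbol $M$, consistent with the paper's convention that $M$ may differ from line to line.
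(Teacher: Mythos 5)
Your proof is correct and follows essentially the same route as the paper's: both reduce $\mathbb{E}\|\bu^\top\mathcal{E}^\top\Pb\Fb\|^2$ to $\mathbb{E}\,\mathrm{tr}(\Fb^\top\Pb^\top\Pb\Fb)$ using that the entries of $\mathcal{E}\bu$ are uncorrelated with unit variance and independent of $\Fb$, and then invoke Assumption B together with $\|\Pb\|_F^2\le T\|\Pb\|^2$. Your version merely packages the coordinatewise expansion as a conditional trace identity and spells out the spectral-decomposition step that the paper leaves implicit in passing from $\mathbb{E}\|\Pb\Fb\|_F^2$ to $M\|\Pb\|_F^2$.
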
	
\begin{proof}
	Define $\eta_t=\bu^\top\bepsilon_t$, then $\mathbb{E}\eta_t=0$, $\mathbb{E}\eta_s\eta_t=0$ and $\mathbb{E}\eta_t^2=1$. Define $\tilde\Fb=\Pb\Fb$, then $\bu^\top\mathcal{E}^\top\Pb\Fb=\sum_{t=1}^{T}\eta_t\tilde\bF_t$. Hence,
	\[
	\begin{split}
	\mathbb{E}\|\bu^\top\mathcal{E}^\top\Pb\Fb\|^2=&\sum_{k=1}^{r}\sum_{t=1}^{T}\sum_{s=1}^{T}\mathbb{E}(\eta_t\eta_s\tilde f_{tk}\tilde f_{sk})=\sum_{k=1}^{r}\sum_{t=1}^{T}\mathbb{E}(\eta_t^2\tilde f_{tk}^2)=\mathbb{E}\|\Pb\Fb\|_F^2\le M\|\Pb\|_F^2\le M^2T.
	\end{split}
	\]
	which concludes the lemma.
\end{proof}

\begin{lemma}\label{lema4}
	Under Assumptions B and D, for any $p\times p$ matrix $\Qb$ that $\|\Qb\|\le O(1)$ and $T\times T$ matrix $\Pb$ that $\|\Pb\|\le O(1)$, we have
	\[
\mathbb{E}\|\Qb^\top\mathcal{E}^\top\Pb\Fb\|_F^2\le M^2T\|\Qb\|_F^2.
	\]
\end{lemma}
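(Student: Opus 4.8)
The plan is to reduce the claim directly to Lemma \ref{lema3} by peeling off the matrix $\Qb$ one column at a time. First I would identify the rows of the $p\times r$ matrix $\Qb^\top\mathcal{E}^\top\Pb\Fb$: its $i$-th row equals $\bq_i^\top\mathcal{E}^\top\Pb\Fb$, where $\bq_i$ denotes the $i$-th column of $\Qb$ (equivalently the $i$-th row of $\Qb^\top$). Writing the Frobenius norm as the sum of squared row norms then gives
\[
\|\Qb^\top\mathcal{E}^\top\Pb\Fb\|_F^2=\sum_{i=1}^{p}\|\bq_i^\top\mathcal{E}^\top\Pb\Fb\|^2.
\]

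Next I would exploit the positive homogeneity of each summand in $\bq_i$. For each $i$ with $\bq_i\ne{\bf 0}$, set $\bu_i=\bq_i/\|\bq_i\|$, so that $\|\bu_i\|=1$ and $\bq_i^\top\mathcal{E}^\top\Pb\Fb=\|\bq_i\|\,\bu_i^\top\mathcal{E}^\top\Pb\Fb$. Taking expectations and invoking Lemma \ref{lema3} (which applies since $\|\Pb\|\le O(1)$ and Assumptions B and D are in force) yields
\[
\mathbb{E}\|\bq_i^\top\mathcal{E}^\top\Pb\Fb\|^2=\|\bq_i\|^2\,\mathbb{E}\|\bu_i^\top\mathcal{E}^\top\Pb\Fb\|^2\le M^2T\|\bq_i\|^2,
\]
and the bound holds trivially when $\bq_i={\bf 0}$. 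Summing over $i$ and using the identity $\sum_{i=1}^{p}\|\bq_i\|^2=\|\Qb\|_F^2$ (the Frobenius norm is the sum of the squared column norms) then delivers $\mathbb{E}\|\Qb^\top\mathcal{E}^\top\Pb\Fb\|_F^2\le M^2T\|\Qb\|_F^2$, as claimed.

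There is no genuine obstacle in this argument: the lemma is essentially a bookkeeping extension of Lemma \ref{lema3} from a single unit vector to an arbitrary matrix factor. The only points demanding a little care are the correct identification of the rows of $\Qb^\top\mathcal{E}^\top\Pb\Fb$ with the columns of $\Qb$, and the observation that the spectral-norm hypothesis $\|\Qb\|\le O(1)$ is not actually exploited in this particular estimate — the reduction relies solely on $\|\Pb\|\le O(1)$ through Lemma \ref{lema3}, with the dependence on $\Qb$ entering only via $\|\Qb\|_F^2$. I would keep the hypothesis as stated for consistency with the other lemmas but note that it plays no role here.
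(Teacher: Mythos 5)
Your proposal is correct and follows essentially the same route as the paper: both decompose $\|\Qb^\top\mathcal{E}^\top\Pb\Fb\|_F^2$ into the sum over the columns $\bq_j$ of $\Qb$, normalize each column, and apply Lemma \ref{lema3} together with $\sum_j\|\bq_j\|^2=\|\Qb\|_F^2$. Your remark that the hypothesis $\|\Qb\|\le O(1)$ is not actually needed for this estimate is also accurate.
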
	
\begin{proof}
	Define the $j$-th column of $\Qb$ as $\bq_j$, by lemma \ref{lema3}, we have
	\[
	\mathbb{E}\bigg(\|\Qb^\top\mathcal{E}^\top\Pb\Fb\|_F^2\bigg)\le\mathbb{E}\bigg(\sum_{j=1}^{p}\|\bq_j\|^2\bigg\|\frac{\bq_j^\top}{\|\bq_j\|}\mathcal{E}^\top\Pb\Fb\bigg\|^2\bigg)\le M^2T\|\Qb\|_F^2,
	\]
	and the lemma holds.
\end{proof}

\begin{lemma}\label{lema5}
	Define $\bLambda_r$  as the diagonal matrix composed of the leading $r$ eigenvalues of $(pT)^{-1}\Xb\Db^{-1}\Xb^\top$, then under Assumptions A-E, as $p,T\rightarrow \infty$, $\bLambda_{r,jj}$ are always positive and distinct,
	\[
	\bLambda_{r,jj}\asymp 1, \quad\text{for}\quad j=1,\ldots,r.
	\]
\end{lemma}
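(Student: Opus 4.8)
The plan is to expand $(pT)^{-1}\Xb\Db^{-1}\Xb^\top$ around its signal part and then invoke Weyl's eigenvalue perturbation inequality. Using $\Xb=\Fb\Bb^\top+\Eb$, I would write
\[
\frac{1}{pT}\Xb\Db^{-1}\Xb^\top=\frac{1}{T}\Fb{\bf S}\Fb^\top+\Rb,
\]
where ${\bf S}=p^{-1}\Bb^\top\Db^{-1}\Bb$ is the matrix of Assumption E and the remainder
\[
\Rb=\frac{1}{pT}\Fb\Bb^\top\Db^{-1}\Eb^\top+\frac{1}{pT}\Eb\Db^{-1}\Bb\Fb^\top+\frac{1}{pT}\Eb\Db^{-1}\Eb^\top
\]
collects the two cross terms and the pure-error term. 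The whole argument reduces to showing that the signal part has $r$ nonzero eigenvalues of order one and strictly separated, while $\|\Rb\|=o_p(1)$.

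For the signal part, since $\Fb^\top\Fb/T=\Ib_r$ almost surely by Assumption B, the columns of $\Fb/\sqrt T$ are orthonormal, so the $r$ nonzero eigenvalues of $T^{-1}\Fb{\bf S}\Fb^\top$ coincide exactly with the eigenvalues of ${\bf S}$. Assumption E gives $\lambda_1({\bf S})>\cdots>\lambda_r({\bf S})\ge c$, which supplies the lower bound and the distinctness; for the upper bound I would use $\|\Db^{-1}\|\le 1$ (as $\Db=\Ib_p+\alpha\mathcal{L}_n$ or $\Ib_p+\alpha\Ub_1\Ub_1^\top$ with a positive semidefinite perturbation), whence $\lambda_1({\bf S})\le\lambda_1(p^{-1}\Bb^\top\Bb)\to\lambda_1(\bSigma_B)\le c_1$ by Assumption C. Thus $\lambda_j({\bf S})\asymp 1$ for $j\le r$, with order-one gaps.

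To bound $\Rb$ in spectral norm I would again use $\|\Db^{-1}\|\le1$, together with $\|\Fb\|=\sqrt T$, $\|\Bb\|\asymp\sqrt p$ (from Assumption C), and the operator-norm control $\|\Eb\|\le\|\Pb_1\|\,\|\mathcal{E}\|\,\|\Pb_2\|\lesssim\|\mathcal{E}\|=O_p(\sqrt p)$ which holds for $p\gtrsim T$ under the bounded fourth moment of Assumption D. The error term then satisfies $\|(pT)^{-1}\Eb\Db^{-1}\Eb^\top\|\le(pT)^{-1}\|\Eb\|^2=O_p(T^{-1})$, and each cross term is bounded by $(pT)^{-1}\|\Fb\|\,\|\Bb\|\,\|\Eb\|=O_p(T^{-1/2})$; sharper Frobenius-norm estimates for the cross terms are also available directly from Lemmas \ref{lema3}--\ref{lema4}. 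Hence $\|\Rb\|=o_p(1)$. Finally, Weyl's inequality yields $|\bLambda_{r,jj}-\lambda_j({\bf S})|\le\|\Rb\|=o_p(1)$ for every $j\le r$, so each $\bLambda_{r,jj}=\lambda_j({\bf S})+o_p(1)\asymp1$ is positive, and the $o_p(1)$ perturbation preserves the strict ordering inherited from Assumption E, giving distinctness for $p,T$ large.

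I expect the main obstacle to be the operator-norm estimate $\|\Eb\|^2=O_p(p)$ for the purely random term: although its trace is only $O(1)$, what is actually needed is that its spectral norm vanishes, and this requires a random-matrix operator-norm bound (rather than the expectation-of-quadratic-form estimates of Lemmas \ref{lema1}--\ref{lema4}) under the non-i.i.d.\ error structure $\Eb=\Pb_1\mathcal{E}\Pb_2$. Making the cross-term control rigorous through those lemmas, so that all of $\Rb$ is provably $o_p(1)$ uniformly in the tuning parameters entering $\Db$, is the other delicate point.
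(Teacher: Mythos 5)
Your decomposition of $(pT)^{-1}\Xb\Db^{-1}\Xb^\top$ into $T^{-1}\Fb{\bf S}\Fb^\top$ plus a remainder, the identification of the nonzero signal eigenvalues with those of ${\bf S}$ via $\Fb^\top\Fb/T=\Ib_r$, and the final appeal to Weyl's inequality together with Assumption E reproduce the paper's argument exactly. The only place you genuinely diverge is the control of the remainder. The paper never leaves its quadratic-form toolkit: it bounds the cross terms in Frobenius norm through Lemma \ref{lema1}, obtaining $\|(pT)^{-1}\Fb\Bb^\top\Db^{-1}\Eb^\top\|=O_p(p^{-1/2})$, and writes
\[
\frac{1}{pT}\Eb\Db^{-1}\Eb^\top=\frac{1}{pT}\Pb_1\Big(\text{tr}(\Pb_2\Db^{-1}\Pb_2^\top)\Ib_T+\Rb\Big)\Pb_1^\top,\qquad \mathbb{E}(\mathrm{R}_{ts}^2)\lesssim\|\Pb_2\Db^{-1}\Pb_2\|_F^2,
\]
which gives $O_p(T^{-1}+p^{-1/2})$ for the pure-error term; no operator-norm bound on $\mathcal{E}$ is ever invoked, so the obstacle you flag as the ``main obstacle'' simply does not arise on the paper's route. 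Your alternative via $\|\mathcal{E}\|=O_p(\sqrt p)$ is also valid --- for independent (not necessarily identically distributed) mean-zero entries with uniformly bounded fourth moments and $p\gtrsim T$ this follows from a standard random-matrix bound of the form $\mathbb{E}\|\mathcal{E}\|\lesssim\sqrt p+\sqrt T+(pT)^{1/4}$ --- and it yields the cruder but still sufficient rates $O_p(T^{-1/2})$ for the cross terms and $O_p(T^{-1})$ for the error term. What your route buys is brevity; what it costs is importing a result outside Lemmas \ref{lema1}--\ref{lema4}, whereas the paper's Frobenius-norm route is self-contained and delivers the sharper $O_p(p^{-1/2})$ bounds that are reused in the subsequent lemmas (e.g.\ Lemma \ref{lema6}). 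Either way the remainder is $o_p(1)$ uniformly in the admissible tuning parameters (since $\|\Db^{-1}\|\le1$ throughout), and the conclusion follows.
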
	
\begin{proof}
	Denote ${\bf S}=p^{-1}\Bb^\top\Db^{-1}\Bb$, then
\[
\frac{1}{pT}\Xb\Db^{-1}\Xb^\top=\frac{1}{T}\Fb{\bf S}\Fb^\top+\frac{1}{pT}(\Fb\Bb^\top\Db^{-1}\Eb^\top+\Eb\Db^{-1}\Bb\Fb^\top+\Eb\Db^{-1}\Eb^\top).
\]
Note that $T^{-1}\Fb{\bf S}\Fb^\top$ shares the same non-zero eigenvalues with ${\bf S}$ because of $T^{-1}\Fb^\top\Fb=\Ib_r$ a.s., while by Assumption E, $\lambda_j({\bf S})$ are distinct and positive. Meanwhile, $\|{\bf S}\|\le \lambda_1(\bSigma_B)$.

For the remaining terms, we have
\[
\begin{split}
\bigg\|\frac{1}{pT}\Fb\Bb^\top\Db^{-1}\Eb^\top\bigg\|^2\le& \bigg\|\frac{1}{pT}\Fb\Bb^\top\Db^{-1}\Eb^\top\bigg\|_F^2\le\frac{1}{p^2T}\text{tr}(\mathcal{E}\Pb_2\Db^{-1}\Bb\Bb^\top\Db^{-1}\Pb_2^\top\mathcal{E}^\top)\|\Pb_1\|^2\\
\le&\frac{M^2}{p^2T}\sum_{t=1}^{T}\bepsilon_t^\top\Pb_2\Db^{-1}\Bb\Bb^\top\Db^{-1}\Pb_2^\top\bepsilon_t\\
=&O_p\bigg(\frac{M^2}{p^2}\|\Bb\|_F^2\|\Db^{-1}\|^2\|\Pb_2\|^2\bigg)(\text{by Lemma \ref{lema1}})\\
=&O_p(p^{-1}).
\end{split}
\]
Similarly, we have $\|(pT)^{-1}\Eb\Db^{-1}\Bb\Fb^\top\|^2=O_p(p^{-1})$. By Lemma \ref{lema1} we also have
\begin{equation}\label{equc1}
\frac{1}{pT}\Eb\Db^{-1}\Eb^\top=\frac{1}{pT}\Pb_1\bigg(\text{tr}(\Pb_2\Db^{-1}\Pb_2^\top)\Ib_T+\Rb\bigg)\Pb_1^\top,
\end{equation}
where $\Rb$ is $T\times T$ with $\mathbb{E}(\mathrm{R}^2_{ts})\le (M+2)\|\Pb_2\Db^{-1}\Pb_2\|_F^2$ for any $s,t\le T$. Therefore,
\[
\bigg\|\frac{1}{pT}\Eb\Db^{-1}\Eb^\top\bigg\|\le\frac{\|\Pb_1\|^2\|\Pb_2\|^2\text{tr}(\Db^{-1})}{pT}+O_p(p^{-1/2})=O_p(T^{-1}+p^{-1/2}).
\]
The lemma holds with  Weyl's inequality and Assumption E.
\end{proof}

\begin{lemma}\label{lema6}
Under Assumptions A-E, there exists a sequence of invertible matrices $\Hb$ (dependent on $p,T$ and tuning parameters) such that
\[
\frac{1}{T}\|\hat\Fb-\Fb\Hb\|_F^2=O_p\bigg(\frac{1}{p}+\frac{\text{tr}(\Db^{-2})}{pT^2}\bigg).
\]
\end{lemma}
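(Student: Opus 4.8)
The plan is to exploit the eigen-equation defining $\hat\Fb$ and peel off a signal term matchable with $\Fb\Hb$. Since $\hat\Fb$ is $\sqrt{T}$ times the leading $r$ eigenvectors of $(pT)^{-1}\Xb\Db^{-1}\Xb^\top$, it obeys $(pT)^{-1}\Xb\Db^{-1}\Xb^\top\hat\Fb=\hat\Fb\bLambda_r$, hence $\hat\Fb=(pT)^{-1}\Xb\Db^{-1}\Xb^\top\hat\Fb\bLambda_r^{-1}$. Substituting $\Xb=\Fb\Bb^\top+\Eb$ and writing ${\bf S}=p^{-1}\Bb^\top\Db^{-1}\Bb$, the leading block is $T^{-1}\Fb{\bf S}\Fb^\top\hat\Fb$; defining $\Hb={\bf S}(T^{-1}\Fb^\top\hat\Fb)\bLambda_r^{-1}$ turns this block into exactly $\Fb\Hb$. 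I would thereby obtain
\[
\hat\Fb-\Fb\Hb=\Big(\tfrac{1}{pT}\Fb\Bb^\top\Db^{-1}\Eb^\top\hat\Fb+\tfrac{1}{pT}\Eb\Db^{-1}\Bb\Fb^\top\hat\Fb+\tfrac{1}{pT}\Eb\Db^{-1}\Eb^\top\hat\Fb\Big)\bLambda_r^{-1},
\]
call the three summands $\mathrm{I},\mathrm{II},\mathrm{III}$. Invertibility of $\Hb$ follows since ${\bf S}$ and $\bLambda_r$ are invertible by Assumption E and Lemma \ref{lema5}, while $T^{-1}\Fb^\top\hat\Fb$ is asymptotically nonsingular once the rate below is in hand; the sharper claim $\Hb^\top\Hb\stackrel{p}{\rightarrow}\Ib_r$ I would leave to Theorem \ref{thm4.1}.

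Using Lemma \ref{lema5} to get $\|\bLambda_r^{-1}\|=O_p(1)$, it suffices to bound $T^{-1}\|\mathrm{I}\|_F^2$, $T^{-1}\|\mathrm{II}\|_F^2$ and $T^{-1}\|\mathrm{III}\|_F^2$. For the two cross terms I would apply $\|\Fb\mathbf{C}\hat\Fb\|_F\le\|\Fb\|\,\|\mathbf{C}\|_F\,\|\hat\Fb\|$ together with the deterministic identities $\|\Fb\|=\|\hat\Fb\|=\sqrt{T}$, reducing matters to $\mathbb{E}\|\Bb^\top\Db^{-1}\Eb^\top\|_F^2$ and $\mathbb{E}\|\Eb\Db^{-1}\Bb\|_F^2$. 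Because $\Eb=\Pb_1\mathcal{E}\Pb_2$ has independent entries in $\mathcal{E}$, the identity $\mathbb{E}\|\Pb_1\mathcal{E}\mathbf{C}\|_F^2=\|\Pb_1\|_F^2\|\mathbf{C}\|_F^2$ (an immediate consequence of the independence exploited in Lemma \ref{lema1}), combined with $\|\Bb\|_F^2=O(p)$ and $\|\Db^{-1}\|\le1$, makes both expectations $\lesssim pT$; hence $T^{-1}\|\mathrm{I}\|_F^2=O_p(p^{-1})$ and $T^{-1}\|\mathrm{II}\|_F^2=O_p(p^{-1})$, both absorbed into the first target term.

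The quadratic noise term $\mathrm{III}$ is the heart of the argument, and I would attack it through the decomposition (\ref{equc1}) of Lemma \ref{lema5}, splitting $(pT)^{-1}\Eb\Db^{-1}\Eb^\top$ into the deterministic part $(pT)^{-1}\text{tr}(\Pb_2\Db^{-1}\Pb_2^\top)\Pb_1\Pb_1^\top$ and the centered fluctuation $(pT)^{-1}\Pb_1\Rb\Pb_1^\top$. For the deterministic part, $\|\Pb_1\Pb_1^\top\hat\Fb\|_F^2=O(T)$ and $\text{tr}(\Pb_2\Db^{-1}\Pb_2^\top)\asymp\text{tr}(\Db^{-1})$ bound its contribution by $(\text{tr}(\Db^{-1}))^2/(p^2T^2)$, which via Cauchy--Schwarz $(\text{tr}(\Db^{-1}))^2\le p\,\text{tr}(\Db^{-2})$ is at most $\text{tr}(\Db^{-2})/(pT^2)$, precisely the second target term. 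For the fluctuation part, writing it as $T^{-1}\Zb\hat\Fb$ with $\Zb=p^{-1}\Pb_1\Rb\Pb_1^\top$ and using $\|\Zb\hat\Fb\|_F\le\|\Zb\|_F\|\hat\Fb\|$ with $\mathbb{E}\|\Rb\|_F^2\le T^2(M+2)\|\Pb_2\Db^{-1}\Pb_2\|_F^2\lesssim T^2\text{tr}(\Db^{-2})$ (Lemma \ref{lema1}) yields $T^{-1}\|\mathrm{III}_b\|_F^2=O_p(\text{tr}(\Db^{-2})/p^2)$.

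The remaining and most delicate step is that this crude fluctuation bound $\text{tr}(\Db^{-2})/p^2$ is not literally one of the target terms; I would absorb it by a two-case comparison resting on $\text{tr}(\Db^{-2})\le p$. If $T^2\le p$ then $\text{tr}(\Db^{-2})/p^2\le\text{tr}(\Db^{-2})/(pT^2)$, while if $T^2>p$ then $\text{tr}(\Db^{-2})/p^2\le p/p^2=1/p$; in both cases it is dominated by $p^{-1}+\text{tr}(\Db^{-2})/(pT^2)$. Combining the estimates for $\mathrm{I}$, $\mathrm{II}$ and the two parts of $\mathrm{III}$ then gives the asserted rate. I expect the genuine difficulty to concentrate in $\mathrm{III}$: correctly isolating the non-negligible deterministic part responsible for the $T^{-2}$-type behaviour, and recognizing that the apparently oversized fluctuation estimate collapses into the target via $\text{tr}(\Db^{-2})\le p$; by contrast the cross terms and the eigenvalue normalization are routine given Lemmas \ref{lema1}--\ref{lema5}.
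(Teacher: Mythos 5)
Your proposal is correct, and it reproduces the paper's setup exactly — the eigen-equation $(pT)^{-1}\Xb\Db^{-1}\Xb^\top\hat\Fb=\hat\Fb\bLambda_r$, the rotation $\Hb={\bf S}(T^{-1}\Fb^\top\hat\Fb)\bLambda_r^{-1}$, and the same three-term decomposition — but it diverges from the paper precisely where you predicted the difficulty would lie, namely on the quadratic noise term. The paper's proof is a two-pass bootstrap: it first derives the coarse rate $T^{-1}\|\hat\Fb-\Fb\Hb\|_F^2=O_p(T^{-2}+p^{-1})$, then returns to the identity, splits $\hat\Fb=\Fb\Hb+(\hat\Fb-\Fb\Hb)$ inside each term, and controls $\Eb\Db^{-1}\Eb^\top\Fb$ via Lemma \ref{lema2} while feeding the coarse rate into the remainder. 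You instead make a single pass, splitting $\Eb\Db^{-1}\Eb^\top$ itself through the decomposition (\ref{equc1}) into its deterministic trace part and the fluctuation $\Rb$, bounding everything against the \emph{deterministic} norms $\|\Fb\|=\|\hat\Fb\|=\sqrt{T}$, and then absorbing the resulting crude fluctuation bound $\text{tr}(\Db^{-2})/p^2$ via $\text{tr}(\Db^{-2})\le p$. (Your two-case comparison is actually more than you need: $\text{tr}(\Db^{-2})\le p$ already gives $\text{tr}(\Db^{-2})/p^2\le 1/p$ unconditionally, so the second branch alone suffices.) Both routes land on the same rate; yours is more elementary in that it dispenses with the bootstrap and with Lemma \ref{lema2} entirely for this lemma, at the cost of an intermediate bound that is genuinely lossier and only happens to be absorbable because $\Db^{-1}\preceq\Ib_p$. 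The paper's refined pass is not wasted effort in context, though: the sharper intermediate estimates and the coarse-then-refined template are reused almost verbatim in Lemmas \ref{lema7} and \ref{lema9}, whereas your shortcut would have to be supplemented there.
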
	
\begin{proof}
	Follow Lemma \ref{lema5}, by the definition of $\hat \Fb$,  we have
	\[
	\hat\Fb\bLambda_r=\frac{1}{pT}\Xb\Db^{-1}\Xb^\top\hat\Fb=\frac{1}{pT}\bigg(p\Fb{\bf S}\Fb^\top+\Fb\Bb^\top\Db^{-1}\Eb^\top+\Eb\Db^{-1}\Bb\Fb^\top+\Eb\Db^{-1}\Eb^\top\bigg)\hat\Fb.
	\]
	Define $\Hb={\bf S}(\Fb^\top\hat\Fb/T)\bLambda_r^{-1}$, then $\Hb=O_p(1)$ and
	\begin{equation}\label{equc2}
	(\hat\Fb-\Fb\Hb)\bLambda_r=\frac{1}{pT}(\Fb\Bb^\top\Db^{-1}\Eb^\top+\Eb\Db^{-1}\Bb\Fb^\top+\Eb\Db^{-1}\Eb^\top)\hat\Fb.
	\end{equation}
	We first give some coarse bounds for the right-hand terms. That is,
	\[
	\bigg\|\frac{1}{pT}\Fb\Bb^\top\Db^{-1}\Eb^\top\hat\Fb\bigg\|_F^2\le\frac{1}{p^2T^2}\|\Fb\|_F^2\|\Bb^\top\Db^{-1}\Eb^\top\|_F^2\|\hat\Fb\|_F^2=\frac{r^2}{p^2}\|\Bb^\top\Db^{-1}\Eb^\top\|_F^2.
	\]
	Hence by Lemma \ref{lema1}, the above term is  bounded by $O_p(T/p)$. And by a similar procedure, it's easy that  $\|(pT)^{-1}\Eb\Db^{-1}\Bb\Fb^\top\hat\Fb\|_F^2=O_p(T/p)$. Next by equation (\ref{equc1}), we have
	\[
	\|(pT)^{-1}\Eb\Db^{-1}\Eb^\top\hat\Fb\|_F^2=O_p(T^{-1}+T/p).
	\]
	Combining the above bounds leads to $T^{-1}\|\hat\Fb-\Fb\Hb\|_F^2=O_p(T^{-2}+p^{-1})$ and $\Hb$ is invertible..
	
	Now we can come back to equation (\ref{equc2}) and construct more subtle bounds. Firstly.
	\[
	\begin{split}
	\bigg\|\frac{1}{pT}\Fb\Bb^\top\Db^{-1}\Eb^\top\hat\Fb\bigg\|_F^2=&\frac{1}{p^2T}\|\Bb^\top\Db^{-1}\Eb^\top(\hat\Fb-\Fb\Hb+\Fb\Hb)\|_F^2\\
	\le&\frac{2}{p^2T}\bigg(\|\Bb^\top\Db^{-1}\Eb^\top\|_F^2\|\hat\Fb-\Fb\Hb\|_F^2+\|\Bb^\top\Db^{-1}\Eb^\top\Fb\Hb\|_F^2\bigg)\\
	=&O_p(p^{-1})+\frac{1}{p^2T}O_p(\|\Bb^\top\Db^{-1}\Eb^\top\Fb\|_F^2)=O_p(p^{-1}),
	\end{split}
	\]
	where we use  Lemma \ref{lema3} in the last line. Next, we have
	\[
	\begin{split}
	\bigg\|\frac{1}{pT}\Eb\Db^{-1}\Eb^\top\hat\Fb\bigg\|_F^2=&O_p\bigg(\frac{1}{p^2T^2}\|\Eb\Db^{-1}\Eb^\top\Fb\|_F^2+\bigg\|\frac{1}{pT}\Eb\Db^{-1}\Eb^\top(\hat\Fb-\Fb\Hb)\bigg\|_F^2\bigg)\\
	=&O_p\bigg(\frac{\text{tr}^2(\Db^{-1})}{p^2T}+\frac{\|\Db^{-1}\|_F^2}{p^2}+\frac{\text{tr}^2(\Db^{-1})}{p^2}\|T^{-1}(\hat\Fb-\Fb\Hb)\|_F^2+\frac{\|\Rb\|_F^2}{p^2}\|T^{-1}(\hat\Fb-\Fb\Hb)\|_F^2\bigg)\\
	=&O_p\bigg(\frac{\text{tr}(\Db^{-2})}{pT}+\frac{1}{p}\bigg),
	\end{split}
	\]
	where the second line holds because of Lemma \ref{lema2} and equation (\ref{equc1}). 	Note that we use $p^{-2}\text{tr}^2(\Db^{-1})\le p^{-1}\text{tr}(\Db^{-2})$ for the consistency rates. Hence, combined with $\|(pT)^{-1}\Eb\Db^{-1}\Bb\Fb^\top\hat\Fb\|_F^2=O_p(T/p)$, now we have
	\[
	\frac{1}{T}\|\hat\Fb-\Fb\Hb\|_F^2=O_p\bigg(\frac{1}{p}+\frac{\text{tr}(\Db^{-2})}{pT^2}\bigg),
	\]
	and the lemma holds.
\end{proof}

\begin{lemma}\label{lema7}
		Under the same conditions and notations in Lemma \ref{lema6},  we have
		\[
\frac{1}{T}\Fb^\top(\hat\Fb-\Fb\Hb)=O_p(T^{-1}), 		\frac{1}{T}\hat\Fb^\top(\hat\Fb-\Fb\Hb)=O_p(T^{-1}).
		\]
	\end{lemma}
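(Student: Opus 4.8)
The plan is to start from the identity (\ref{equc2}) established in the proof of Lemma \ref{lema6}, namely $(\hat\Fb-\Fb\Hb)\bLambda_r=\frac{1}{pT}(\Fb\Bb^\top\Db^{-1}\Eb^\top+\Eb\Db^{-1}\Bb\Fb^\top+\Eb\Db^{-1}\Eb^\top)\hat\Fb$, left-multiply by $\Fb^\top/T$ and right-multiply by $\bLambda_r^{-1}$. Since $\bLambda_r^{-1}=O_p(1)$ by Lemma \ref{lema5}, it suffices to bound the three resulting pieces $\frac{1}{pT^2}\Fb^\top\Fb\Bb^\top\Db^{-1}\Eb^\top\hat\Fb$, $\frac{1}{pT^2}\Fb^\top\Eb\Db^{-1}\Bb\Fb^\top\hat\Fb$, and $\frac{1}{pT^2}\Fb^\top\Eb\Db^{-1}\Eb^\top\hat\Fb$ by $O_p(T^{-1})$. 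Throughout I would freely use $\|\Db^{-1}\|\le 1$, $\|\Bb\|_F^2=O(p)$, $\text{tr}(\Db^{-2})\le p$, the normalizations $\Fb^\top\Fb=T\Ib_r$ and $\hat\Fb^\top\hat\Fb=T\Ib_r$ (so $\|\Fb^\top\hat\Fb\|\le\|\Fb\|\|\hat\Fb\|=T$), and $p\gtrsim T$.

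For the first piece, $\Fb^\top\Fb/T=\Ib_r$ collapses it to $\frac{1}{pT}\Bb^\top\Db^{-1}\Eb^\top\hat\Fb\bLambda_r^{-1}$; splitting $\hat\Fb=\Fb\Hb+(\hat\Fb-\Fb\Hb)$, the part with $\Fb\Hb$ is controlled by Lemma \ref{lema4} applied to $\Qb=\Pb_2\Db^{-1}\Bb$ and $\Pb=\Pb_1^\top$, which gives $\|\Bb^\top\Db^{-1}\Eb^\top\Fb\|_F=O_p(\sqrt{Tp})$ and hence order $O_p((Tp)^{-1/2})=O_p(T^{-1})$, while the part with $\hat\Fb-\Fb\Hb$ combines the Frobenius bound $\|\Bb^\top\Db^{-1}\Eb^\top\|_F=O_p(\sqrt{Tp})$ (via Lemma \ref{lema1}) with the rate from Lemma \ref{lema6}, again yielding $O_p(T^{-1})$ after using $p\gtrsim T$. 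The second piece equals $\frac{1}{pT^2}\Fb^\top\Eb\Db^{-1}\Bb\Fb^\top\hat\Fb\bLambda_r^{-1}$; since $\|\Fb^\top\Eb\Db^{-1}\Bb\|_F=\|\Bb^\top\Db^{-1}\Eb^\top\Fb\|_F=O_p(\sqrt{Tp})$ and $\|\Fb^\top\hat\Fb\|\le T$, this is $O_p((Tp)^{-1/2})=O_p(T^{-1})$.

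The main obstacle is the third piece $\frac{1}{pT^2}\Fb^\top\Eb\Db^{-1}\Eb^\top\hat\Fb\bLambda_r^{-1}$, since this is where the rate $O_p(T^{-1})$ is attained rather than beaten. Here I would substitute (\ref{equc1}), $\frac{1}{pT}\Eb\Db^{-1}\Eb^\top=\frac{1}{pT}\Pb_1(\text{tr}(\Pb_2\Db^{-1}\Pb_2^\top)\Ib_T+\Rb)\Pb_1^\top$, and split $\hat\Fb=\Fb\Hb+(\hat\Fb-\Fb\Hb)$. The deterministic scalar part contributes $\frac{\text{tr}(\Pb_2\Db^{-1}\Pb_2^\top)}{pT}\cdot\frac{1}{T}\Fb^\top\Pb_1\Pb_1^\top\Fb\Hb$; because $\text{tr}(\Pb_2\Db^{-1}\Pb_2^\top)=O(\text{tr}(\Db^{-1}))=O(p)$ and $\frac{1}{T}\Fb^\top\Pb_1\Pb_1^\top\Fb=O_p(1)$, this is exactly $O_p(T^{-1})$ and is the dominant term. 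The fluctuation part $\frac{1}{pT^2}\Fb^\top\Pb_1\Rb\Pb_1^\top\Fb$ needs a conditional second-moment estimate: with $\mathbb E(\Rb)=\mathbf{0}$ and $\mathbb E(\mathrm R_{ts}^2)\le(M+2)\|\Pb_2\Db^{-1}\Pb_2^\top\|_F^2=O(\text{tr}(\Db^{-2}))$ by Lemma \ref{lema1}, conditioning on $\Fb$ and using $\frac{1}{T}\Fb^\top\Pb_1\Pb_1^\top\Fb=O_p(1)$ gives $\|\Fb^\top\Pb_1\Rb\Pb_1^\top\Fb\|_F=O_p(T\sqrt{\text{tr}(\Db^{-2})})$, so this part is $O_p(\sqrt{\text{tr}(\Db^{-2})}/(pT))=O_p(T^{-1})$. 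Finally, the cross term with $(\hat\Fb-\Fb\Hb)$ is bounded by $\frac{1}{T}\|\Fb\|\,\|\frac{1}{pT}\Eb\Db^{-1}\Eb^\top\|\,\|\hat\Fb-\Fb\Hb\|_F$, and inserting $\|\Fb\|=\sqrt T$, the operator-norm bound $O_p(T^{-1}+p^{-1/2})$ from Lemma \ref{lema5}, and the rate from Lemma \ref{lema6} shows it is $O_p(T^{-1})$ after a short case check using $p\gtrsim T$.

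For the second identity, I would write $\hat\Fb=(\hat\Fb-\Fb\Hb)+\Fb\Hb$ to obtain the matrix identity $\frac{1}{T}\hat\Fb^\top(\hat\Fb-\Fb\Hb)=\frac{1}{T}(\hat\Fb-\Fb\Hb)^\top(\hat\Fb-\Fb\Hb)+\Hb^\top\cdot\frac{1}{T}\Fb^\top(\hat\Fb-\Fb\Hb)$. The first summand has Frobenius norm at most $\frac{1}{T}\|\hat\Fb-\Fb\Hb\|_F^2=O_p(T^{-1})$ directly from Lemma \ref{lema6} together with $\text{tr}(\Db^{-2})\le p$ and $p\gtrsim T$, and the second summand is $O_p(T^{-1})$ by the first identity and $\Hb=O_p(1)$, which completes the proof.
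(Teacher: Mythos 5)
Your proposal is correct and follows essentially the same route as the paper, whose own proof of this lemma is just the one-line remark that the results follow from the identity $(\hat\Fb-\Fb\Hb)\bLambda_r=\frac{1}{pT}(\Fb\Bb^\top\Db^{-1}\Eb^\top+\Eb\Db^{-1}\Bb\Fb^\top+\Eb\Db^{-1}\Eb^\top)\hat\Fb$ by the same techniques as in Lemma \ref{lema6}; your argument is precisely that plan carried out in detail, with the one genuinely delicate step (bounding $\Fb^\top\Pb_1\Rb\Pb_1^\top\Fb$ via the quadratic-form second-moment estimate rather than the crude $\|\Rb\|_F$ operator bound) handled correctly.
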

\begin{proof}
	The results hold with \ref{equc2} by using similar technique as in Lemma \ref{lema6}.
\end{proof}

\begin{lemma}\label{lema8}
	Assume ${\bf S}$ has the spectral decomposition ${\bf S}=\bGamma_S\bLambda_S\bGamma_S^\top$, then under the same conditions and notations in Lemma \ref{lema6},  we have
	\[
	\Hb^\top\Hb\stackrel{p}{\rightarrow}\Ib_r,\quad \frac{1}{T}\Fb^\top\hat\Fb\stackrel{p}{\rightarrow}\bGamma_S.
	\]
\end{lemma}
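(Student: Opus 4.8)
The plan is to prove the two limits in order, exploiting the decomposition $\hat\Fb = \Fb\Hb + (\hat\Fb-\Fb\Hb)$. By Lemma~\ref{lema6} the remainder satisfies
\[
\frac{1}{T}\|\hat\Fb-\Fb\Hb\|_F^2 = O_p\Big(\frac1p + \frac{\text{tr}(\Db^{-2})}{pT^2}\Big) = o_p(1),
\]
where the last bound uses that $\Db=\Db_1$ or $\Db_2$ has eigenvalues $\ge 1$, so $\Db^{-1}$ has eigenvalues in $(0,1]$, whence $\text{tr}(\Db^{-2})\le p$ and the second term is at most $T^{-2}$. I also recall from Lemma~\ref{lema6} that $\Hb=O_p(1)$ and $\Hb={\bf S}(T^{-1}\Fb^\top\hat\Fb)\bLambda_r^{-1}$.

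For $\Hb^\top\Hb\stackrel{p}{\rightarrow}\Ib_r$, I would expand the identity $\Ib_r = T^{-1}\hat\Fb^\top\hat\Fb$ (valid because $\hat\Fb$ is $\sqrt{T}$ times orthonormal eigenvectors) as
\[
\Ib_r = \Hb^\top\frac{\Fb^\top\Fb}{T}\Hb + \Hb^\top\frac{\Fb^\top(\hat\Fb-\Fb\Hb)}{T} + \frac{(\hat\Fb-\Fb\Hb)^\top\Fb}{T}\Hb + \frac{(\hat\Fb-\Fb\Hb)^\top(\hat\Fb-\Fb\Hb)}{T}.
\]
The first term equals $\Hb^\top\Hb$ by Assumption~B ($T^{-1}\Fb^\top\Fb=\Ib_r$ a.s.); the two cross terms are $O_p(T^{-1})$ by Lemma~\ref{lema7} together with $\Hb=O_p(1)$; and the last term has spectral norm at most $T^{-1}\|\hat\Fb-\Fb\Hb\|_F^2 = o_p(1)$. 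Hence $\Hb^\top\Hb = \Ib_r + o_p(1)$, which is the first claim.

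For $T^{-1}\Fb^\top\hat\Fb\stackrel{p}{\rightarrow}\bGamma_S$, I would first reduce the problem to the limit of $\Hb$: using $T^{-1}\Fb^\top\Fb=\Ib_r$ and Lemma~\ref{lema7},
\[
\frac{1}{T}\Fb^\top\hat\Fb - \Hb = \frac{1}{T}\Fb^\top(\hat\Fb-\Fb\Hb) = O_p(T^{-1}),
\]
so $T^{-1}\Fb^\top\hat\Fb$ and $\Hb$ share the same limit. Then the defining relation gives $\Hb\bLambda_r = {\bf S}(T^{-1}\Fb^\top\hat\Fb) = {\bf S}\Hb + O_p(T^{-1})$, an \emph{asymptotic eigen-equation} for $\Hb$. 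Combining the expansion $(pT)^{-1}\Xb\Db^{-1}\Xb^\top = T^{-1}\Fb{\bf S}\Fb^\top + O_p(T^{-1}+p^{-1/2})$ from the proof of Lemma~\ref{lema5} with Weyl's inequality, and noting that $T^{-1}\Fb{\bf S}\Fb^\top$ carries exactly the nonzero eigenvalues of ${\bf S}$, I obtain $\bLambda_r = \bLambda_S + o_p(1)$. Thus any limit point of $\Hb$ satisfies ${\bf S}\Hb = \Hb\bLambda_S$, i.e.\ its columns are eigenvectors of ${\bf S}$ with eigenvalues the diagonal of $\bLambda_S$.

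The main obstacle is this final identification step, and handling the sign ambiguity. Assumption~E guarantees the eigenvalues of ${\bf S}$ are distinct with a spectral gap bounded below, so each eigenvector is unique up to sign; together with the orthonormality $\Hb^\top\Hb\to\Ib_r$ just established, a standard eigenvector perturbation (Davis--Kahan type) argument converts the bounds $\bLambda_r-\bLambda_S = o_p(1)$ and $\Hb\bLambda_r-{\bf S}\Hb = O_p(T^{-1})$ into $\Hb - \bGamma_S \stackrel{p}{\rightarrow}\mathbf{0}$ up to a diagonal matrix of signs. Care is needed because $\bGamma_S,\bLambda_S$ may themselves vary with $p,T$ through the tuning parameters in $\Db$, but the uniform distinctness in Assumption~E keeps the spectral gap away from zero, so the eigen-projection is stable. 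Fixing the customary sign convention that aligns the eigenvectors of $(pT)^{-1}\Xb\Db^{-1}\Xb^\top$ with the columns of $\bGamma_S$ removes the residual sign freedom and yields $T^{-1}\Fb^\top\hat\Fb\stackrel{p}{\rightarrow}\bGamma_S$.
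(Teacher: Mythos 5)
Your proposal is correct and follows essentially the same route as the paper: the first claim comes from Lemma~\ref{lema7} combined with the orthonormality $T^{-1}\hat\Fb^\top\hat\Fb=\Ib_r$ and $T^{-1}\Fb^\top\Fb=\Ib_r$, and the second from the asymptotic eigen-equation implied by $\Hb={\bf S}(T^{-1}\Fb^\top\hat\Fb)\bLambda_r^{-1}$ together with $\bLambda_r=\bLambda_S+o_p(1)$, the distinctness of the eigenvalues of ${\bf S}$ in Assumption~E, and a sign convention. The only cosmetic difference is that you invoke a Davis--Kahan perturbation argument where the paper multiplies the eigen-equation by $\bGamma_S^\top$ and argues entrywise that $\bGamma_S^\top\Hb$ is asymptotically $\pm\Ib_r$; these are the same identification step.
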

\begin{proof}
	Lemma \ref{lema7} implies $\Hb^\top\Hb=\Ib_r+O_p(T^{-1})$. By the definition $\Hb={\bf S}(\Fb^\top\hat\Fb/T)\bLambda_r^{-1}$, we have
	\[
	\bGamma_S^\top\Hb\bLambda_r=\bLambda_S\bGamma_S^\top\Hb+O_p(T^{-1}).
	\]
	Note that $\bLambda_r=\bLambda_S+O_p(T^{-1}+p^{-1/2})$, and  $\lambda_j({\bf S})$ are distinct by Assumption E, so
	\[
	\bGamma_S^\top\Hb=c\Ib_r+O_p(T^{-1}+p^{-1/2}),\quad\text{where}\quad c=\pm1.
	\]
	Without loss of generality we can assume $c=1$, hence $T^{-1}\Fb^\top\hat\Fb\stackrel{p}{\rightarrow}\bGamma_S$.
\end{proof}

\begin{lemma}\label{lema9}
		Under the same conditions and notations in Lemma \ref{lema6},  we have
		\[
	\frac{1}{p}\bigg\|\frac{1}{T}\Eb^\top(\hat\Fb-\Fb\Hb)\bigg\|_F^2=O_p\bigg(\frac{1}{pT}+\frac{\text{tr}(\Db^{-2})}{pT^3}\bigg).
		\]
	\end{lemma}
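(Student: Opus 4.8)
The plan is to start from the identity (\ref{equc2}) derived in the proof of Lemma \ref{lema6}, namely $(\hat\Fb-\Fb\Hb)\bLambda_r=(pT)^{-1}(\Fb\Bb^\top\Db^{-1}\Eb^\top+\Eb\Db^{-1}\Bb\Fb^\top+\Eb\Db^{-1}\Eb^\top)\hat\Fb$, right-multiply by $\bLambda_r^{-1}$ and then left-multiply by $T^{-1}\Eb^\top$. Since $\bLambda_r^{-1}=O_p(1)$ by Lemma \ref{lema5}, it suffices to bound $p^{-1}\|\cdot\|_F^2$ of each of the three resulting terms $(pT^2)^{-1}\Eb^\top(\cdots)\hat\Fb$. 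A useful sanity check is that the target rate is exactly $T^{-1}$ times the rate of Lemma \ref{lema6}: left-multiplying the score error by $T^{-1}\Eb^\top$ should cost one extra factor of $T$, because the noise $\Eb$ decorrelates from the dominant components of $\hat\Fb-\Fb\Hb$.

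For each of the three terms I would substitute $\hat\Fb=\Fb\Hb+(\hat\Fb-\Fb\Hb)$. The remainder pieces carrying $(\hat\Fb-\Fb\Hb)$ are of strictly smaller order: they can be controlled by feeding the consistency rate of Lemma \ref{lema6} and the cross bound $T^{-1}\Fb^\top(\hat\Fb-\Fb\Hb)=O_p(T^{-1})$ of Lemma \ref{lema7} into the coarse estimates already established inside Lemma \ref{lema6}, and they fall below $(pT)^{-1}$. The leading pieces carry $\Fb\Hb$ and are evaluated with Lemmas \ref{lema1}--\ref{lema4} together with $\Fb^\top\Fb=T\Ib_r$. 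For instance, the second term $(pT^2)^{-1}\Eb^\top\Eb\Db^{-1}\Bb\Fb^\top\Fb\Hb$ collapses to $(pT)^{-1}\Eb^\top\Eb\Db^{-1}\Bb\Hb$; writing $\Eb^\top\Eb$ as its mean $\text{tr}(\Pb_1^\top\Pb_1)\Pb_2^\top\Pb_2$ plus a centered fluctuation, the mean part is $O_p(p^{-1})$ in Frobenius norm (hence $O_p(p^{-2})$ after $p^{-1}\|\cdot\|_F^2$), while a quadratic-form moment computation in the spirit of Lemma \ref{lema1} bounds the fluctuation and yields the $(pT)^{-1}$ contribution.

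The cubic noise term $(pT^2)^{-1}\Eb^\top\Eb\Db^{-1}\Eb^\top\Fb\Hb$ is the crux. I would insert the decomposition (\ref{equc1}), so that $(pT)^{-1}\Eb\Db^{-1}\Eb^\top=\text{tr}(\Pb_2\Db^{-1}\Pb_2^\top)(pT)^{-1}\Pb_1\Pb_1^\top+(pT)^{-1}\Pb_1\Rb\Pb_1^\top$. The mean part then reduces to $\text{tr}(\Pb_2\Db^{-1}\Pb_2^\top)(pT)^{-1}T^{-1}\Eb^\top\Pb_1\Pb_1^\top\Fb\Hb$; bounding $T^{-1}\Eb^\top\Pb_1\Pb_1^\top\Fb$ through Lemma \ref{lema4} and invoking the Cauchy--Schwarz inequality $\text{tr}^2(\Db^{-1})\le p\,\text{tr}(\Db^{-2})$ produces precisely the $\text{tr}(\Db^{-2})/(pT^3)$ contribution.

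The hard part will be the fluctuation piece $(pT^2)^{-1}\Eb^\top\Pb_1\Rb\Pb_1^\top\Fb$, where $\Rb$ and the outer $\Eb^\top$ are both built from the same $\mathcal{E}$ and are therefore correlated. A naive split $\|\Eb\|\,\|\Rb\|\,\|\Pb_1^\top\Fb\|_F$ using the operator norm $\|\Eb\|=O_p(\sqrt{p})$ is too lossy and overshoots the target whenever $p\lesssim T^2$. The key is to avoid bounding $\|\Eb\|$ in isolation and instead to exploit the cross structure between the noise and the factors, recognizing $\Eb^\top(\cdots)\Fb$ as a noise-times-factor cross term (governed by Lemma \ref{lema4}) rather than an arbitrary noise-times-vector product, which recovers the extra factor of $T$ needed for the $\text{tr}(\Db^{-2})/(pT^3)$ rate. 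Verifying this moment bound on the correlated cubic form --- effectively a one-order-higher analogue of Lemmas \ref{lema1}--\ref{lema4} --- is the only genuinely delicate computation in the lemma.
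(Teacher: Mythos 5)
Your overall architecture coincides with the paper's proof: start from the identity (\ref{equc2}), left-multiply by $T^{-1}\Eb^\top$, bound the three resulting terms, insert the decomposition (\ref{equc1}) for the cubic noise term, and finish with $p^{-2}\text{tr}^2(\Db^{-1})\le p^{-1}\text{tr}(\Db^{-2})$. Your accounts of the first two terms and of the mean part of the cubic term match the paper's computations and rates.

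The one place you depart from the paper is the fluctuation piece $(pT^2)^{-1}\Eb^\top\Pb_1\Rb\Pb_1^\top\Fb\Hb$, and there your diagnosis is mistaken: the ``naive'' split is \emph{not} too lossy. You compared its output only against the $\text{tr}(\Db^{-2})/(pT^3)$ part of the target and concluded it overshoots when $p\lesssim T^2$, but the target also contains the $1/(pT)$ term, and that term absorbs the naive bound unconditionally. Concretely, Lemma \ref{lema2} gives $\|\Rb\Pb_1^\top\Fb\|_F=O_p(T\|\Db^{-1}\|_F)$ (entrywise $O_p(\sqrt{T}\|\Db^{-1}\|_F)$ over $T$ rows), so even the crude estimate $\|\Eb^\top\Pb_1\Rb\Pb_1^\top\Fb\|_F\le\|\Eb\|_F\,\|\Pb_1\|\,\|\Rb\Pb_1^\top\Fb\|_F=O_p(\sqrt{pT^3}\,\|\Db^{-1}\|_F)$ yields, after squaring and normalizing by $p\cdot p^2T^4$, a contribution of order $\|\Db^{-1}\|_F^2/(p^2T)=\text{tr}(\Db^{-2})/(p^2T)\le 1/(pT)$, since every eigenvalue of $\Db^{-1}$ lies in $(0,1]$ and hence $\text{tr}(\Db^{-2})\le p$. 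This is exactly what the paper does (its $\Rb_3$ is your $\Rb\Pb_1^\top\Fb$), so the ``one-order-higher analogue of Lemmas \ref{lema1}--\ref{lema4}'' you flag as the crux — and leave unverified — is unnecessary; carrying it out would only sharpen a term that is already negligible. With that step replaced by the crude bound, your proof closes and is essentially the paper's.
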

\begin{proof}
	By equation (\ref{equc2}),
	\[
	T^{-1}\Eb^\top(\hat\Fb-\Fb\Hb)\bLambda_r=\frac{1}{pT^2}\Eb^\top(\Fb\Bb^\top\Db^{-1}\Eb^\top+\Eb\Db^{-1}\Bb\Fb^\top+\Eb\Db^{-1}\Eb^\top)\hat\Fb.
	\]
	It's not hard to verify
	\[
	\bigg\|\Eb^\top\Fb\Bb^\top\Db^{-1}\Eb^\top\hat\Fb\bigg\|_F^2\le \|\Eb^\top\Fb\|^2\bigg\|\Bb^\top\Db^{-1}\Eb^\top\hat\Fb\bigg\|_F^2\le O_p(p^2T^2).
	\]
	while for the second term,
	\[
	\begin{split}
	\bigg\|\Eb^\top\Eb\Db^{-1}\Bb\Fb^\top\hat\Fb\bigg\|_F^2\le& \bigg\|\Pb_2^\top\mathcal{E}^\top\Pb_1^\top\Pb_1\mathcal{E}\Pb_2\Db^{-1}\Bb\bigg\|_F^2\|\hat\Fb^\top\Fb\|_F^2.
	\end{split}
	\]
	By similar technique in Lemma \ref{lema2} we will have
	\[
\mathcal{E}^\top\Pb_1^\top\Pb_1\mathcal{E}\Pb_2\Db^{-1}\Bb=\text{tr}(\Pb_1^\top\Pb_1)\Pb_2\Db^{-1}\Bb+\Rb_2,
	\]
	where $\Rb_2$ is $T\times r$ with $\mathrm{R}_{2,tk}=O_p(\sqrt{pT})$. Hence
	\[
	\bigg\|\Eb^\top\Eb\Db^{-1}\Bb\Fb^\top\hat\Fb\bigg\|_F^2\le O_p\bigg(pT^4\bigg).
	\]
	Further,
	\[
\|\Eb^\top\Eb\Db^{-1}\Eb^\top\Fb\|_F=	\bigg\|\Eb^\top\Pb_1\bigg(\text{tr}(\Pb_2\Db^{-1}\Pb_2^\top)\Pb_1^\top\Fb+\Rb_3\bigg)\bigg\|_F\le O_p\bigg(\sqrt{pT}\text{tr}(\Db^{-1})+\sqrt{pT^3}\|\Db^{-1}\|_F\bigg),
	\]
	where $\Rb_3$ is $T\times r$ matrix with $|\mathrm{R}_{3,tk}|=O_p(\sqrt{T}\|\Db^{-1}\|_F)$ by Lemma \ref{lema2}. Hence
	\[
	\frac{1}{p}\bigg\|\frac{1}{T}\Eb^\top(\hat\Fb-\Fb\Hb)\bigg\|_F^2\le O_p\bigg(\frac{1}{pT^2}+\frac{1}{p^2}+\frac{\text{tr}^2(\Db^{-1})}{p^2T^3}+\frac{\|\Db^{-1}\|_F^2}{p^2T}\bigg)\le O_p\bigg(\frac{1}{pT}+\frac{\text{tr}(\Db^{-2})}{pT^3}\bigg).
	\]
	Note that we use $p^{-2}\text{tr}^2(\Db^{-1})\le p^{-1}\text{tr}(\Db^{-2})$ again.
\end{proof}

\begin{lemma}\label{lema10}
	Under the same conditions and notations in Lemma \ref{lema6},  we have
	\[
	\frac{1}{p}\|\hat\Bb-\Bb\Hb \|_F^2=O_p\bigg(\frac{1}{p}\|(\Db^{-1}-\Ib_p)\Bb\|_F^2+\frac{\text{tr}(\Db^{-2})}{pT}\bigg).
	\]
\end{lemma}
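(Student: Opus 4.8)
The plan is to substitute $\Xb=\Fb\Bb^\top+\Eb$ into $\hat\Bb=T^{-1}\Db^{-1}\Xb^\top\hat\Fb$ and decompose around the rotation $\Hb={\bf S}(T^{-1}\Fb^\top\hat\Fb)\bLambda_r^{-1}$ fixed in Lemma \ref{lema6}. Writing $\hat\Bb=\Db^{-1}\Bb(T^{-1}\Fb^\top\hat\Fb)+T^{-1}\Db^{-1}\Eb^\top\hat\Fb$ and inserting $\pm\,\Bb(T^{-1}\Fb^\top\hat\Fb)$, I would split
\[
\hat\Bb-\Bb\Hb=\underbrace{(\Db^{-1}-\Ib_p)\Bb\tfrac{\Fb^\top\hat\Fb}{T}}_{\text{(I)}}+\underbrace{\Bb\Big(\tfrac{\Fb^\top\hat\Fb}{T}-\Hb\Big)}_{\text{(II)}}+\underbrace{\tfrac1T\Db^{-1}\Eb^\top\hat\Fb}_{\text{(III)}}.
\]
Term (I) is the shrinkage bias: since $\|T^{-1}\Fb^\top\hat\Fb\|=O_p(1)$ by Lemma \ref{lema8}, we get $\frac1p\|\text{(I)}\|_F^2\le\frac1p\|(\Db^{-1}-\Ib_p)\Bb\|_F^2\cdot O_p(1)$, which is exactly the first term of the claimed rate.

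For the noise term (III) I would split $\hat\Fb=\Fb\Hb+(\hat\Fb-\Fb\Hb)$. Using $\|\Db^{-1}\|\le1$ (every eigenvalue of $\Ib_p+\alpha\mathcal{L}_n$ and of $\Ib_p+\alpha\Ub_1\Ub_1^\top$ is $\ge1$), the piece carrying $\hat\Fb-\Fb\Hb$ is controlled directly by Lemma \ref{lema9}, yielding $O_p(\frac1{pT}+\frac{\text{tr}(\Db^{-2})}{pT^3})$; the remaining piece is bounded by $\frac1{pT^2}\|\Db^{-1}\Eb^\top\Fb\|_F^2\|\Hb\|^2$, and Lemma \ref{lema4} applied with $\Qb=\Pb_2\Db^{-1}$ gives $\|\Db^{-1}\Eb^\top\Fb\|_F^2=O_p(T\,\text{tr}(\Db^{-2}))$, producing the leading $\frac{\text{tr}(\Db^{-2})}{pT}$ contribution.

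The hard part will be the rotation-discrepancy term (II). Here the plan is to multiply the eigen-equation $\hat\Fb\bLambda_r=(pT)^{-1}\Xb\Db^{-1}\Xb^\top\hat\Fb$ on the left by $T^{-1}\Fb^\top$, use $T^{-1}\Fb^\top\Fb=\Ib_r$, and cancel the signal block ${\bf S}(T^{-1}\Fb^\top\hat\Fb)$ against ${\bf S}$ in the definition of $\Hb$, to obtain
\[
\Big(\tfrac{\Fb^\top\hat\Fb}{T}\Big)\bLambda_r-{\bf S}\tfrac{\Fb^\top\hat\Fb}{T}=\tfrac1{pT}\Bb^\top\Db^{-1}\Eb^\top\hat\Fb+\tfrac1{pT^2}\Fb^\top\Eb\Db^{-1}\Bb\Fb^\top\hat\Fb+\tfrac1{pT^2}\Fb^\top\Eb\Db^{-1}\Eb^\top\hat\Fb,
\]
so that $T^{-1}\Fb^\top\hat\Fb-\Hb$ equals this right-hand side times $\bLambda_r^{-1}$. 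Since $\bLambda_r^{-1}=O_p(1)$ by Lemma \ref{lema5} and $\frac1p\|\Bb\|^2=O(1)$ by Assumption C, $\frac1p\|\text{(II)}\|_F^2$ is controlled by the squared Frobenius norms of the three remainders. I would bound the first after the split $\hat\Fb=\Fb\Hb+(\hat\Fb-\Fb\Hb)$ via Lemmas \ref{lema3}, \ref{lema4} and \ref{lema9}, the second via Lemma \ref{lema4}, and the third by writing it as $T^{-1}\Fb^\top\cdot[(pT)^{-1}\Eb\Db^{-1}\Eb^\top\hat\Fb]$ and invoking the bound $\|(pT)^{-1}\Eb\Db^{-1}\Eb^\top\hat\Fb\|_F^2=O_p(\frac{\text{tr}(\Db^{-2})}{pT}+\frac1p)$ already obtained inside the proof of Lemma \ref{lema6}, together with $\|\Fb\|=\sqrt T$; each remainder is $O_p(\frac{\text{tr}(\Db^{-2})}{pT})$ up to extra factors of $T^{-1}$.

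Finally I would collect (I)–(III) and absorb the lower-order pieces using $\text{tr}(\Db^{-2})\ge1$ (the Laplacian's zero eigenvalue forces a unit eigenvalue of $\Db^{-1}$), the Cauchy–Schwarz bound $\text{tr}^2(\Db^{-1})\le p\,\text{tr}(\Db^{-2})$, and $p\gtrsim T$, to conclude $\frac1p\|\hat\Bb-\Bb\Hb\|_F^2=O_p\big(\frac1p\|(\Db^{-1}-\Ib_p)\Bb\|_F^2+\frac{\text{tr}(\Db^{-2})}{pT}\big)$. The crux throughout is that the crude bound $\|\frac1{pT}\Bb^\top\Db^{-1}\Eb^\top\hat\Fb\|_F^2=O_p(p^{-1})$ is not always dominated by $\frac{\text{tr}(\Db^{-2})}{pT}$ when the penalty is strong, so the careful split of $\hat\Fb$ in term (II) is essential.
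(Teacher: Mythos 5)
Your proposal is correct and reaches the stated rate; it is built from the same ingredients as the paper's proof (the closed form $\hat\Bb=T^{-1}\Db^{-1}\Xb^\top\hat\Fb$, the expansion $\Xb^\top=\Bb\Fb^\top+\Eb^\top$, and Lemmas \ref{lema4}, \ref{lema7}--\ref{lema9}), but it is not identical in one substantive place. The paper writes $\hat\Bb-\Bb\Hb=(\Db^{-1}-\Ib_p)\Bb\Hb+T^{-1}\Db^{-1}\Eb^\top\Fb\Hb+T^{-1}\Db^{-1}\Bb\Fb^\top(\hat\Fb-\Fb\Hb)+T^{-1}\Db^{-1}\Eb^\top(\hat\Fb-\Fb\Hb)$ and disposes of the third term by citing Lemma \ref{lema7}, which only yields an $O_p(T^{-2})$ contribution to $p^{-1}\|\hat\Bb-\Bb\Hb\|_F^2$; that contribution is \emph{not} dominated by $\text{tr}(\Db^{-2})/(pT)$ when the penalty is strong enough that $\text{tr}(\Db^{-2})=o(p/T)$, so the paper's one-line conclusion is loose exactly where you flag the difficulty. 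Your regrouped decomposition puts this rotation discrepancy into $\Bb(T^{-1}\Fb^\top\hat\Fb-\Hb)$ (algebraically the same quantity, since $T^{-1}\Fb^\top\Fb\Hb=\Hb$) and then re-expands it through the eigen-equation, bounding the three remainders after splitting $\hat\Fb=\Fb\Hb+(\hat\Fb-\Fb\Hb)$; this upgrades the bound on $\|T^{-1}\Fb^\top\hat\Fb-\Hb\|_F^2$ from $O_p(T^{-2})$ to $O_p(\text{tr}(\Db^{-2})/(pT))$ and genuinely closes that gap. The rest of your argument — term (I) via $\|T^{-1}\Fb^\top\hat\Fb\|=O_p(1)$, term (III) via Lemma \ref{lema4} with $\Qb=\Pb_2\Db^{-1}$ and Lemma \ref{lema9}, and the absorption steps using $\|\Db^{-1}\|\le1$, $\text{tr}(\Db^{-2})\ge1$ and $\text{tr}^2(\Db^{-1})\le p\,\text{tr}(\Db^{-2})$ — matches the paper's bounds. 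In short: same skeleton, but your treatment of term (II) is sharper than the paper's and is what actually makes the claimed rate hold uniformly over the tuning-parameter range.
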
	
\begin{proof}
By the solution of $\hat\Bb$,
\[
\begin{split}
&\hat\Bb-\Bb\Hb=\frac{1}{T}\Db^{-1}(\Bb\Fb^\top+\Eb^\top)(\hat\Fb-\Fb\Hb+\Fb\Hb)-\Bb\Hb\\
=&(\Db^{-1}-\Ib_p)\Bb\Hb+\frac{1}{T}\Db^{-1}\Eb^\top\Fb\Hb+\frac{1}{T}\Db^{-1}\Bb\Fb^\top(\hat\Fb-\Fb\Hb)+\frac{1}{T}\Db^{-1}\Eb^\top(\hat\Fb-\Fb\Hb).
\end{split}
\]
By Lemma \ref{lema4},
\[
\mathbb{E}\bigg\|\frac{1}{T}\Db^{-1}\Eb^\top\Fb\bigg\|_F^2=O\bigg(\frac{\text{tr}(\Db^{-2})}{T}\bigg).
\]
Then, combined with Lemmas \ref{lema7} and \ref{lema9},
\[
\frac{1}{p}\|\hat\Bb-\Bb\Hb\|_F^2=O_p\bigg(\frac{1}{p}\|(\Db^{-1}-\Ib_p)\Bb\|_F^2+\frac{\text{tr}(\Db^{-2})}{pT}\bigg),
\]
and the lemma holds.
\end{proof}

\begin{lemma}\label{lema11}
		Under the same conditions and notations in Lemma \ref{lema6},  we have
	\[
\frac{1}{pT}\|\hat\Cb-\Cb\|_F^2=O_p\bigg(\frac{1}{p}\|(\Db^{-1}-\Ib_p)\Bb\|_F^2+\frac{\text{tr}(\Db^{-2})}{pT}+\frac{1}{p}\bigg).
	\]
	\end{lemma}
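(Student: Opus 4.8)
The plan is to start from $\hat\Cb-\Cb=\hat\Fb\hat\Bb^\top-\Fb\Bb^\top$ and insert the rotation $\Hb$ of Lemma \ref{lema6} twice, which yields
\[
\hat\Cb-\Cb=(\hat\Fb-\Fb\Hb)\hat\Bb^\top+\Fb\Hb(\hat\Bb-\Bb\Hb)^\top+\Fb(\Hb\Hb^\top-\Ib_r)\Bb^\top .
\]
By the inequality $\|A+B+C\|_F^2\le 3(\|A\|_F^2+\|B\|_F^2+\|C\|_F^2)$ it then suffices to bound $(pT)^{-1}\|\cdot\|_F^2$ of each of the three pieces separately, after which I read off that the loading-type contribution dominates because $p\gtrsim T$.

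For the first two pieces I would only use submultiplicativity $\|AB\|_F\le\|A\|_F\|B\|$ (and $\|AB\|_F\le\|A\|\,\|B\|_F$) together with the crude operator-norm facts $\|\Fb\|^2=T$ (from $\Fb^\top\Fb=T\Ib_r$), $\|\Hb\|=O_p(1)$ (Lemma \ref{lema8}), and $\|\hat\Bb\|=O_p(\sqrt p)$, the last of which follows from Assumption C together with Lemma \ref{lema10}. The first piece then satisfies $(pT)^{-1}\|(\hat\Fb-\Fb\Hb)\hat\Bb^\top\|_F^2\le O_p(1)\,T^{-1}\|\hat\Fb-\Fb\Hb\|_F^2$, which by Lemma \ref{lema6} equals $O_p\big(p^{-1}+\text{tr}(\Db^{-2})/(pT^2)\big)$ and hence lies inside the target (using $T^{-2}\le T^{-1}$). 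The second piece satisfies $(pT)^{-1}\|\Fb\Hb(\hat\Bb-\Bb\Hb)^\top\|_F^2\le O_p(1)\,p^{-1}\|\hat\Bb-\Bb\Hb\|_F^2$, which is exactly the loading rate of Lemma \ref{lema10}. So these two terms reproduce the advertised bound with essentially no new computation.

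The delicate piece is the rotation-discrepancy term $\Fb(\Hb\Hb^\top-\Ib_r)\Bb^\top$, which I expect to be the main obstacle. Using $\Fb^\top\Fb=T\Ib_r$ gives $(pT)^{-1}\|\Fb(\Hb\Hb^\top-\Ib_r)\Bb^\top\|_F^2=\text{tr}\big((\Hb\Hb^\top-\Ib_r)(p^{-1}\Bb^\top\Bb)(\Hb\Hb^\top-\Ib_r)\big)=O_p(\|\Hb\Hb^\top-\Ib_r\|^2)$ since $p^{-1}\Bb^\top\Bb\to\bSigma_B$ is bounded. Feeding in the crude bound $\Hb\Hb^\top-\Ib_r=O_p(T^{-1})$ from Lemma \ref{lema8} yields only $O_p(T^{-2})$, which is harmless when $p\lesssim T^2$ but can exceed the $p^{-1}$ floor in the regime $p\gg T^2$ that the paper explicitly entertains (cf.\ Theorem \ref{thm4.2}). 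I would resolve this in one of two ways. Either I sharpen $\Hb\Hb^\top-\Ib_r$ directly from equation (\ref{equc2}): its leading contribution is $(pT)^{-1}\Bb^\top\Db^{-1}\Eb^\top\Fb=O_p\big((pT)^{-1/2}\big)$ by Lemma \ref{lema4}, plus a trace contribution of size $\text{tr}(\Db^{-1})/(pT)$ coming from equation (\ref{equc1}), whose squares are $\le p^{-1}$ and $\le\text{tr}(\Db^{-2})/(pT)$ respectively, so that this term also lands inside the target. Or, more cleanly, I exploit that $\hat\Cb=\hat\Fb\hat\Bb^\top$ is rotation-invariant and rewrite $\hat\Cb=\Pb_{\hat\Fb}\Xb\Db^{-1}$ with $\Pb_{\hat\Fb}=T^{-1}\hat\Fb\hat\Fb^\top$, giving the alternative split $\Pb_{\hat\Fb}\Fb\Bb^\top(\Db^{-1}-\Ib_p)+(\Pb_{\hat\Fb}-\Ib_T)\Fb\Bb^\top+\Pb_{\hat\Fb}\Eb\Db^{-1}$; here the middle term uses $(\Pb_{\hat\Fb}-\Ib_T)\Fb=(\Pb_{\hat\Fb}-\Ib_T)(\Fb-\hat\Fb\Hb^{-1})$ because $\Pb_{\hat\Fb}\hat\Fb=\hat\Fb$, reducing it to Lemma \ref{lema6}, while the noise term reduces to Lemma \ref{lema4}, and no standalone $T^{-2}$ ever appears. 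Either route delivers the stated rate, and summing the three bounds and noting that the loading part dominates under $p\gtrsim T$ completes the proof.
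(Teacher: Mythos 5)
Your proposal is correct and follows essentially the same route as the paper: insert the rotation $\Hb$, split $\hat\Cb-\Cb$ by the triangle inequality, and invoke Lemmas \ref{lema6} and \ref{lema10}, with the loading term dominating since $p\gtrsim T$. The one substantive difference is that you make the rotation-discrepancy term $\Fb(\Hb\Hb^\top-\Ib_r)\Bb^\top$ explicit and notice that the crude bound $\Hb^\top\Hb-\Ib_r=O_p(T^{-1})$ from Lemma \ref{lema7} only yields $O_p(T^{-2})$, which can exceed the claimed rate when $p\gg T^2$; the paper's two-term split $\hat\Bb\Hb^{-1}(\hat\Fb\Hb^\top-\Fb)^\top+(\hat\Bb\Hb^{-1}-\Bb)\Fb^\top$ silently contains the same piece (since $\hat\Fb\Hb^\top-\Fb=(\hat\Fb-\Fb\Hb)\Hb^\top+\Fb(\Hb\Hb^\top-\Ib_r)$) and simply asserts the conclusion. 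Your sharpening via equation (\ref{equc2}) — bounding the off-identity part of $T^{-1}\Fb^\top(\hat\Fb-\Fb\Hb)$ by $O_p\big((pT)^{-1/2}+\mathrm{tr}(\Db^{-1})/(pT)+\|\Db^{-1}\|_F/p\big)$, whose squares all sit inside the target using $\mathrm{tr}^2(\Db^{-1})\le p\,\mathrm{tr}(\Db^{-2})$ — closes this gap correctly, so your write-up is, if anything, more complete than the paper's.
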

\begin{proof}
	
	For common components,
	\[
	\begin{split}
	\frac{1}{pT}\|\hat\Cb-\Cb\|_F^2=&\frac{1}{pT}\|\hat\Bb\Hb^{-1}\Hb\hat\Fb^\top-\Bb\Fb^\top\|_F^2\\
	\le& \frac{1}{pT}\|\hat\Bb\Hb^{-1}(\hat\Fb\Hb^\top-\Fb)^\top\|_F^2+\frac{1}{pT}\|(\hat\Bb\Hb^{-1}-\Bb)\Fb^\top\|_F^2\\
	\le&O_p\bigg(\frac{1}{p}\|(\Db^{-1}-\Ib_p)\Bb\|_F^2+\frac{\text{tr}(\Db^{-2})}{pT}+\frac{1}{p}\bigg).
	\end{split}
	\]
	The last line comes from Lemmas \ref{lema6} and \ref{lema10}.
\end{proof}
\textbf{Theorem 4.1}. 	Under Assumptions A-E, there exists a sequence of invertible matrices $\Hb$ (dependent on $p,T$ and tuning parameters) such that $\Hb^\top\Hb\stackrel{p}{\rightarrow}\Ib_r$, and
\[
\begin{split}
\frac{1}{T}\|\hat\Fb-\Fb\Hb\|_F^2=&O_p\bigg(\frac{1}{p}+\frac{\text{tr}(\Db^{-2})}{pT^2}\bigg),\\
\frac{1}{p}\|\hat\Bb-\Bb\Hb \|_F^2=&O_p\bigg(\frac{1}{p}\|(\Db^{-1}-\Ib_p)\Bb\|_F^2+\frac{\text{tr}(\Db^{-2})}{pT}\bigg),\\
\frac{1}{pT}\|\hat\Cb-\Cb\|_F^2=&O_p\bigg(\frac{1}{p}\|(\Db^{-1}-\Ib_p)\Bb\|_F^2+\frac{\text{tr}(\Db^{-2})}{pT}+\frac{1}{p}\bigg),
\end{split}
\]
where $\hat\Bb=T^{-1}\Db^{-1}\Xb^\top\hat\Fb$, $\hat\Fb$ is $\sqrt{T}$ times the leading $r$ eigenvectors of $(pT)^{-1}\Xb\Db^{-1}\Xb^\top$ and $\hat\Cb=\hat\Fb\hat\Bb^\top$. $\Db=\Db_1:=\Ib_p+\alpha\mathcal{L}_n$ for Laplacian penalty, while $\Db=\Db_2:=\Ib_p+\alpha\Ub_1\Ub_1^\top$ for Projection penalty with $\Ub_1=(\bu_1,\ldots,\bu_{p-m})$, where $\bu_j$ is the $j$-th eigenvector of $\mathcal{L}_n$.
\begin{proof}
	The theorem follows directly Lemmas \ref{lema7}-\ref{lema11}.
\end{proof}

\textbf{Corollary 4.1}. 	Given $m$, Assumption E always holds with $\alpha=p/(T\|\tilde\Bb_1\|_F^2)$ for Projection penalty as long as Assumption C holds.
\begin{proof}
	On one hand, if $p^{-1}\|\tilde\Bb_1\|_F^2=o(1)$, for Projection penalty we will have
	\[
	{\bf S}=\frac{1}{p}\Bb^\top\Db_{2}^{-1}\Bb=\frac{1}{p}\Bb^\top\Bb-\frac{\alpha}{1+\alpha}\frac{1}{p}\tilde\Bb_1^\top\tilde\Bb_1\rightarrow\bSigma_B.
	\]
	 Hence, Assumption E holds with Assumption C. On the other hand, if $p^{-1}\|\tilde\Bb_1\|_F^2\gtrsim1$, we will have $\alpha=o(1)$ so still ${\bf S}\rightarrow\bSigma_B$.
\end{proof}

\textbf{Corollary 4.2}. 	Assumption E always holds with $\alpha=(T\max_{j}\{\tau_j\|\bb_j\|^2\})^{-1}$ for Laplacian penalty as long as Assumption C holds.
\begin{proof}
	On one hand, if $\max_{j}\{\tau_j\|\bb_j\|^2\}=o(1)$, for Laplacian penalty we have
	\[
	{\bf S}=\frac{1}{p}\Bb^\top\Db_{1}^{-1}\Bb=\frac{1}{p}\Bb^\top\Bb-\frac{1}{p}\Bb^\top(\Ib_p-\Db_1^{-1})\Bb,
	\]
	where
	\[
	\frac{1}{p}\text{tr}(\Bb^\top(\Ib_p-\Db_1^{-1})\Bb)=\frac{1}{p}\sum_{j=1}^{p}\frac{\alpha\tau_j\|\bb_j\|^2}{1+\alpha\tau_j}\le T^{-1}.
	\]
	Hence, ${\bf S\rightarrow\bSigma_B}$. On the other hand, if $\max_{j}\{\tau_j\|\bb_j\|^2\}\gtrsim 1$, we have $\alpha=o(1)$ and again ${\bf S\rightarrow \bSigma_B}$.
\end{proof}

\begin{lemma}\label{lema12}
	Under Assumptions B and D, for any $T$-dimensional vector $\bu$ such that $\|\bu\|=1$, $T\times T$ matrix $\Pb$ such that $\|\Pb\|\le O(1)$ and $p\times p$ symmetric matrix $\Qb$, we have
	\[
	\mathbb{E}\bigg\|\bu^\top\mathcal{E}\Qb\mathcal{E}^\top\Pb\Fb-\text{tr}(\Qb)\bu^\top\Pb\Fb\bigg\|^2=O_p\bigg(T\|\Qb\|_F^2\bigg).
	\]
	\end{lemma}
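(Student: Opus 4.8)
The plan is to exploit the fact that, since the entries of $\mathcal{E}$ are independent with mean zero and unit variance, $\mathbb{E}(\bepsilon_s^\top\Qb\bepsilon_t)=\text{tr}(\Qb)$ when $s=t$ and $0$ otherwise, so that $\mathbb{E}(\mathcal{E}\Qb\mathcal{E}^\top)=\text{tr}(\Qb)\Ib_T$ and the subtracted term $\text{tr}(\Qb)\bu^\top\Pb\Fb$ is precisely the mean contribution. This lemma is in fact the generalization of Lemma \ref{lema2} from a single row $\bepsilon_t^\top$ to an arbitrary unit combination $\bu^\top\mathcal{E}=\sum_t u_t\bepsilon_t^\top$, and the same order $T\|\Qb\|_F^2$ should persist. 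Using the identity $\mathcal{E}^\top\Pb\Fb=\sum_s\bepsilon_s\bp_s^\top\Fb$ already employed in Lemma \ref{lema2}, I would first write
\[
\bu^\top\mathcal{E}\Qb\mathcal{E}^\top\Pb\Fb=\sum_{s,t}u_t(\bepsilon_t^\top\Qb\bepsilon_s)\bp_s^\top\Fb,
\]
then split the double sum into its diagonal part $s=t$ and off-diagonal part $s\ne t$. Since $\mathbb{E}(\bepsilon_s^\top\Qb\bepsilon_s)=\text{tr}(\Qb)$ reproduces exactly the centering term, the target reduces to bounding the second moments of
\[
\text{(I)}=\sum_s u_s\big(\bepsilon_s^\top\Qb\bepsilon_s-\text{tr}(\Qb)\big)\bp_s^\top\Fb,\qquad
\text{(II)}=\sum_s\sum_{t\ne s}u_t(\bepsilon_t^\top\Qb\bepsilon_s)\bp_s^\top\Fb.
\]

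Term (I) is the easy one. The summands $Z_s:=\bepsilon_s^\top\Qb\bepsilon_s-\text{tr}(\Qb)$ are independent across $s$ (distinct rows of $\mathcal{E}$), have mean zero, and satisfy $\mathbb{E}(Z_s^2)\le(M+2)\|\Qb\|_F^2$ by Lemma \ref{lema1}. Factoring out $\Fb$ by its independence from $\mathcal{E}$ (Assumption D), the cross terms vanish and Assumption B gives $\mathbb{E}\|\bp_s^\top\Fb\|^2\le M\|\bp_s\|^2\le M\|\Pb\|^2$; with $\sum_s u_s^2=1$ this yields $\mathbb{E}\|\text{(I)}\|^2=O(\|\Qb\|_F^2)$, which is already dominated by the claimed bound.

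The main work lies in term (II), which I would attack through a direct second-moment expansion. After factoring out $\Fb$ by independence, $\mathbb{E}\|\text{(II)}\|^2$ becomes a sum over four row indices $s,t\ (t\ne s)$ and $s',t'\ (t'\ne s')$ of $u_tu_{t'}\,\mathbb{E}\big[(\bepsilon_t^\top\Qb\bepsilon_s)(\bepsilon_{t'}^\top\Qb\bepsilon_{s'})\big]\,\mathbb{E}(\bp_s^\top\Fb\Fb^\top\bp_{s'})$. The independence and mean-zero property of the entries force all four $\epsilon$-indices to pair up, and because $t\ne s$ and $t'\ne s'$ only two pairings survive: $\{t=t',\,s=s'\}$ and $\{t=s',\,s=t'\}$, the latter using the symmetry $Q_{ji}=Q_{ij}$. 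The first pairing contributes $\sum_s\sum_{t\ne s}u_t^2\|\Qb\|_F^2\,\mathbb{E}\|\bp_s^\top\Fb\|^2\lesssim\|\Qb\|_F^2\sum_s\|\bp_s\|^2=\|\Qb\|_F^2\|\Pb\|_F^2$, and it is here, through $\|\Pb\|_F^2\le T\|\Pb\|^2=O(T)$, that the factor $T$ enters. The second pairing collapses, via $\sum_s u_s\bp_s=\Pb^\top\bu$, to $\|\Qb\|_F^2\,\mathbb{E}\|\bu^\top\Pb\Fb\|^2=O(\|\Qb\|_F^2)$. Adding the two contributions gives $\mathbb{E}\|\text{(II)}\|^2=O(T\|\Qb\|_F^2)$, and combining with term (I) via $\|x+y\|^2\le2\|x\|^2+2\|y\|^2$ finishes the proof.

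The step I expect to be the main obstacle is precisely the combinatorics of the surviving pairings in term (II): one must verify that the within-pair contraction $s=t$ genuinely cannot occur, that no true fourth-moment term $\mathbb{E}(\epsilon_{ti}^4)$ is produced, and that the symmetry of $\Qb$ is what rescues the second pairing. These are exactly the features that make the bound depend on $\Qb$ only through $\|\Qb\|_F^2$ and require no spectral-norm control on $\Qb$.
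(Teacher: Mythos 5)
Your proposal is correct and follows essentially the same route as the paper's proof: the same split into the diagonal term (whose centered second moment is controlled by Lemma \ref{lema1} and $\mathbb{E}\|\bp_t^\top\Fb\|^2\le M\|\bp_t\|^2$) and the off-diagonal term, followed by the same fourth-moment pairing analysis in which only the two pairings $\{t=t',s=s'\}$ and $\{t=s',s=t'\}$ survive, the first producing the factor $T$ via $\|\Pb\|_F^2$ and the second being $O(\|\Qb\|_F^2)$ by the symmetry of $\Qb$. No substantive differences from the paper's argument.
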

\begin{proof}
	Firstly,
	\[
	\begin{split}
	\bu^\top\mathcal{E}\Qb\mathcal{E}^\top\Pb\Fb=&\sum_{t=1}^{T}u_t\bepsilon_t^\top\Qb\sum_{s=1}^{T}\bepsilon_s\bp_s^\top\Fb=\sum_{t=1}^{T}u_t\bepsilon_t^\top\Qb\bepsilon_t\bp_t^\top\Fb+\sum_{t=1}^{T}u_t\bepsilon_t^\top\Qb\sum_{s\ne t}\bepsilon_s\bp_s^\top\Fb\\
	=&\mathcal{\uppercase\expandafter{\romannumeral 1}}+\mathcal{\uppercase\expandafter{\romannumeral 2}}.
	\end{split}
	\]
	We have
	\[
	\begin{split}
	\mathbb{E}\bigg\|\mathcal{\uppercase\expandafter{\romannumeral 1}}-\text{tr}(\Qb)\bu^\top\Pb\Fb\bigg\|^2=&\mathbb{E}\bigg\|\sum_{t}u_t\bigg(\sum_{i,j}\epsilon_{ti}Q_{ij}\epsilon_{tj}-\text{tr}(\Qb)\bigg)\bp_t\Fb\bigg\|^2\\
	=&\sum_{t}u_t^2\mathbb{E}\bigg\|\sum_{i,j}\epsilon_{ti}Q_{ij}\epsilon_{tj}-\text{tr}(\Qb)\bigg\|^2\mathbb{E}\|\bp_t^\top\Fb\|^2\\
	\le &M(M+2)\|\Qb\|_F^2\|\Pb\|^2.
	\end{split}
	\]
	On the other hand,
	\[
\begin{split}
\mathbb{E}\|\mathcal{\uppercase\expandafter{\romannumeral 2}}\|^2=&\mathbb{E}\bigg\|\sum_{t}\sum_{s\ne t}\sum_{i,j}u_tQ_{ij}\epsilon_{ti}\epsilon_{sj}\bp_s^\top\Fb\bigg\|^2\\
=&\mathbb{E}\sum_{t_1}\sum_{t_2}\sum_{s_1\ne t_1}\sum_{s_2\ne t_2}\sum_{i_1,i_2,j_1,j_2}u_{t_1}u_{t_2}Q_{i_1,j_1}Q_{i_2,j_2}\epsilon_{t_1,i_1}\epsilon_{s_1,j_1}\epsilon_{t_2,i_2}\epsilon_{s_2,j_2}\bp_{s_1}^\top\Fb\Fb^\top\bp_{s_2}\\
=&\mathbb{E}\sum_{t_1=t_2}\sum_{s_1= s_2,s_1\ne t_1}\sum_{i_1=i_2,j_1=j_2}u_{t_1}^2Q_{i_1,j_1}^2\|\bp_{s_1}^\top\Fb\|^2\\
&+\mathbb{E}\sum_{t_1=s_2}\sum_{s_1= t_2,s_1\ne t_1}\sum_{i_1=j_2,i_2=j_1}u_{t_1}u_{t_2}Q_{i_1,j_1}^2\bp_{s_1}^\top\Fb\Fb^\top\bp_{s_2}\\
\le &O_p\bigg(T\|\Qb\|_F^2\bigg).
\end{split}
\]
Combining the results to conclude the lemma.
\end{proof}

\textbf{Theorem 4.2}.
	Denote the spectral decomposition ${\bf S}=\bGamma_S\bLambda_S\bGamma_S^\top$. 	When Assumptions A-E and F1 hold, we have for the estimated factor scores,
	\begin{enumerate}
		\item if $\|\Db^{-1}\|_F/T=o(1)$,
		\[
		\sqrt{p}(\hat\bF_t-\Hb^\top\bF_t)\stackrel{d}{\rightarrow}\mathcal{N}({\bf 0}, \bLambda_S^{-1}\bGamma_S^\top\Vb_t\bGamma_S\bLambda_S^{-1}),
		\]
		where $\Hb$ is the same as in Theorem 4.1 and $\Vb_t$ is defined in Assumption F1. It can be shown $	\Vb_t=\|\bp_{1,t}\|^2\lim_{p,T\rightarrow\infty}p^{-1}(\Bb^\top\Db^{-1}\Pb_2^\top\Pb_2\Db^{-1}\Bb)$.
		\item 	 Otherwise if $\|\Db^{-1}\|_F/T\ge O(1)$, $\hat\bF_t-\Hb^\top\bF_t=O_p(T^{-1}\text{tr}(\Db^{-1}))$.
	\end{enumerate}
\begin{proof}
	By equation (\ref{equc2}), we have for any $t\le T$,
	\[
	\hat\bF_t-\Hb\bF_t=\frac{1}{pT}\bLambda_r^{-1}\hat\Fb^\top(\Eb\Db^{-1}\Bb\bF_t+\Fb\Bb^\top\Db^{-1}\be_t+\Eb\Db^{-1}\be_t).
	\]
	Applying similar techniques as in Lemma \ref{lema6}, we have
	\[
	\|\hat\Fb^\top\Eb\Db^{-1}\Bb\bF_t\|^2\le O_p(pT), \quad \|\hat\Fb^\top\Fb\Bb^\top\Db^{-1}\be_t\|^2\le O_p(pT^2),
	\]
	while by Lemma \ref{lema12}, we have
	\[
	\|\hat\Fb^\top\Eb\Db^{-1}\be_t\|^2=O_p\bigg(T\|\Db^{-1}\|_F^2+\text{tr}^2(\Db^{-1})\bigg)\le O_p\bigg(p\|\Db^{-1}\|_F^2\bigg).
	\]
	Hence,
	\[
	\sqrt{p}(\hat\bF_t-\Hb^\top\bF_t)=\bLambda_r^{-1}\frac{\hat\Fb^\top\Fb}{T}\frac{1}{\sqrt{p}}\Bb^\top\Db^{-1}\be_t+O_p\bigg(\frac{\|\Db^{-1}\|_F}{T}\bigg)+o_p(1).
	\]
	By previous theorem, we have known that $\bLambda_r\stackrel{p}{\rightarrow}\bLambda_S$ and $T^{-1}\hat\Fb^\top\Fb\stackrel{p}{\rightarrow}\bGamma_S^\top$, so combined with Assumption F1, when $\|\Db^{-1}\|_F/T=o(1)$,
	\[
	\sqrt{p}(\hat\bF_t-\Hb^\top\bF_t)\stackrel{d}{\rightarrow}\mathcal{N}({\bf 0}, \bLambda_S^{-1}\bGamma_S^\top\Vb_t\bGamma_S\bLambda_S^{-1}),
	\]
	where $\Vb_t=\lim_{p,T\rightarrow \infty}p^{-1}\text{cov}(\Bb^\top\Db^{-1}\be_t)$. Otherwise, $\hat\bF_t-\Hb^\top\bF_t=O_p(\text{tr}(\Db^{-1})/(pT))$. Further by Lemma \ref{lema1},
	\[
	\Vb_t=\|\bp_{1,t}\|^2\lim_{p,T\rightarrow\infty}p^{-1}(\Bb^\top\Db^{-1}\Pb_2^\top\Pb_2\Db^{-1}\Bb),
	\]
	and the theorem holds.
\end{proof}

\textbf{Theorem 4.3}.
	Under Assumptions A-E and F2, use the same notations as in Theorem 4.2, then  for the estimated loadings we have
	\[
	\frac{\sqrt{T}(\hat\bb_j-\Hb^\top\Bb^\top\bd_j^{-1})}{\|\bd_j^{-1}\|}\stackrel{d}{\rightarrow}\mathcal{N}({\bf 0},\bGamma_S^\top\Wb_j\bGamma_S),
	\]
	where $\Wb_j$ is defined in Assumption F2 and
	\[
	\Wb_j=\lim_{p,T\rightarrow \infty}\bigg\|\frac{\Pb_2\bd_j^{-1}}{\|\bd_j^{-1}\|}\bigg\|^2\frac{1}{T}\mathbb{E}(\Fb^\top\Pb_1\Pb_1^\top\Fb).
	\]
\begin{proof}
By the proof of Lemma \ref{lema10}, for the loadings, we have
\[
\sqrt{T}(\hat\bb_j-\Hb^\top\Bb^\top\bd_j^{-1})=\frac{1}{\sqrt{T}}\Hb^\top\Fb^\top\Eb\bd_j^{-1}+\frac{1}{\sqrt{T}}(\hat\Fb-\Fb\Hb)^\top\Fb\Bb^\top\bd_j^{-1}+\frac{1}{\sqrt{T}}(\hat\Fb-\Fb\Hb)^\top\Eb\bd_j^{-1}.
\]
By Lemma \ref{lema7},
\[
\frac{1}{\sqrt{T}}(\hat\Fb-\Fb\Hb)^\top\Fb\Bb^\top\bd_j^{-1}=O_p\bigg(T^{-1/2}\bigg),
\]
while a similar proof as Lemma \ref{lema8} shows that
\[
\frac{1}{\sqrt{T}}(\hat\Fb-\Fb\Hb)^\top\Eb\bd_j^{-1}=o_p(1).
\]
Therefore,
	\[
	\sqrt{T}(\hat\bb_j-\Hb^\top\Bb^\top\bd_j^{-1})=\frac{1}{\sqrt{T}}\Hb^\top\Fb^\top\Eb\bd_j^{-1}+o_p(1),
	\]
	where $\bd_j^{-1}$ is the $j$-th row of the matrix $\Db^{-1}$.
	By Assumption F2 and $\Hb\stackrel{p}{\rightarrow}\bGamma_S$,
	\[
\frac{\sqrt{T}(\hat\bb_j-\Hb^\top\Bb^\top\bd_j^{-1})}{\|\bd_j^{-1}\|}\stackrel{d}{\rightarrow}\mathcal{N}({\bf 0},\bGamma_S^\top\Wb_j\bGamma_S),
	\]
	where $\Wb_j=\lim_{p,T\rightarrow\infty}T^{-1}\text{cov}(\Fb^\top\Eb\bd_j^{-1}/\|\bd_j^{-1}\|)$. Further, it's easy that
	\[
	\Wb_j=\lim_{p,T\rightarrow \infty}\bigg\|\frac{\Pb_2\bd_j}{\|\bd_j^{-1}\|}\bigg\|^2\frac{1}{T}\mathbb{E}(\Fb^\top\Pb_1\Pb_1^\top\Fb),
	\]
	since $\Fb$ and $\Eb$ are independent.
\end{proof}

\section{An example to compare two penalties}\label{secb}

As claimed in section 4 of the main paper, generally it's hard to determine the optimal tuning parameter $\alpha$ for Laplacian penalty, which makes it challenging to directly compare the proposed two penalties. In this section a special case is designed so that the two penalties are comparable. It is also adapted into the simulation part as simulated case 3.

Suppose the $p$ variables are separated into $q$ groups, with size $p\theta_k$ for group $k$, where  $1>\theta_1\ge \cdots \ge \theta_q>0$ are positive constants and define $\bar \theta=q^{-1}\sum\theta_k$. We construct the network by linking any pairs in the same group, so that after rearranging, the adjacency matrix is blocked diagonal with each block fully-connected. Under the group structure, the eigenvalues of $\mathcal{L}_n$ are
\[
(\underbrace{\frac{\theta_1}{\bar\theta},\ldots,\frac{\theta_1}{\bar\theta}}_{p\theta_1},\ldots,\underbrace{\frac{\theta_q}{\bar\theta},\ldots,\frac{\theta_q}{\bar\theta}}_{p\theta_q},\underbrace{0,\ldots,0}_{q}),
\]
so exactly $q$ eigenvalues are 0. We further assume $\tau_j\|\tilde \bb_{j}\|^2=z$ for all $j\le p-q$ and $\Pb_1$ and $\Pb_2$ are identical matrices so that $e_{tj}$ are independently and identically distributed. Based on the results in section 4, the estimated errors for loadings are usually larger than the factor scores, so we compare the proposed two penalties only by the approximate MSE in equation (4.3), which is
\[
\text{MSE}\approx \frac{1}{p}\|\Bb^\top(\Ib_p-\Db^{-1})\|_F^2+\frac{1}{pT}\|\Db^{-1}\|_F^2,
\]
where $\Db=\Db_1$ for Laplacian penalty and $\Db=\Db_2$ for Projection penalty.

Firstly, for the Laplacian penalty, we have
\[
\text{MSE}_{lap}\approx\frac{1}{p}\sum_{j=1}^{p}\frac{\alpha^2\tau_j^2\|\tilde \bb_{j}\|^2}{(1+\alpha\tau_j)^2}+\frac{1}{pT}\sum_{j=1}^{p}\frac{1}{(1+\alpha\tau_j)^2}.
\]
By equation (4.2) and $\tau_j\|\bb_j\|^2=z$ for all $j\le p$,  we should take $\alpha=(Tz)^{-1}$ to minimize the error. Define $w_j=\tau_j^{-1}$ for $j\le p-q$, $w_j=0$ for $p-q<j\le p$ and $\bar w=p^{-1}\sum_{j=1}^p w_j$, so that
\[
\text{MSE}_{lap}\approx\frac{1}{pT}\sum_{j=1}^{p}\frac{1}{1+\alpha\tau_j}=\frac{1}{pT}\sum_{j=1}^{p}\frac{w_j}{w_j+(Tz)^{-1}}+\frac{q}{pT}.
\]
For the Projection penalty, it's natural to take $m=q$ in this case, and $\alpha=p/(T\|\tilde\Bb_1\|_F^2)$ by the calculations in section 4. Therefore,
\[
\text{MSE}_{proj}\approx\frac{1}{T(1+\alpha)}+\frac{\alpha^2}{(1+\alpha)^2}\frac{q}{pT}=\frac{1}{T}\frac{\bar w}{\bar w+(Tz)^{-1}}+\frac{\alpha^2}{(1+\alpha)^2}\frac{q}{pT}.
\]
In many cases, we may assume $q/(pT)$ is relatively negligible so the first terms dominate for both penalties. The function $f(x)=x/(x+c)$ is concave when $x>0$ for any constant $c>0$, so under this case the Laplacian penalty should outperform the Projection penalty. However, when the network information is correct, $(Tz)^{-1}$ should be a small number, and the difference between two penalties will not be significant. This is also the reason why the Projection penalty performs nearly as well as the Laplacian penalty under the simulation case 3.

\newpage
\section{Supplementary simulation results}\label{secc}
The attached Table \ref{tab1} contains the detailed simulation results of the estimation errors for common components with $T=50$. Table \ref{tab2} contains the estimation errors for common components with $T=20$. Table \ref{tab3} provides simulated results for number of common factors with $T=20$.

\begin{table*}[hbpt]
	\begin{center}
		\addtolength{\tabcolsep}{7.8pt}
		\caption{Estimation errors of common components, $T=50$.}\label{tab1}
		\renewcommand{\arraystretch}{1}
		\scalebox{1}{ 		\begin{tabular*}{16cm}{cc|ccc|ccc}
				\toprule[1pt]
				\multirow{2}*{Case}&\multirow{2}*{p}&\multicolumn{3}{c|}{Average Error}&\multicolumn{3}{c}{Standard Deviation}\\\cline{3-8}
				&&Bai&Lap&Proj&Bai&Lap&Proj
				\\\hline
\multirow{7}*{Case1}&100&0.1458&0.1444&0.1440&0.0174&0.0173&0.0172
\\
&150&0.1296&0.1285&0.1283&0.0138&0.0138&0.0137
\\
&200&0.1207&0.1196&0.1195&0.0122&0.0122&0.0121
\\
&250&0.1162&0.1153&0.1153&0.0112&0.0112&0.0112
\\
&300&0.1141&0.1129&0.1130&0.0111&0.0111&0.0111
\\
&350&0.1118&0.1110&0.1109&0.0111&0.0112&0.0111
\\
&400&0.1094&0.1084&0.1083&0.0108&0.0108&0.0107\\
\hline
\multirow{7}*{Case2}&100&0.1438&0.1397&0.1360&0.0169&0.0167&0.0157
\\
&150&0.1294&0.1253&0.1139&0.0144&0.0136&0.0122
\\
&200&0.1203&0.1165&0.1029&0.0120&0.0114&0.0097
\\
&250&0.1158&0.1122&0.0971&0.0110&0.0105&0.0089
\\
&300&0.1136&0.1103&0.0900&0.0114&0.0111&0.0088
\\
&350&0.1115&0.1085&0.0875&0.0114&0.011&0.0086
\\
&400&0.1092&0.1061&0.0847&0.0105&0.0101&0.0080
\\\hline
\multirow{7}*{Case3}&100&0.1470&0.1318&0.1354&0.0173&0.0153&0.0156
\\
&150&0.1304&0.1113&0.1130&0.01400&0.0118&0.0121
\\
&200&0.1224&0.1003&0.1025&0.0128&0.0110&0.0110
\\
&250&0.1173&0.0932&0.0964&0.0116&0.0093&0.0096\\
&300&0.1146&0.0884&0.0900&0.0109&0.0090&0.0090\\
&350&0.1107&0.0840&0.0856&0.0108&0.0084&0.0084\\
&400&0.1110&0.0835&0.0851&0.0097&0.0080&0.0081
\\\hline
\multirow{7}*{Case4}&100&0.1441&0.1226&0.1278&0.0172&0.0150&0.0152
\\
&150&0.1305&0.1051&0.1122&0.0144&0.0121&0.0126
\\
&200&0.1226&0.0908&0.0994&0.0127&0.0102&0.0106\\
&250&0.1179&0.0821&0.0901&0.0121&0.0089&0.0095
\\
&300&0.1144&0.0783&0.0877&0.0104&0.0080&0.0085
\\
&350&0.1123&0.0728&0.0811&0.0107&0.0079&0.0085
\\
&400&0.1102&0.0701&0.0789&0.0103&0.0077&0.0081\\
				\bottomrule[1.2pt]		
		\end{tabular*}}
		
	\end{center}
\end{table*}
\newpage
\begin{table*}[hbpt]
	\begin{center}
		\addtolength{\tabcolsep}{7.8pt}
		\caption{Estimation errors of common components, $T=20$.}\label{tab2}
		\renewcommand{\arraystretch}{1}
		\scalebox{1}{ 		\begin{tabular*}{16cm}{cc|ccc|ccc}
				\toprule[1pt]
				\multirow{2}*{Case}&\multirow{2}*{p}&\multicolumn{3}{c|}{Average Error}&\multicolumn{3}{c}{Standard Deviation}\\\cline{3-8}
				&&Bai&Lap&Proj&Bai&Lap&Proj
				\\\hline
	\multirow{7}*{Case1}&100&0.3136&0.2936&0.2937&0.0543&0.0496&0.0498
\\
	&150&0.2918&0.2730&0.2731&0.0456&0.0413&0.0414
\\
	&200&0.2847&0.2665&0.2664&0.0411&0.0375&0.0375\\
	&250&0.2781&0.2603&0.2603&0.0390&0.0353&0.0353
\\
	&300&0.2800&0.2615&0.2616&0.0381&0.0343&0.0343
\\
	&350&0.2731&0.2555&0.2555&0.0369&0.0338&0.0338\\
	&400&0.2691&0.2518&0.2518&0.0356&0.0325&0.0325
\\\hline
		\multirow{7}*{Case2}&100&0.3087&0.2876&0.2727&0.0505&0.0457&0.0445
\\
	&150&0.2888&0.2683&0.2192&0.0418&0.0377&0.0322
\\
	&200&0.2821&0.2619&0.2053&0.0414&0.0375&0.0320
\\
	&250&0.2762&0.2565&0.1939&0.0379&0.0344&0.0293\\
	&300&0.2757&0.2558&0.1748&0.0370&0.0337&0.0269
\\
	&350&0.2704&0.2512&0.1689&0.0363&0.0331&0.0276
\\
	&400&0.2674&0.2483&0.1643&0.0346&0.0317&0.0257
\\\hline
		\multirow{7}*{Case3}&100&0.3099&0.2444&0.2556&0.0507&0.0416&0.0429
\\
	&150&0.2920&0.2103&0.2159&0.0434&0.0357&0.0359
\\
	&200&0.2804&0.1858&0.1969&0.0394&0.0313&0.0319
\\
	&250&0.2774&0.1756&0.1886&0.0384&0.0301&0.0303
\\
	&300&0.2756&0.1675&0.1717&0.0364&0.0289&0.0293
\\
	&350&0.2694&0.1601&0.1642&0.0378&0.0315&0.0315
\\
	&400&0.2724&0.1577&0.1628&0.0351&0.0271&0.0274
\\\hline
		\multirow{7}*{Case4}&100&0.3065&0.2398&0.2474&0.0480&0.0418&0.0410
\\
	&150&0.2901&0.1945&0.2124&0.0416&0.0349&0.0343\\
	&200&0.2803&0.1722&0.1900&0.0420&0.0354&0.0336
\\
	&250&0.2742&0.1600&0.1784&0.0348&0.0292&0.0282
\\
	&300&0.2731&0.1510&0.1701&0.0365&0.0308&0.0294
\\
	&350&0.2687&0.1438&0.1629&0.0332&0.0288&0.0287\\
	&400&0.2668&0.1368&0.1575&0.0336&0.0292&0.0281\\
				\bottomrule[1.2pt]		
		\end{tabular*}}
		
	\end{center}
\end{table*}

\newpage
\begin{table*}[hbpt]
	\begin{center}
		\addtolength{\tabcolsep}{15pt}
\caption{Specify the number of common factors, $T=20,r=3$.}\label{tab3}
\renewcommand{\arraystretch}{1}
\scalebox{1}{ 		\begin{tabular*}{16cm}{lllll}
		\toprule[1.2pt]
Case&$p$&ER&Lap&Proj\\\hline
\multirow{7}*{Case1}&100&2.642(311$|$114)&2.668(307$|$116)&2.668(308$|$116)
\\
&150&2.540(315$|$85)&2.556(315$|$88)&2.568(316$|$89)
\\
&200&2.394(331$|$65)&2.410(329$|$66)&2.402(330$|$66)
\\
&250&2.300(329$|$55)&2.322(326$|$57)&2.302(329$|$55)
\\
&300&2.328(316$|$49)&2.350(314$|$51)&2.344(314$|$50)
\\
&350&2.214(336$|$43)&2.224(335$|$44)&2.214(336$|$43)
\\
&400&2.390(320$|$60)&2.376(319$|$59)&2.380(321$|$59)
\\\hline
\multirow{7}*{Case2}&100&2.560(306$|$86)&2.556(305$|$83)&2.658(286$|$89)
\\
&150&2.356(311$|$62)&2.362(311$|$63)&2.518(257$|$51)
\\
&200&2.408(316$|$61)&2.374(315$|$56)&2.572(249$|$46)
\\
&250&2.362(314$|$52)&2.366(311$|$51)&2.526(225$|$26)
\\
&300&2.402(302$|$52)&2.394(301$|$50)&2.702(173$|$26)
\\
&350&2.276(327$|$45)&2.282(324$|$43)&2.612(198$|$26)
\\
&400&2.350(305$|$44)&2.346(303$|$42)&2.702(163$|$20)
\\\hline
\multirow{7}*{Case3}&100&2.646(294$|$108)&2.864(206$|$89)&2.842(220$|$96)
\\
&150&2.402(312$|$67)&2.662(198$|$36)&2.660(199$|$38)
\\
&200&2.380(322$|$67)&2.752(181$|$41)&2.734(185$|$41)
\\
&250&2.294(322$|$53)&2.676(172$|$25)&2.632(181$|$23)
\\
&300&2.268(337$|$46)&2.726(152$|$21)&2.666(174$|$22)
\\
&350&2.238(327$|$43)&2.702(141$|$17)&2.640(159$|$16)
\\
&400&2.240(337$|$43)&2.708(157$|$18)&2.708(158$|$19)
\\\hline
\multirow{7}*{Case4}&100&2.626(299$|$93)&2.748(269$|$95)&2.784(264$|$101)\\
&150&2.336(320$|$64)&2.600(244$|$64)&2.524(258$|$58)
\\
&200&2.208(329$|$47)&2.544(220$|$32)&2.446(246$|$29)
\\
&250&2.376(308$|$51)&2.692(190$|$32)&2.570(227$|$32)
\\
&300&2.374(321$|$53)&2.836(155$|$37)&2.682(190$|$26)
\\
&350&2.346(323$|$49)&2.764(144$|$24)&2.640(173$|$17)
\\
&400&2.274(316$|$37)&2.758(131$|$17)&2.634(168$|$13)\\
				\bottomrule[1.2pt]		
		\end{tabular*}}
		
	\end{center}
\end{table*}

\newpage
\section{Estimated loadings for S\&P100 dataset}\label{secd}
Here are the estimated loadings for main sectors of S\&P100 dataset, except two sectors presented in Section 7.2 of the main paper and two sectors only containing one company each..
\begin{figure}[hbpt]
	\begin{subfigure}{.3\textwidth}
	\centering
	% include first image
	\includegraphics[width=5cm,height=3cm]{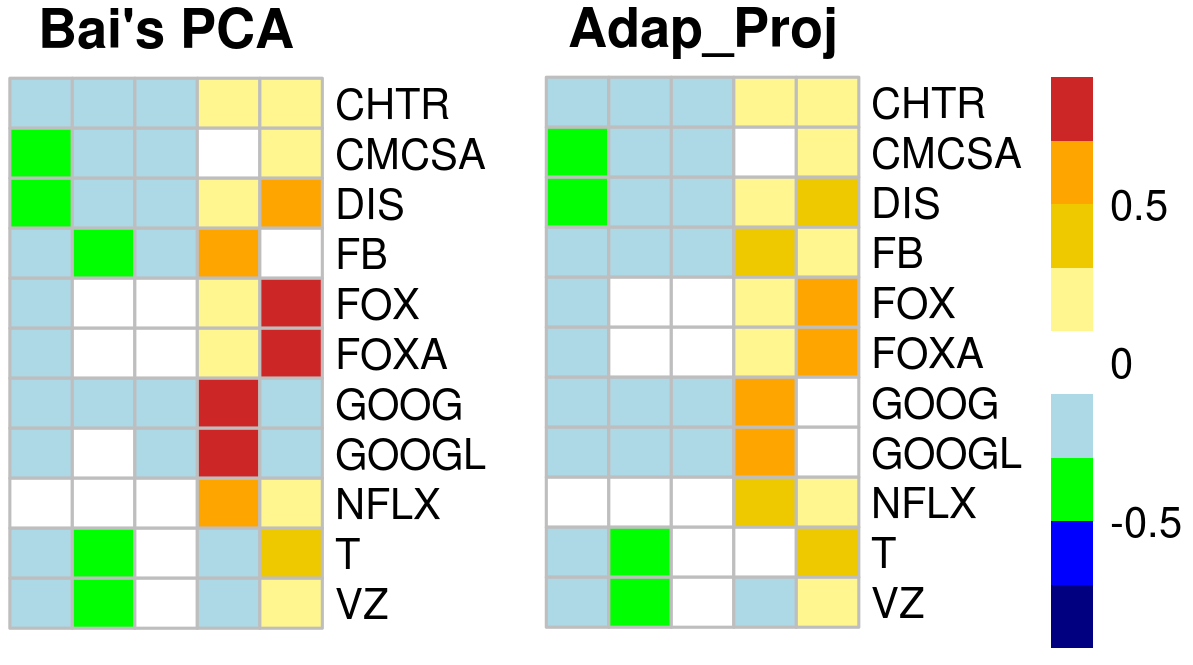}
	\caption{CommunicationServices}
\end{subfigure}
\begin{subfigure}{.3\textwidth}
	\centering
	% include second image
	\includegraphics[width=5cm,height=3cm]{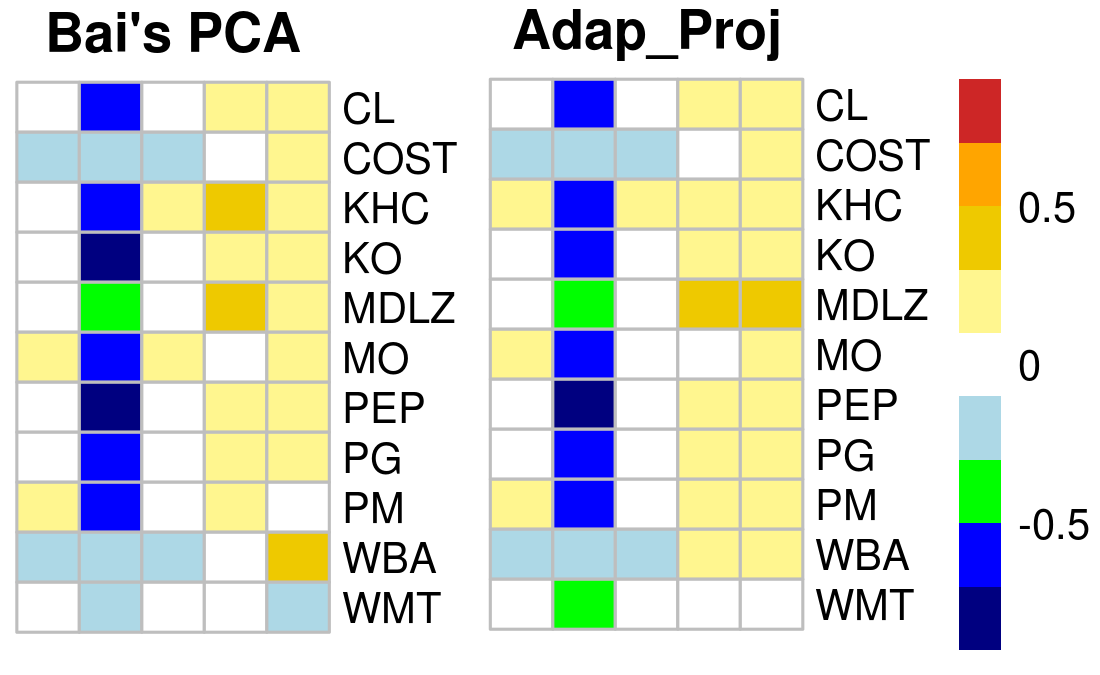}
	\caption{ConsumerStaples}
\end{subfigure}
	\begin{subfigure}{.3\textwidth}
		\centering
		% include first image
		\includegraphics[width=5cm,height=3cm]{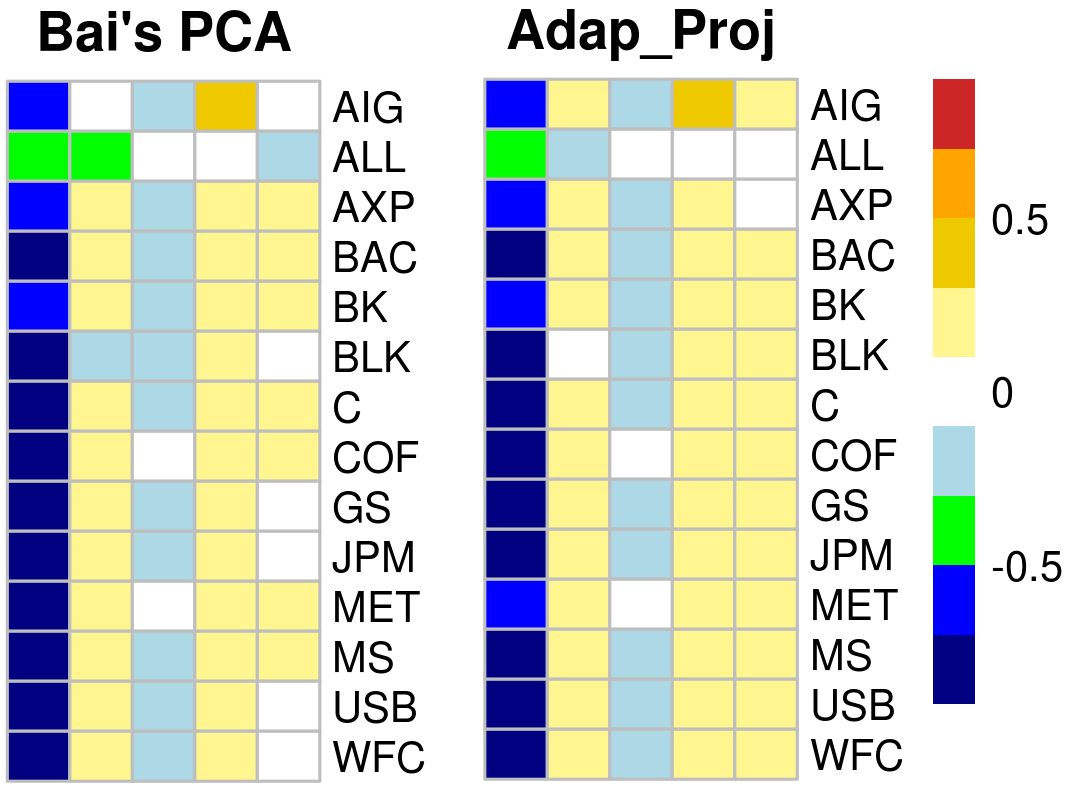}
		\caption{Financials}
	\end{subfigure}

	\begin{subfigure}{.3\textwidth}
		\centering
		% include second image
		\includegraphics[width=5cm,height=3cm]{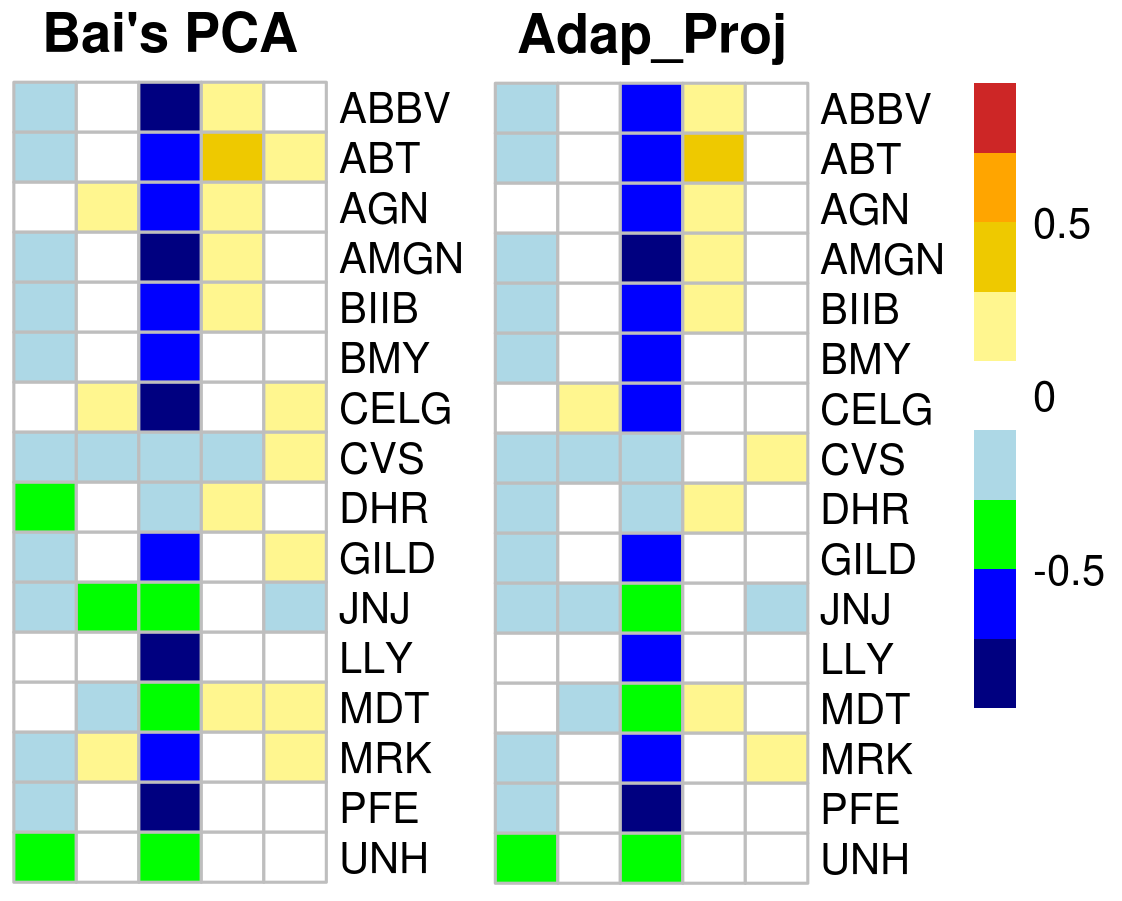}
		\caption{HealthCare}
	\end{subfigure}
	\begin{subfigure}{.3\textwidth}
		\centering
		% include first image
		\includegraphics[width=5cm,height=3cm]{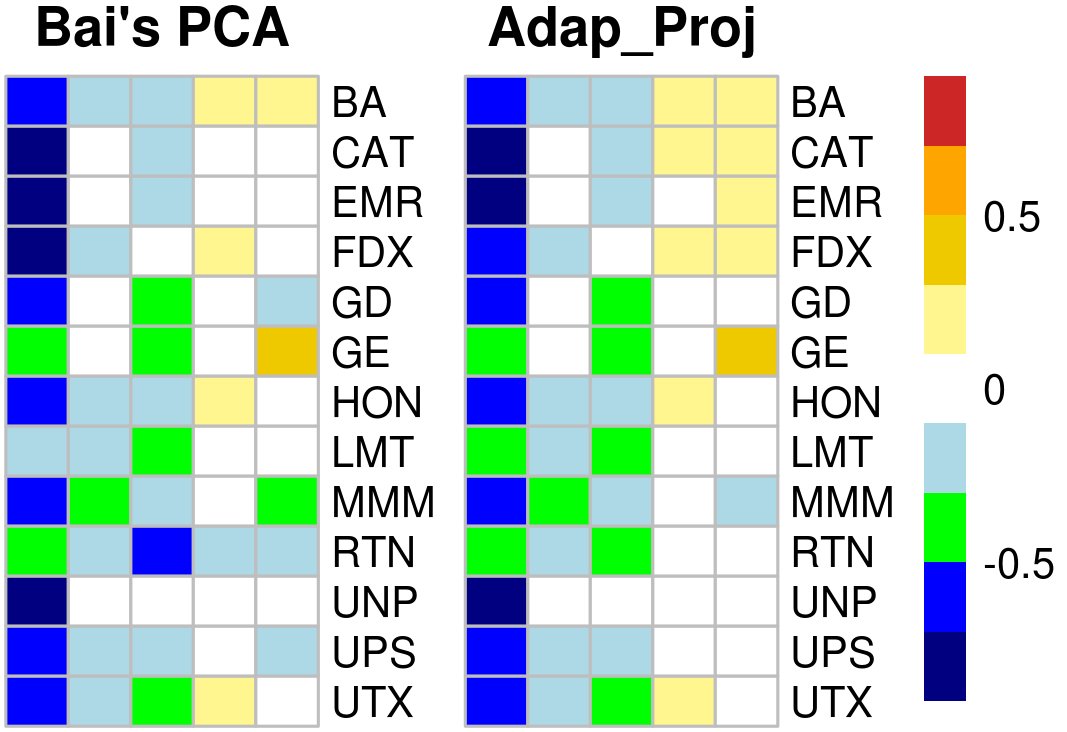}
		\caption{Industrials}
	\end{subfigure}
	\begin{subfigure}{.3\textwidth}
		\centering
		% include second image
		\includegraphics[width=5cm,height=3cm]{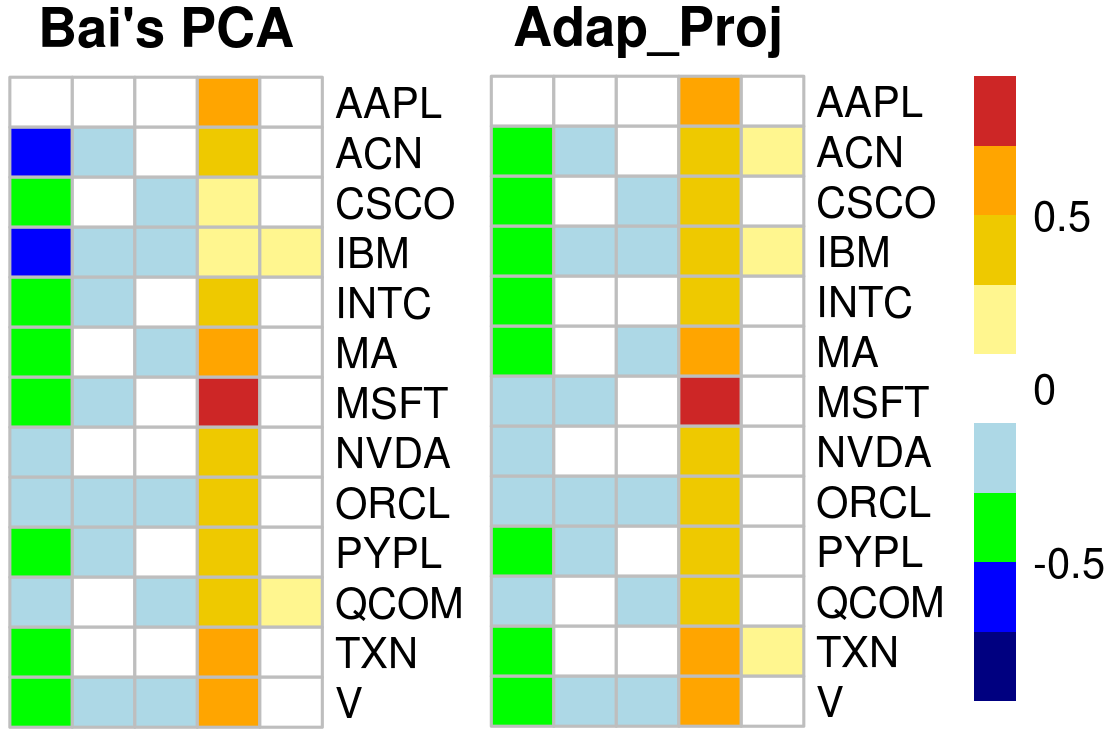}
		\caption{InformationTechnology}
	\end{subfigure}

	\begin{subfigure}{.3\textwidth}
	\centering
	% include second image
	\includegraphics[width=5cm,height=1.5cm]{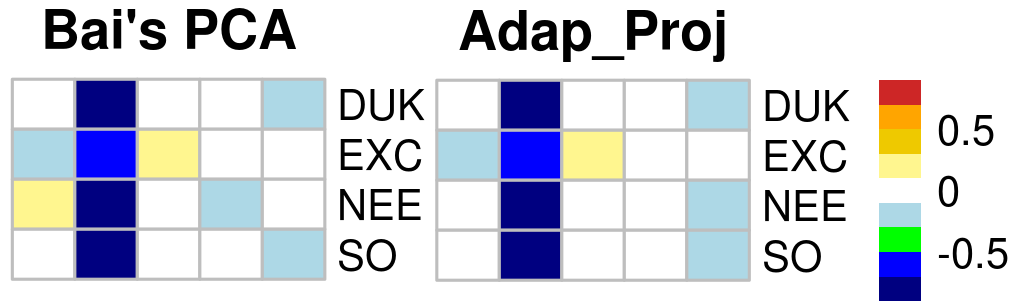}
	\caption{Utilities}
\end{subfigure}
\end{figure}
	\end{appendices}

\end{document}